\documentclass[reqno]{amsart}
\usepackage[utf8]{inputenc}
\usepackage{pmboxdraw}
\usepackage{amsmath}
\usepackage{amssymb}
\usepackage{xcolor}
\usepackage{graphicx}
\usepackage{float}
\usepackage[hidelinks]{hyperref}

\setlength{\textwidth}{\paperwidth}
\addtolength{\textwidth}{-2in}
\calclayout

\usepackage{cite}
\newtheorem{theorem}{Theorem}
\newtheorem{remark}[theorem]{Remark}
\newtheorem{lemma}[theorem]{Lemma}
\newtheorem{definition}{Definition}
\newtheorem{proposition}[theorem]{Proposition}
\newtheorem{conjecture}[theorem]{Conjecture}
\newtheorem{corollary}[theorem]{Corollary}

\newcommand{\C}{\mathbb{C}}
\newcommand{\N}{\mathbb{N}}

\newcommand{\R}{\mathbb{R}}

\newcommand{\ler}[1]{\left( #1 \right)}
\newcommand{\lesq}[1]{\left[ #1 \right]}
\newcommand{\lers}[1]{\left\{ #1 \right\}}
\newcommand{\hohc}{\cH \otimes \cH^*}

\newcommand{\abs}[1]{\left| #1 \right|}

\newcommand{\norm}[1]{\left|\left|#1\right|\right|}

\newcommand{\be}{\begin{equation}}
\newcommand{\ee}{\end{equation}}
\newcommand{\ba}{\begin{array}}
\newcommand{\ea}{\end{array}}

\newcommand{\fel}{\frac{1}{2}}
\newcommand{\cH}{\mathcal{H}}
\newcommand{\cB}{\mathcal{B}}

\newcommand{\cP}{\mathcal{P}}

\newcommand{\cS}{\mathcal{S}}
\newcommand{\cC}{\mathcal{C}}
\newcommand{\cO}{\mathcal{O}}

\newcommand{\cL}{\mathcal{L}}
\newcommand{\bb}{\mathbf{b}}

\newcommand{\tr}{\mathrm{tr}}
\newcommand{\dd}{\mathrm{d}}

\newcommand{\cA}{\mathcal{A}}

\newcommand{\bC}{\mathbf{C}}
\newcommand{\bP}{\mathbb{P}}

\newcommand{\E}{\mathbb{E}}
\newcommand{\law}{\mathrm{law}}

\newcommand\lh{\cL(\cH)}

\newcommand{\sh}{\cS\ler{\cH}}
\newcommand{\fco}{\mathbf{FCO}}
\newcommand{\fcoh}{\mathbf{FCO}(\cH)}
\newcommand{\ind}[1]{\mathbf{1}_{\left\{#1\right\}}}

\title[$p$-Wasserstein distances and divergences by quantum channels]{Wasserstein distances and divergences of order $p$ by quantum channels}

\author[Gergely Bunth]{Gergely Bunth}
\address{Gergely Bunth, HUN-REN Alfr\'ed R\'enyi Institute of Mathematics\\ Re\'altanoda u. 13-15.\\Budapest H-1053\\ Hungary\\ and Department of Analysis and Operations Research, Institute of Mathematics, Budapest University of Technology and Economics\\ M\H{u}egyetem rkp. 3. \\ Budapest H-1111 \\ Hungary}
\email{bunth.gergely@renyi.hu}

\author[J\'ozsef Pitrik]{J\'ozsef Pitrik}
\address{J\'ozsef Pitrik, HUN-REN Wigner Research Centre for Physics\\ Budapest H-1525, Hungary\\ and HUN-REN Alfr\'ed R\'enyi Institute of Mathematics\\ Re\'altanoda u. 13-15.\\ Budapest H-1053\\ Hungary\\ and Department of Analysis and Operations Research, Institute of Mathematics \\Budapest University of Technology and Economics\\ M\H{u}egyetem rkp. 3. \\ Budapest H-1111\\ Hungary}
\email{pitrik.jozsef@renyi.hu}

\author[Tam\'as Titkos]{Tam\'as Titkos}
\address{Tam\'as Titkos, Corvinus University of Budapest\\ Department of Mathematics\\ Fővám tér 13-15.\\ Budapest H-1093\\Hungary\\ 
and \\ HUN-REN Alfr\'ed R\'enyi Institute of Mathematics\\ Re\'altanoda u. 13-15.\\ Budapest H-1053\\Hungary}
\email{titkos.tamas@renyi.hu}

\author[D\'aniel Virosztek]{D\'aniel Virosztek}
\address{D\'aniel Virosztek, HUN-REN Alfr\'ed R\'enyi Institute of Mathematics\\ Re\'altanoda u. 13-15.\\Budapest H-1053\\ Hungary}
\email{virosztek.daniel@renyi.hu}

\date{}

\subjclass[2020]{Primary: 49Q22; 81P16. Secondary: 81Q10.}

\keywords{quantum optimal transport, metric property}

\thanks{Bunth was supported by the Momentum Program of the Hungarian Academy of Sciences (grant no. LP2021-15/2021); Pitrik was supported by the “Frontline” Research Excellence Programme of the Hungarian National Research, Development and Innovation Office - NKFIH (grant no. KKP133827) and by the Momentum Program of the Hungarian Academy of Sciences (grant no. LP2021-15/2021); Titkos was supported by the Hungarian National Research, Development and Innovation Office - NKFIH (grant no. K115383) and by the Momentum Program of the Hungarian Academy of Sciences (grant no. LP2021-15/2021).; Virosztek was supported by the Momentum program of the Hungarian Academy of Sciences under grant agreement no. LP2021-15/2021, by the Hungarian National Research, Development and Innovation Office (NKFIH) under grant agreement no. Excellence\_151232, and partially supported by the ERC Synergy Grant No. 810115.}

\begin{document}

\begin{abstract}
We introduce a non-quadratic generalization of the quantum mechanical optimal transport problem introduced in [De Palma and Trevisan, Ann. Henri Poincar\'e, {\bf 22} (2021), 3199-3234] where quantum channels realize the transport. Relying on this general machinery, we introduce $p$-Wasserstein distances and divergences and study their fundamental geometric properties. Finally, we prove triangle inequality for quadratic Wasserstein divergences under the sole assumption that an arbitrary one of the states involved is pure, which is a generalization of our previous result in this direction.
\end{abstract}

\maketitle

%================================================================================
%================================================================================
\section{Introduction}

%================================================================================
\subsection{Motivation and main result}

Techniques using the theory of optimal transportation and the nice properties of Wasserstein metric have led to significant advancements in several areas of mathematics, such as probability theory \cite{bgl,Butkovsky}, the study of (stochastic) partial differential equations \cite{Hairer2,Hairer3}, variational analysis \cite{FigalliMaggi1,FigalliMaggi2}, and the geometry of metric spaces \cite{LottVillani,Sturm4,Sturm5,Sturm6}.

Beyond the theoretical advancements in pure mathematics, the nice geometric features of transport-related metrics have given new momentum for research in several different disciplines, ranging from economics \cite{Galichon} and finance to biology \cite{SCHIEBINGER}, not to mention the countless application in applied sciences like biomedical image analysis \cite{bia1,bia2,images1}, data classification \cite{dia1,imageprocessing1}, and machine learning \cite{MC1, SK1,PC,m2,MachineLearning3}.

It is a general phenomenon that concepts and notions that are well-established in the classical commutative world do not have a unique \lq\lq best" extension in the non-commutative world but there are many possible ways of generalization with pros and cons. This is the case concerning optimal transportation as well.
Non-commutative optimal transport is a flourishing research field these days with several essentially different promising approaches such as that of Biane and Voiculescu (free probability) \cite{BianeVoiculescu}, Carlen, Maas, Datta, Rouzé, and Wirth  (dynamical theory) \cite{CarlenMaas-2,CarlenMaas-3,CarlenMaas-4, DattaRouze1, DattaRouze2, Wirth-dual}, Caglioti, Golse, Mouhot, and Paul (quantum many-body problems) \cite{CagliotiGolsePaul, CagliotiGolsePaul-towardsqot, GolseMouhotPaul, GolsePaul-Schrodinger,GolsePaul-OTapproach, GolseTPaul-pseudometrics,GolsePaul-wavepackets}, De Palma and Trevisan (quantum channels) \cite{DPT-AHP,DPT-lecture-notes}, \.Zyczkowski and his collaborators \cite{FriedmanEcksteinColeZyczkowski-MK,ZyczkowskiSlomczynski1,ZyczkowskiSlomczynski2,BistronEcksteinZyczkowski}, and Duvenhage \cite{Duvenhage1, Duvenhage-ext-quantum-det-bal,Duvenhage-quad-Wass-vNA}. Separable quantum Wasserstein distances have also been introduced recently \cite{TothPitrik}, and a novel definition of quantum transport plans has been proposed by Beatty and Stilck Fran\c{c}a in \cite{beatty-franca} where they introduced the corresponding $p$-Wasserstein distances as well. For a comprehensive overview of a substantial part of the above mentioned current approaches to non-commutative optimal transport, the reader is advised to consult the book \cite{OTQS-book}.
\par 
It is an interesting phenomenon that, according to many of the approaches including the one we follow in the present paper \cite{DPT-AHP,DPT-lecture-notes}, the quantum Wasserstein distance of states \emph{is not a genuine metric,} e.g., states may have positive distance from themselves. As a response to this phenomenon, De Palma and Trevisan introduced quadratic quantum Wasserstein divergences \cite{DPT-lecture-notes}, which are appropriately modified versions of quadratic quantum Wasserstein distances, to eliminate self-distances --- see \eqref{eq:quad-qw-div-def} for a precise definition. They conjectured that the divergences defined this way are genuine metrics on quantum state spaces \cite{DPT-lecture-notes}, and this conjecture has been recently justified under certain additional assumptions \cite{BPTV-metric-24}.
\par
In this paper, we propose a non-quadratic generalization of the quantum optimal transport problem relying on quantum channels which was introduced in \cite{DPT-AHP}. In particular, we introduce and study $p$-Wasserstein distances and divergences that arise as (linear combinations of) optimal transport costs with specific cost operators mimicking the classical cost $c(x,y)=\norm{x-y}_p^p.$ Finally, we prove triangle inequality for quadratic Wasserstein divergences under the sole assumption that an arbitrary one of the states involved is pure. This is a generalization of our previous result in this direction \cite{BPTV-metric-24}. It is important to note that ten month after the publication of the first version of this paper, Melchior Wirth proved the triangle inequality for all quadratic quantum Wasserstein divergences in full generality, that is, without any restriction concerning the states involved \cite{wirth-triangle}.

%================================================================================
\subsection{Basic notions, notation}
An influential work of De Palma and Trevisan introduced a quantum mechanical counterpart of the classical optimal transport problem with quadratic cost, and also quadratic Wasserstein distances induced by optimal solutions of these transport problems \cite{DPT-AHP}. A key idea of this quantum optimal transport concept is that the transport between quantum states is realized by quantum channels \cite{DPT-AHP,DPT-lecture-notes}.
\par
In the following section, we propose generalizations of the optimal transport problem considered in \cite{DPT-AHP} using cost operators defined by non-quadratic classical cost functions. Let us recall first the classical transportation problem on Euclidean spaces. If $\mu$ and $\nu$ are Borel probability measures on $\R^K$ and $c: \R^K \times \R^K \rightarrow \R$ is a non-negative lower semicontinuous cost function, then the optimal transport problem is to 
\be \label{eq:cl-ot-prob}
\text{minimize } \pi \mapsto \iint_{\R^K \times \R^K} c(x,y) \dd \pi(x,y)
\ee
where $\pi$ runs over all possible couplings of $\mu$ and $\nu.$ We say that $\pi \in \mathrm{Prob}(\R^K \times \R^K)$ is a coupling of $\mu$ and $\nu$ (in notation: $\pi \in \cC(\mu,\nu)$) if the marginals of $\pi$ are $\mu$ and $\nu,$ that is,  
$\iint_{\R^K \times \R^K} f(x) \dd \pi(x,y)=\int_{\R^K} f(x) \dd \mu(x)$ and $\iint_{\R^K \times \R^K} g(y) \dd \pi(x,y)=\int_{\R^K} g(y) \dd \nu(y)$ for all $f,g \in C_b(\R^K).$ A consequence of the tightness (that is, sequential compactness in the weak topology) of $\cC(\mu, \nu)$ and the lower-semicontinuity of $c$ is that there is a coupling (in other words: transport plan) $\pi_0 \in \cC(\mu, \nu)$ that minimizes \eqref{eq:cl-ot-prob}, see, e.g., \cite[Thm. 4.1.]{Villani2}.
\par
Let us recall now those elements of the mathematical formalism of quantum mechanics that we will use throughout this paper. Let $\cH$ be a separable complex Hilbert space. In the sequel, we denote by $\lh^{sa}$ the set of self-adjoint but not necessarily bounded operators on $\cH$, and $\cS(\cH)$ stands for the set of states, that is, the set of positive trace-class operators on $\cH$ with unit trace. The space of all bounded operators on $\cH$ is denoted by $\cB(\cH),$ and we recall that the collection of trace-class operators on $\cH$ is denoted by $\mathcal{T}_1(\cH)$ and defined by $\mathcal{T}_1(\cH)= \lers{X \in \cB(\cH) \, \middle| \, \tr_{\cH}[\sqrt{X^*X}] < \infty}.$ Similarly, $\mathcal{T}_2(\cH)$ stands for the set of Hilbert-Schmidt operators defined by $\mathcal{T}_2(\cH)= \lers{X \in \cB(\cH) \, \middle| \, \tr_{\cH}[X^*X] < \infty}.$
\par
According to the concept introduced in \cite{DPT-AHP}, transport is realized by quantum channels, and the set of all couplings of the quantum states $\rho, \omega \in \cS\ler{\cH}$  --- which we denote by $\cC\ler{\rho, \omega}$ --- is given by
\be \label{eq:q-coup-def}
\cC\ler{\rho, \omega}=\lers{\Pi \in \cS\ler{\cH \otimes \cH^*} \, \middle| \, \tr_{\cH^*} [\Pi]=\omega, \,  \tr_{\cH} [\Pi]=\rho^T},
\ee
where the \emph{transpose} $A^T$ of a linear operator $A$ acting on $\cH$ is a linear operator on the dual space $\cH^*$ defined by the identity $(A^T \eta) (\varphi) \equiv \eta (A \varphi)$ where $\eta \in \cH^*$ and $\varphi \in \mathrm{dom}(A).$
That is, a coupling of $\rho$ and $\omega$ is a state $\Pi$ on $\hohc$ such that 
$$ 
\tr_{\hohc}[\ler{A\otimes I_{\cH^*}} \Pi]=\tr_{\cH} [\omega A]
$$
and
\be \label{eq:part-trace-def}
\tr_{\hohc}\lesq{\ler{I_{\cH} \otimes B^{T}} \Pi}=\tr_{\cH^*} [\rho^T B^T]=\tr_{\cH} [\rho B]
\ee
for all bounded $A, B \in \cL(\cH)^{sa}.$ For further details on the relation between quantum channels sending $\rho$ to $\omega$ and couplings described in \eqref{eq:q-coup-def}, the reader is advised to consult \cite{DPT-AHP} and \cite{DPT-lecture-notes}. Note the clear analogy of the above definition of quantum couplings with the definition of classical couplings that can be phrased as follows: $\pi \in \cP(X^2)$ is a coupling of $\mu \in \cP(X)$ and $\nu \in \cP(X)$ if 
$\iint_{X^2} f(x) \dd \pi (x,y)=\int_{X} f(x) \dd \mu(x)$ and $\iint_{X^2} g(y) \dd \pi (x,y)=\int_{X} g(y) \dd \nu(y)$
for every continuous and bounded function $f,g$ defined on $X.$
We remark that $\cC\ler{\rho,\omega}$ is never empty, because the trivial coupling $\omega\otimes\rho^T$ belongs to $\cC\ler{\rho,\omega}$.
\par
Note that the definition of couplings \eqref{eq:q-coup-def} proposed by De Palma and Trevisan \cite{DPT-AHP} is different from the definition proposed by Golse, Mouhot, Paul \cite{GolseMouhotPaul} in the sense that it involves the dual Hilbert space $\cH^*$ and hence the transpose operation. For a clarification of this difference, see Remark 1 in \cite{DPT-AHP}. For more detail on the latter concept of quantum couplings, the interested reader should consult \cite{CagliotiGolsePaul,CagliotiGolsePaul-towardsqot,GolseMouhotPaul,GolsePaul-Schrodinger,GolsePaul-Nbody, GolsePaul-OTapproach, GolsePaul-meanfieldlimit}. 
\par
%================================================================================
%================================================================================
\section{Quantum mechanical analogues of the general non-quadratic optimal transport problem on Euclidean spaces}

\par
We propose quantum mechanical analogues of \eqref{eq:cl-ot-prob} which are non-quadratic generalizations of the concept introduced in \cite{DPT-AHP}. The starting point of this proposition was an enlightening discussion with De Palma and Trevisan in September 2024 \cite{DPT-pers-cortona}. The probabilistic interpretation of the classical transport problem will prove useful when introducing the quantum counterpart, hence we recall it now: the task is to minimize $\E[c(X,Y)]$ over all pairs of $\R^K$-valued random vectors $X$ and $Y$ such that $\law(X)=\mu$ and $\law(Y)=\nu.$

%================================================================================
\subsection{A quantum mechanical optimal transport problem generalizing \cite{DPT-AHP}} \label{ss:iid-approach}

Let $\cH$ be a separable Hilbert space describing the quantum system we are interested in, and let $\cA=\lers{A_1, \dots, A_K}$ be a finite collection of observables (that is, self-adjoint but possibly unbounded operators) on $\cH.$ Let $E_k$ be the spectral measure of $A_k,$ that is, $A_k=\int_{\R} x \dd E_k(x).$ Let $\Pi$ be a state of the composite system, that is, $\Pi \in \cS(\cH \otimes \cH^*)$ and suppose that $\tr_{\cH^*} [\Pi]=\omega$ and $\tr_{\cH}[\Pi]=\rho^T.$ We consider $K$ copies of the state $\Pi \in \cS(\hohc),$ and for every $k \in \lers{1, \dots, K}$ we measure $A_k$ on the $\cH$ subsystem and $A_k^T$ on the $\cH^*$ subsystem of the $k$th copy. The measurements on different copies are independent from each other.

Let $Y_k$ and $X_k$ denote the real random variables we obtain by measuring $A_k$ on the $\cH$ part and $A_k^T$ on the $\cH^*$ of the $k$th copy, respectively. By Born's rule on quantum measurement, the (infinitesimal) probabilities describing the possible outcomes of the measurements are given for every $k$ by
\be \label{eq:Born-rule-1-0}
\dd \mathbb{P}_{(\Pi)}^{(\cA)}(X_k=x_k, Y_k=y_k)
=\tr_{\hohc}\lesq{\Pi \ler{\dd E_k(y_k) \otimes \dd E_k^T(x_k)}}.
\ee
By the independence of the measurements on different copies, the joint law $ \mathbb{P}_{(\Pi)}^{(\cA)}$ of the $\R^K$-valued random vectors $X=(X_1, \dots, X_K)$ and $Y=(Y_1, \dots Y_K)$ is given by
$$
\dd \mathbb{P}_{(\Pi)}^{(\cA)}(x_1, \dots, x_K, y_1, \dots, y_K)
=\dd \mathbb{P}_{(\Pi)}^{(\cA)}(X_1=x_1, \dots, X_K=x_K, Y_1=y_1, \dots, Y_K=y_K)
$$
$$
= \prod_{k=1}^K \tr_{\cH \otimes \cH^*}\lesq{\Pi \ler{\dd E_k(y_k) \otimes \dd E_k^T (x_k)}}
$$
\be \label{eq:P-Pi}
=\tr_{\ler{\cH \otimes \cH^*}^{\otimes K}}\lesq{\Pi^{\otimes K} \ler{\dd E_1(y_1) \otimes \dd E_1^T (x_1) \otimes \dots \otimes \dd E_K(y_K) \otimes \dd E_K^T (x_K)}}.
\ee
In analogy with \eqref{eq:cl-ot-prob}, we define the following loss function:
\be \label{eq:quantum-loss-0}
\cS(\cH \otimes \cH^*) \ni \Pi \mapsto \E_{(\Pi)}^{(\cA)}\lesq{c(X,Y)}=\iint_{\R^K \times \R^K} c(x_1, \dots, x_K, y_1, \dots, y_K) \dd \mathbb{P}_{(\Pi)}^{(\cA)}(x_1, \dots, x_K, y_1, \dots, y_K).
\ee
Note that $c$ is a non-negative Borel measurable function on $\R^{2K},$ and $\mathbb{P}_{(\Pi)}^{(\cA)}$ is a genuine (non-negative scalar-valued) probability measure on $\R^{2K},$ and hence the integral on the right-hand-side of \eqref{eq:quantum-loss-0} is well-defined and takes values in $[0,+\infty].$
In view of \eqref{eq:P-Pi}, the loss function $\E_{(\Pi)}^{(\cA)}\lesq{c(X,Y)}$ is equal to
$$
\iint_{\R^K \times \R^K} c(x_1, \dots, x_K, y_1, \dots, y_K)  \tr_{\ler{\cH \otimes \cH^*}^{\otimes K}}\lesq{\Pi^{\otimes K} \ler{\dd E_1(y_1) \otimes \dd E_1^T (x_1) \otimes \dots \otimes \dd E_K(y_K) \otimes \dd E_K^T (x_K)}}.
$$
In transport theory, a usual ``finite energy condition" ensuring that $\iint_{\R^K \times \R^K} c(x,y) \dd \pi(x,y)$ is finite for every coupling of $\mu$ and $\nu$ is that $c(x_1, \dots, x_K,y_1, \dots, y_K) \leq c_X(x_1, \dots ,x_K)+c_Y(y_1, \dots, y_K)$ for some $c_X \in L^1(\R^K, \mu)$ and $c_Y \in L^1(\R^K, \nu)$ --- see, e.g., \cite[Remark 4.5]{Villani2}. Accordingly, here we impose the following finite energy condition. Let $\mu_k$ and $\nu_k$ be the laws of $A_k$ in the states $\rho$ and $\omega$ respectively, that is, $\dd \mu_k(x_k)=\tr_{\cH}[\rho \, \dd E_k(x_k)]$ and $\dd \nu_k(y_k)=\tr_{\cH}[\omega \, \dd E_k(y_k)]$ for every $k \in \{1, \dots,K\},$ and let $c(x,y)$ satisfy $0\leq c(x,y) \leq c_X(x_1, \dots, x_K)+c_Y(y_1, \dots, y_K)$ for some $c_X \in L^1(\R^K, \mu_1 \otimes  \dots \otimes \mu_K)$ and $c_Y \in L^1(\R^K, \nu_1 \otimes \dots \otimes  \nu_K).$ Under this assumption, $\E_{(\Pi)}^{(\cA)}[c(X,Y)]$ is finite for any $\pi \in \cC(\rho, \omega).$ Indeed, in this case
$$
\E_{(\Pi)}^{(\cA)}\lesq{c(X,Y)}
=\iint_{\R^K \times \R^K} c(x_1, \dots, x_K, y_1, \dots, y_K) \prod_{k=1}^K \tr_{\cH \otimes \cH^*}\lesq{\Pi \ler{\dd E_k(y_k) \otimes \dd E_k^T (x_k)}}
$$
$$
\leq \iint_{\R^K \times \R^K} \ler{c_X(x_1, \dots ,x_K)+c_Y(y_1, \dots ,y_K)} \prod_{k=1}^K \tr_{\cH \otimes \cH^*}\lesq{\Pi \ler{\dd E_k(y_k) \otimes \dd E_k^T (x_k)}}
$$
$$
=\int_{\R^K} c_X(x_1, \dots ,x_K) \prod_{k=1}^K \tr_{\cH}[\rho \, \dd E_k(x_k)]
+\int_{\R^K} c_Y(y_1, \dots ,y_K) \prod_{k=1}^K \tr_{\cH}[\omega \, \dd E_k(y_k)]
$$
$$
=\int_{\R^K} c_X \dd (\mu_1 \otimes \dots \otimes \mu_K) + \int_{\R^K} c_Y \dd (\nu_1 \otimes \dots \otimes \nu_K)< \infty.
$$
The above computation encourages us to define the positive and possibly unbounded quantum cost operator $C_{c}^{(\cA)}$ by
\be \label{eq:Q-cost-op-def}
C_{c}^{(\cA)}:=\iint_{\R^K \times \R^K} c(x_1, \dots, x_K, y_1, \dots, y_K) \dd E_1(y_1) \otimes \dd E_1^T (x_1) \otimes \dots \otimes \dd E_K(y_K) \otimes \dd E_K^T (x_K)
\ee
and propose the following quantum optimal transport problem:
\be \label{eq:q-ot-prob-prop}
\text{ minimize } \Pi \mapsto \tr_{\ler{\hohc}^{\otimes K}}\lesq{\Pi^{\otimes K} C_{c}^{(\cA)}}
\ee
where $\Pi$ runs over the set of all couplings of $\rho$ and $\omega$ which are states on $\cH.$ As the cost operator $C_c^{(\cA)}$ acting on its domain contained in $\ler{\hohc}^{\otimes K}$ is unbounded if so is $c$ on $\R^{2K},$ we emphasize that by the formal expression $\tr_{\ler{\hohc}^{\otimes K}}\lesq{\Pi^{\otimes K} C_{c}^{(\cA)}}$ we mean $\lim_{n\to \infty} \tr_{\ler{\hohc}^{\otimes K}}\lesq{\Pi^{\otimes K} \chi_n\ler{C_{c}^{(\cA)}}},$ where $\chi_n: [0, \infty) \rightarrow [0,n]$ is the cut function defined by $\chi_n(x)=\min{\{x,n\}}.$ Note that this limit exists in $[0,+\infty]$ and coincides with $\E_{(\Pi)}^{(\cA)}[c(X,Y)]=\iint_{\R^K \times \R^K}c(x,y) \dd \mathbb{P}_{(\Pi)}^{(\cA)}(x,y).$

\begin{proposition}[Existence of optimal plans] \label{prop:ex-opt-plan}
Let $c: \R^K \times \R^K \rightarrow \R$ be a non-negative and lower semi-continuous function, let $\cA$ be a finite collection of observables on $\cH,$ and $\rho, \omega \in \sh$ given marginals. Then there exists an optimal solution $\Pi_0 \in \cC(\rho, \omega)$ of the optimization problem \eqref{eq:q-ot-prob-prop}.
\end{proposition}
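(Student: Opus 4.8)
The plan is to run the classical \emph{compactness-plus-lower-semicontinuity} argument in the present quantum setting. Equip the trace-class operators $\pot_1(\hohc)$ with the weak-$*$ topology coming from the duality with the compact operators $\cK(\hohc)$; since $\cH$, hence $\hohc$, is separable, this topology is metrizable on norm-bounded sets, and by Banach--Alaoglu the closed unit ball of $\pot_1(\hohc)$ is weak-$*$ compact. As $\cC(\rho,\omega)\subseteq\cS\ler{\hohc}$ sits inside that ball, it suffices to show that $\cC(\rho,\omega)$ is weak-$*$ closed and that the functional $F(\Pi):=\tr_{\ler{\hohc}^{\otimes K}}\lesq{\Pi^{\otimes K}C_c^{(\cA)}}$ occurring in \eqref{eq:q-ot-prob-prop} is weak-$*$ lower semicontinuous on $\cC(\rho,\omega)$: a minimizing sequence $\Pi_n$ then has a weak-$*$ convergent subsequence whose limit $\Pi_0$ lies in $\cC(\rho,\omega)$ and satisfies $F(\Pi_0)\le\liminf_n F(\Pi_n)=\inf_{\cC(\rho,\omega)}F$. (If this infimum is $+\infty$ the assertion is vacuous, so we may assume it finite.)

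To prove that $\cC(\rho,\omega)$ is weak-$*$ closed, take $\Pi_n\in\cC(\rho,\omega)$ with $\Pi_n\to\Pi$ weak-$*$; positivity of $\Pi$ is immediate. To transfer the marginal constraints to the limit I would invoke a tightness estimate: given a compact $A\in\cB(\cH)$ and $\varepsilon>0$, choose a finite-rank projection $Q$ on $\cH^*$ with $\tr_{\cH^*}\lesq{\rho^T(I_{\cH^*}-Q)}<\varepsilon$; then $A\otimes Q$ is compact, whereas a Cauchy--Schwarz estimate for the trace combined with $\tr_{\cH}\lesq{\Pi_n}=\rho^T$ gives $\abs{\tr_{\hohc}\lesq{\Pi_n(A\otimes(I_{\cH^*}-Q))}}\le\norm{A}\sqrt{\varepsilon}$ uniformly in $n$. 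Letting $n\to\infty$ and then $\varepsilon\to0$ shows $\tr_{\hohc}\lesq{\Pi(A\otimes I_{\cH^*})}=\tr_{\cH}\lesq{\omega A}$ for every compact $A$, i.e. $\tr_{\cH^*}\lesq{\Pi}=\omega$; exchanging the roles of the two tensor factors gives $\tr_{\cH}\lesq{\Pi}=\rho^T$, whence in particular $\tr\Pi=\tr\omega=1$ and $\Pi\in\cC(\rho,\omega)$.

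For the lower semicontinuity of $F$, note that $F(\Pi)=\lim_n\tr_{\ler{\hohc}^{\otimes K}}\lesq{\Pi^{\otimes K}\chi_n(C_c^{(\cA)})}=\sup_n\tr_{\ler{\hohc}^{\otimes K}}\lesq{\Pi^{\otimes K}\chi_n(C_c^{(\cA)})}$, where limit and supremum agree because $\chi_n\uparrow\id$ and $C_c^{(\cA)}\ge0$, so that $\chi_n(C_c^{(\cA)})$ is an increasing sequence of bounded positive operators. It therefore suffices to show that, for each fixed $n$, the map $\Pi\mapsto\tr_{\ler{\hohc}^{\otimes K}}\lesq{\Pi^{\otimes K}\chi_n(C_c^{(\cA)})}$ is weak-$*$ continuous on $\cC(\rho,\omega)$. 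The delicate point is that $\chi_n(C_c^{(\cA)})$ is bounded but in general not compact, so weak-$*$ convergence alone is insufficient; I would upgrade it first. If $\Pi_n\to\Pi$ weak-$*$ inside $\cC(\rho,\omega)$, then $\Pi_n\to\Pi$ in the weak operator topology with $\tr\Pi_n=1=\tr\Pi$, so by the quantum analogue of Scheff\'e's lemma $\norm{\Pi_n-\Pi}_1\to0$; hence $\norm{\Pi_n^{\otimes K}-\Pi^{\otimes K}}_1\le K\norm{\Pi_n-\Pi}_1\to0$ by a telescoping argument using $\norm{\Pi_n}_1=\norm{\Pi}_1=1$ and multiplicativity of $\norm{\cdot}_1$ under tensor products, and consequently $\tr_{\ler{\hohc}^{\otimes K}}\lesq{\Pi_n^{\otimes K}B}\to\tr_{\ler{\hohc}^{\otimes K}}\lesq{\Pi^{\otimes K}B}$ for every bounded $B$. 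Thus each of these maps is weak-$*$ continuous on $\cC(\rho,\omega)$, and $F$, being a supremum of them, is weak-$*$ lower semicontinuous there.

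The main obstacle is precisely this last upgrade: since the truncated cost operators are only bounded and not compact, weak-$*$ lower semicontinuity of $F$ is not automatic and rests on promoting weak-$*$ to trace-norm convergence on the constraint set via the Scheff\'e-type lemma, which is available exactly because all couplings share the unit trace. (Note that lower semicontinuity of $c$ plays no role in the above; Borel measurability and non-negativity of $c$ already suffice for $C_c^{(\cA)}$ and $F$ to be well defined and for the argument to go through.)
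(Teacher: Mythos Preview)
Your proof is correct and follows the same compactness-plus-lower-semicontinuity template as the paper, but the execution differs in two respects worth recording. First, the paper simply invokes the trace-norm compactness of $\cC(\rho,\omega)$ from Proposition~8 of \cite{DPT-AHP} to extract a trace-norm convergent subsequence, whereas you re-derive this: Banach--Alaoglu gives weak-$*$ compactness, your tightness estimate shows weak-$*$ closure of $\cC(\rho,\omega)$, and the Scheff\'e-type lemma then upgrades weak-$*$ convergence of states to trace-norm convergence. Your route is more self-contained but arrives at the same place. Second, for lower semicontinuity the paper pushes forward to the classical probabilities $\mathbb{P}_{(\Pi)}^{(\cA)}$ on $\R^{2K}$, notes that trace-norm convergence of $\Pi_n^{\otimes K}$ gives set-wise convergence of these measures, and then uses the lower semicontinuity of $c$ to approximate it from below by continuous bounded functions $c_r$; you instead stay at the operator level and write $F(\Pi)=\sup_n\tr\bigl[\Pi^{\otimes K}\chi_n(C_c^{(\cA)})\bigr]$ as an increasing supremum of trace-norm continuous functionals. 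Your closing observation is correct and is a genuine gain over the paper's formulation: since trace-norm convergence already tests against all bounded operators (equivalently, set-wise convergence of measures tests against all bounded Borel functions), the lower semicontinuity hypothesis on $c$ is not actually needed for the existence of minimizers --- Borel measurability and non-negativity suffice. The paper's own argument could make the same economy by replacing the continuous approximants $c_r$ with the Borel truncations $\min\{c,r\}$.
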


\begin{proof}
Let $\lers{\Pi_n}_{n \in \N} \subset \cC(\rho, \omega)$ be a minimizing sequence for \eqref{eq:q-ot-prob-prop}, that is, 
$$
\lim_{n \to \infty}\tr_{\ler{\hohc}^{\otimes K}}\lesq{\Pi_n^{\otimes K} C_{c}^{(\cA)}}
=\inf\lers{\tr_{\ler{\hohc}^{\otimes K}}\lesq{\Pi^{\otimes K} C_{c}^{(\cA)}} \, \middle| \, \Pi \in \cC(\rho, \omega)}.
$$
Then, according to the compactness result concerning quantum transport plans \cite[Proposition 8]{DPT-AHP}, there is a subsequence $\lers{\Pi_{n_j}}_{j \in \N}$ and an element $\widetilde{\Pi} \in \cC(\rho, \omega)$ such that $\Pi_{n_j} \to \widetilde{\Pi}$ in trace norm. Consequently, $\mathbb{P}_{\ler{\Pi_{n_j}}}^{(\cA)}$ converges to $\mathbb{P}_{\ler{\widetilde{\Pi}}}^{(\cA)}$ set-wise. Indeed, for any Borel set $B \subset \R^{2K}$ the convergence
$$
\mathbb{P}_{\ler{\Pi_{n_j}}}^{(\cA)}(B)=\tr_{(\hohc)^{\otimes K}} \lesq{\ler{\Pi_{n_j}}^{\otimes K} \ler{E_1 \otimes E_1^T \otimes \dots \otimes E_K \otimes E_K^T} (B)} \rightarrow
$$
$$
\rightarrow
\tr_{(\hohc)^{\otimes K}} \lesq{\widetilde{\Pi}^{\otimes K} \ler{E_1 \otimes E_1^T \otimes \dots \otimes E_K \otimes E_K^T} (B)}=\mathbb{P}_{\ler{\widetilde{\Pi}}}^{(\cA)}(B)
$$
follows from the fact that $\tr_{(\hohc)^{\otimes K}}\lesq{\abs{\ler{\Pi_{n_j}}^{\otimes K}-\widetilde{\Pi}^{\otimes K}}}$ tends to $0$ as $j$ tends to infinity. The last step will be the proof of the lower semi-continuity of the cost functional
$$
\gamma \mapsto \iint_{\R^K \times \R^k} c(x,y) \dd \gamma(x,y)
$$
on the set of Borel probabilities on $\R^K \times \R^K$ with respect to the set-wise convergence. This is a standard argument (see, e.g., \cite[Lemma 4.3]{Villani2}) which is based on the lower semi-continuity of $(x,y) \mapsto c(x,y)$ itself. As $c$ is non-negative and lower semi-continuous, it is the pointwise limit of a monotone increasing family of non-negative, continuous and bounded functions $\lers{c_r}_{r \in \N}$ on $\R^K.$ By Beppo Levi's monotone convergence theorem, 
$$
\iint_{\R^K \times \R^K} c(x,y) \dd \bP_{\ler{\widetilde{\Pi}}}^{(\cA)}(x,y)
= \lim_{r \to \infty } \iint_{\R^K \times \R^K} c_r(x,y) \dd \bP_{\ler{\widetilde{\Pi}}}^{(\cA)}(x,y)
$$
$$
=  \lim_{r \to \infty } \ler{ \lim_{j \to \infty}
\iint_{\R^K \times \R^K} c_r(x,y) \dd \bP_{\ler{\Pi_{n_j}}}^{(\cA)}(x,y)} \leq \liminf_{j \to \infty} \iint_{\R^K \times \R^K} c(x,y)  \dd \bP_{\ler{\Pi_{n_j}}}^{(\cA)}(x,y). 
$$
Here we used that the set-wise convergence of probability measures implies the convergence with respect to continuous and bounded test functions.
\end{proof}

A particular emphasis should be given to the case when $c$ is the power of order $p$ of the $l_q$ norm distance on $\R^K,$ that is, 
\be \label{eq:c-l-q-p-norm-def}
c(x_1, \dots, x_K, y_1, \dots, y_K)=
\ler{\sum_{k=1}^K \abs{x_k-y_k}^q}^{\frac{p}{q}},
\ee
where $p>0$ and $q \geq 1.$ In this special case we denote the corresponding cost operator by $C_{p,q}^{(\cA)},$ moreover, as the following calculation shows, the cost operator admits a picturesque closed form that does not involve the spectral resolution in an explicit way:
$$
C_{p,q}^{(\cA)}=\iint_{\R^K \times \R^K} \ler{\sum_{k=1}^K \abs{x_k-y_k}^q}^{\frac{p}{q}} \dd E_1(y_1) \otimes \dd E_1^T (x_1) \otimes \dots \otimes \dd E_K(y_K) \otimes \dd E_K^T (x_K)
$$
$$
=\ler{\sum_{k=1}^K \iint_{\R^K \times \R^K}  \abs{x_k-y_k}^q \dd E_1(y_1) \otimes \dd E_1^T (x_1) \otimes \dots \otimes \dd E_K(y_K) \otimes \dd E_K^T (x_K)}^{\frac{p}{q}}
$$
$$
=\ler{\sum_{k=1}^K \abs{\iint_{\R^K \times \R^K}  \ler{x_k-y_k}\dd E_1(y_1) \otimes \dd E_1^T (x_1) \otimes \dots \otimes \dd E_K(y_K) \otimes \dd E_K^T (x_K)}^q}^{\frac{p}{q}}
$$
\be \label{eq:C-A-p-q-form}
= \ler{\sum_{k=1}^K \ler{\abs{A_k \otimes I_{\cH}^T-I_{\cH} \otimes A_k^T}^q}^{(k)} }^{\frac{p}{q}}
\ee
where $\ler{\abs{A_k \otimes I_{\cH}^T-I_{\cH} \otimes A_k^T}^q}^{(k)}$ is the shorthand for
$$
I_{\hohc} \otimes \dots \otimes I_{\hohc}\otimes  \abs{A_k \otimes I_{\cH}^T-I_{\cH} \otimes A_k^T}^q \otimes I_{\hohc} \otimes \dots \otimes I_{\hohc} 
$$
where the nontrivial component is at position $k.$
\par
Now we are in the position to define quantum $(p,q)$-Wasserstein distances.

\begin{definition} \label{def:p-q-A-dist-def}
Let $\cH$ be a separable Hilbert space, $p>0, q \leq 1,$ let $\cA=\lers{A_1, \dots, A_K}$ be a finite collection of observables on $\cH,$ and let $C_{p,q}^{(\cA)}$ be given as in \eqref{eq:C-A-p-q-form}. We denote the corresponding \emph{quantum $(p,q)$-Wasserstein distance} by $D_{p,q}^{(\cA)}$ and define it the following way:
\be \label{eq:p-q-A-dist-def}
D_{p,q}^{(\cA)}(\rho, \omega)=\ler{\inf\lers{ \tr_{\ler{\cH \otimes \cH^*}^{\otimes K}}\lesq{\Pi^{\otimes K} 
C_{p,q}^{(\cA)}} \, \middle| \, \Pi \in \cS(\hohc), \, \tr_{\cH^*}[\Pi]=\omega, \, \tr_{\cH}[\Pi]=\rho^T}}^{\min\{1,1/p\}}. 
\ee
\end{definition}

\paragraph{{\bf Important special cases.}} In the sequel, we will mainly focus on the $q=p$ case. Under this assumption the cost of a coupling $\Pi \in \cC(\rho, \omega)$ greatly simplifies. Indeed, if $q=p$ then
$$
\tr_{\ler{\cH \otimes \cH^*}^{\otimes K}}\lesq{\Pi^{\otimes K} 
C_{p,p}^{(\cA)}}
=
\tr_{\ler{\cH \otimes \cH^*}^{\otimes K}}\lesq{\Pi^{\otimes K} 
\ler{ \sum_{k=1}^K \ler{\abs{A_k \otimes I_{\cH}^T-I_{\cH} \otimes A_k^T}^p}^{(k)} }}
$$
\be \label{eq:p-cost-op}
=\sum_{k=1}^K \tr_{\ler{\cH \otimes \cH^*}}\lesq{\Pi \abs{A_k \otimes I_{\cH}^T-I_{\cH} \otimes A_k^T}^p}.
\ee
We will study the induced Wasserstein distance $D_{p,p}^{(\cA)}$ and its derived quantities (divergences) thoroughly in the subsequent sections. We close this subsection by noting an easy but potentially relevant fact that confirms the compatibility of the above definition of $(p,q)$-Wasserstein distances with the quadratic distances.
\begin{remark}
In the even more special case $p=q=2$ we get back the definition of the quadratic quantum Wasserstein distance introduced by De Palma and Trevisan in \cite[Definition 8]{DPT-AHP}.
\end{remark}

%================================================================================
\subsection{A quantum mechanical optimal transport problem without imposing independence} \label{ss:position-op-approach}

Note that the law $\mathbb{P}_{\Pi} \in \mathrm{Prob}(\R^{2K})$ defined in \eqref{eq:P-Pi} always admits a product structure. Indeed,
$$
\mathbb{P}_{\Pi}=\pi_1 \otimes \dots \otimes \pi_K
$$
where $\pi_k \in \mathrm{Prob}(\R^2)$ is a coupling of the laws of the observable $A_k$ in $\rho \in \cS(\cH)$ and $\omega \in \cS(\cH)$ given by
$$
\dd \pi_k(x,y)=\tr_{\hohc}\lesq{\Pi \ler{\dd E_k(y) \otimes \dd E_k^T(x)}}.
$$
In this subsection, we propose a quantum mechanical analogue of \eqref{eq:cl-ot-prob} relying on the position operator where this product structure is not present. In contrast to the quantum optimal transport problem proposed in the previous subsection where we considered $K$ independent copies of the state of the composite system $\Pi \in \cS(\hohc),$ now we allow correlations. In this regard, the optimal transport problem proposed in this subsection is more general than the previous one. On the other hand, we will restrict our attention to the position operator as observable. The reason is that the spectrum of the position is the whole real line, and hence we will be able to pose a quantum optimal transport problem which reduces to the classical optimal transport problem on $\R^K$ when restricted to states and observables commuting with the position operator. Also, some of the computations presented here are parallel to the ones presented in the previous subsection, still, we keep them for completeness.
\par
In quantum mechanics, the Hilbert space describing a particle moving in $\R^K$ is $L^2(\R^K) \simeq L^2(\R)^{\otimes K}.$
Let $Q_k: L^2(\R^K) \supset \mathrm{dom}(Q_k) \rightarrow L^2(\R^K)$ denote the position operator corresponding to the $k$th coordinate, that is, 
\be \label{eq:q-j-def}
(Q_k \psi)(x_1, \dots, x_K)= x_k \psi(x_1, \dots, x_K) \qquad \ler{\psi \in \mathrm{dom}(Q_k) \subset L^2(\R^K)}
\ee
for all $k \in [K].$ Let us introduce the convention that if $A$ is a linear operator on (its domain contained in) $L^2(\R),$ then $A^{(k)}$ denotes a linear operator on $L^2(\R^K) \simeq L^2(\R)^{\otimes K}$ given by
\be \label{eq:jth-particle-op-def}
A^{(k)}=I_{L^2(\R)} \otimes \dots \otimes  I_{L^2(\R)} \otimes A \otimes I_{L^2(\R)} \otimes \dots \otimes  I_{L^2(\R)}
\ee
where $A$ is the $k$th component of the tensor product on the right-hand-side. Note that with this convention, the $k$th coordinate position operator $Q_k$ defined in \eqref{eq:q-j-def} is exactly $Q^{(k)},$ where $Q$ is the position operator on $L^2(\R)$ defined by $(Q\psi)(x)=x\psi(x).$
\par
Let $E: \cB(\R) \rightarrow \cP(L^2(\R))$ be the spectral measure of the position operator $Q,$ that is, $E(S)=M_{\chi_S},$ where $\chi_S$ is the characteristic function of the Borel set $S \in \cB(\R),$ and $M_f$ is the multiplication with $f$ given by $(M_f \psi)(x)=f(x)\psi(x).$ Then $Q=\int_{\R}\lambda \dd E(\lambda),$ and 
$$
Q_k=Q^{(k)}
=I_{L^2(\R)} \otimes \dots \otimes  I_{L^2(\R)} \otimes Q \otimes I_{L^2(\R)} \otimes \dots \otimes  I_{L^2(\R)}
$$
$$
=I_{L^2(\R)} \otimes \dots \otimes  I_{L^2(\R)} \otimes \int_{\R}\lambda \dd E(\lambda) \otimes I_{L^2(\R)} \otimes \dots \otimes  I_{L^2(\R)}
$$
$$
=\int_{\R}\lambda  I_{L^2(\R)} \otimes \dots \otimes  I_{L^2(\R)} \otimes \dd E(\lambda) \otimes I_{L^2(\R)} \otimes \dots \otimes  I_{L^2(\R)}
=\int_{\R}\lambda \dd E(\lambda)^{(k)}.
$$
Let $\omega$ be a state on $L^2(\R^K),$ and $Y_1, Y_2 \dots, Y_K$ denote the random variables we obtain by measuring $Q_1, Q_2, \dots, Q_K$ simultaneously on the system being in the state $\omega.$ By Born's rule, the (infinitesimal) probabilities describing the possible outcomes of our measurement read as follows:
\be \label{eq:Born-rule-1-b}
\dd \mathbb{P}(Y_1=y_1, Y_2=y_2, \dots, Y_K=y_K)
=\tr_{L^2(\R^K)}\lesq{\omega \ler{\dd E(y_1) \otimes \dd E(y_2) \otimes \dots \otimes \dd E(y_K)}}.
\ee
Similarly, if $X_1, X_2 \dots, X_K$ denote the random variables obtained by measuring $Q_1, Q_2, \dots, Q_K$ in the state $\rho$ simultaneously, then
\be \label{eq:Born-rule-2-0}
\dd \mathbb{P}(X_1=x_1, X_2=x_2, \dots, X_K=x_K)
=\tr_{\ler{L^2(\R^K)}^*}\lesq{\rho^T \ler{\dd E(x_1)^T \otimes \dd E(x_2)^T \otimes \dots \otimes \dd E(x_K)^T}}.
\ee
Here we used that $\tr_{\cH}[\tau H]=\tr_{\cH^*}[(\tau H)^T]=\tr_{\cH^*}[H^T \tau^T]=\tr_{\cH^*}[\tau^T H^T]$ whenever $\tau$ is a trace-class and $H$ is a bounded operator on a Hilbert space $\cH.$
Moreover, if we have a two-particle (both moving in $\R^K$) composite system described by the Hilbert space $L^2(\R^K) \otimes \ler{L^2(\R^K)}^*,$ and the state of the system is $\Pi \in \cS\ler{L^2(\R^K) \otimes \ler{L^2(\R^K)}^*},$ and the random vector $(X_1, X_2, \dots, X_K, Y_1, Y_2, \dots, Y_K) \in \R^{2K}$ denotes the outcome of measuring $Q_1, Q_2, \dots, Q_K$ simultaneously on both particles, then by Born's rule 
$$
\dd \mathbb{P}(Y_1=y_1, Y_2=y_2, \dots, Y_n=y_n, X_1=x_1, X_2=x_2, \dots, X_K=x_K)
$$
\be \label{eq:Born-rule-3-0}
=\tr_{L^2(\R^K) \otimes \ler{L^2(\R^K)}^*}\lesq{\Pi \ler{\dd E(y_1) \otimes \dd E(y_2) \otimes \dots \otimes \dd E(y_K) \otimes \dd E(x_1)^T \otimes \dd E(x_2)^T \otimes \dots \otimes \dd E(x_K)^T}}.
\ee
In analogy with \eqref{eq:cl-ot-prob}, we define the quantum optimal transport problem the following way. Let $\rho, \omega \in \cS(\cH)$ be states, and let $c: \R^K \times \R^K$ be nonnegative and lower semicontinuous. As discussed in general in the introduction (see \eqref{eq:q-coup-def}), in the particular case $\cH=L^2(\R^K)$ the couplings of (or transport plans between) $ \rho$ and $\omega$ are given by
\be \label{eq:q-coup-def-0}
\cC\ler{\rho, \omega}=\lers{\Pi \in \cS\ler{L^2(\R^K) \otimes \ler{L^2(\R^K)}^{*}} \, \middle| \, \tr_{\ler{L^2(\R^K)}^{*}} \lesq{\Pi}=\omega, \,  \tr_{L^2(\R^K)} \lesq{\Pi}=\rho^T}.
\ee
The optimization problem is to
$$
\text{minimize } \Pi \mapsto \mathbb{E}_{(\Pi)} \lesq{c(X_1, \dots,X_K, Y_1, \dots, Y_K)},
$$
where $\mathbb{E}_{(\Pi)}$ denotes the expectation with respect to the randomness driven by the state of the composite system $\Pi$ described by \eqref{eq:Born-rule-3-0}. The quantity to be optimized can be rewritten as
$$
\mathbb{E}_{(\Pi)} \lesq{c(X_1, \dots,X_K, Y_1, \dots, Y_K)}
$$
$$
=\iint_{\R^K \times \R^K} c(x_1,\dots, x_K,y_1, \dots, y_K) \dd \mathbb{P}(Y_1=y_1, Y_2=y_2, \dots, Y_K=y_K, X_1=x_1, X_2=x_2, \dots, X_K=x_K)
$$
$$
=\iint_{\R^K \times \R^K}  c(x_1,\dots, x_K,y_1, \dots, y_K) \tr_{L^2(\R^K) \otimes \ler{L^2(\R^K)}^*}\lesq{\Pi \ler{\dd E(y_1) \otimes  \dots \otimes \dd E(y_K) \otimes \dd E(x_1)^T\otimes \dots \otimes \dd E(x_K)^T}}
$$
$$
= \tr_{L^2(\R^K) \otimes \ler{L^2(\R^K)}^*}\lesq{\Pi \ler{\iint_{\R^K \times \R^K}  c(x_1,\dots, x_K,y_1, \dots, y_K) \dd E(y_1) \otimes  \dots \otimes \dd E(y_K) \otimes \dd E(x_1)^T\otimes \dots \otimes \dd E(x_K)^T}}.
$$
The above computation suggests that the positive semidefinite (and possibly unbounded) cost operator $C_c$ corresponding to the classical cost $c$ should be defined by Borel functional calculus as follows:
\be \label{eq:q-gen-cost-op-def}
C_c:=\iint_{\R^K \times \R^K}  c(x_1,\dots, x_K,y_1, \dots, y_K) \dd E(y_1) \otimes  \dots \otimes \dd E(y_K) \otimes \dd E(x_1)^T\otimes \dots \otimes \dd E(x_K)^T
\ee
With this definition, we propose the following quantum optimal transport problem:
\be \label{eq:q-OT-prob-def-gen}
\text{minimize } \tr_{L^2(\R^K) \otimes \ler{L^2(\R^K)}^*}\lesq{\Pi C_c} \text{ over all } \Pi \in \cC(\rho, \omega).
\ee
The above optimization problem \eqref{eq:q-OT-prob-def-gen} admits an optimal solution, that is, there is a coupling $\Pi^* \in \cC(\rho, \omega)$ that realizes the minimum of \eqref{eq:q-OT-prob-def-gen}. Here we omit the details of the proof of this statement as the argument relies on the compactness of $\cC(\rho, \omega)$ and the lower semi-continuity of the cost functional $\Pi \mapsto \tr_{L^2(\R^K) \otimes \ler{L^2(\R^K)}^*}\lesq{\Pi C_c},$ which means that it essentially coincides with proof of Proposition \ref{prop:ex-opt-plan}.
\par
Just like in the i.i.d. case, where $K$ independent copies of the quantum state $\Pi \in \cS(\hohc)$ were prepared (see Subsection \ref{ss:iid-approach}), if the classical cost function $c$ is chosen to be the $p$th power of the $l_q$ norm (see \eqref{eq:c-l-q-p-norm-def}), then the cost operator $C_c$ given by \eqref{eq:q-gen-cost-op-def} greatly simplifies. In this case,
$$
C_c=\iint_{\R^K \times \R^K} \ler{\sum_{k=1}^K \abs{x_k-y_k}^q}^{\frac{p}{q}} \dd E(y_1) \otimes  \dots \otimes \dd E(y_K) \otimes \dd E(x_1)^T\otimes \dots \otimes \dd E(x_K)^T
$$
$$
=\ler{\sum_{k=1}^K \iint_{\R^K \times \R^K}  \abs{x_k-y_k}^q \dd E(y_1) \otimes  \dots \otimes \dd E(y_K) \otimes \dd E(x_1)^T\otimes \dots \otimes \dd E(x_K)^T}^{\frac{p}{q}}
$$
$$
=\ler{\sum_{k=1}^K \abs{\iint_{\R^K \times \R^K}  \ler{x_k-y_k} \dd E(y_1) \otimes  \dots \otimes \dd E(y_K) \otimes \dd E(x_1)^T\otimes \dots \otimes \dd E(x_K)^T}^q}^{\frac{p}{q}}
$$
$$
=\ler{\sum_{k=1}^K \abs{Q^{(k)} \otimes I_{L^2(\R^K)}^T -I_{L^2(\R^K)} \otimes \ler{Q^{(k)}}^T}^q}^{\frac{p}{q}}.
$$

%================================================================================
%================================================================================
\section{$p$-Wasserstein distances and divergences}
This section is devoted to the study of $(p,p)$-Wasserstein distances which are defined as optimal values of transport costs of the form \eqref{eq:p-cost-op}. Recall that for $\rho, \omega \in \cS(\cH),$ their couplings are given by \eqref{eq:q-coup-def}.
\par 
Let $\fcoh$ denote the set of all finite collections of observables on the separable complex Hilbert space $\cH.$ That is,
\be \label{eq:fcoh-def}
\fcoh=\lers{\cA \, \middle| \, \cA \subset \cL(\cH)^{sa}, \, \#\{\cA\}<\infty}.
\ee 
In view of \eqref{eq:p-cost-op}, we define the cost operator $C_{\cA,p}$ corresponding to a collection of observables $\cA=\lers{A_1,\dots, A_K} \in \fcoh$ and a finite parameter $p>0$ by
\be \label{eq:cap-def}
C_{\cA,p}:=\sum_{k=1}^K \iint_{\R^2} \abs{x-y}^p \dd E_k(x) \otimes \dd E_k^T (y)
=\sum_{k=1}^K \abs{A_k \otimes I^T -I \otimes A_k^T}^p,
\ee
where $E_k$ is the spectral resolution of $A_k,$ that is, $A_k=\int_{\R} \lambda \dd E_k(\lambda).$ 
Note that the cost operators $C_{\cA,p}$ and $C_{p,p}^{(\cA)}$ are intimately related but not identical. For instance, $C_{p,p}^{\cA}$ acts on $(\hohc)^{\otimes K}$ while $C_{\cA,p}$ acts on $\hohc.$ We recall (see \eqref{eq:p-q-A-dist-def} in Definition \ref{def:p-q-A-dist-def} and also \eqref{eq:p-cost-op}) that for a finite $p>0$, the corresponding $(p,p)$-Wasserstein distance of $\rho$ and $\omega$ is
\be \label{eq:p-W-dist-def}
D_{\cA,p}\ler{\rho, \omega}:=D_{p,p}^{(\cA)}\ler{\rho, \omega}
=\ler{\inf_{\Pi \in \cC(\rho,\omega)}\lers{\tr_{\hohc}\left[\Pi C_{\cA,p}\right]}}^{\min\lers{\frac{1}{p},1}}.
\ee
The quadratic quantum Wasserstein \emph{divergences} suggested in \cite{DPT-lecture-notes} and further studied in \cite{BPTV-metric-24} are given in terms of the quadratic quantum Wasserstein distance the following way:
\be \label{eq:quad-qw-div-def}
d_{\cA,2} (\rho, \omega):=\sqrt{D_{\cA,2}^2(\rho,\omega)-\frac{1}{2}\ler{D_{\cA,2}^2(\rho, \rho)+D_{\cA,2}^2(\omega, \omega)}}.
\ee
Therefore, a natural candidate for being the non-quadratic counterpart of \eqref{eq:quad-qw-div-def} is the following: we define the $(\cA,p)$-Wasserstein \emph{divergence} of $\rho$ and $\omega$ by

\be \label{eq:p-W-div-def}
d_{\cA,p}\ler{\rho, \omega}=\ler{D_{\cA,p}^{\max \lers{p,1}}(\rho, \omega)-\frac{1}{2}\ler{D_{\cA,p}^{\max \lers{p,1}}(\rho, \rho)+D_{\cA,p}^{\max \lers{p,1}}(\omega, \omega)}}^{\min\lers{\frac{1}{p},1}}.
\ee
We note that by the above definition \eqref{eq:p-W-div-def} we have 
\be \label{eq:d-a-p-p-form}
d_{\cA,p}^{\max \lers{p,1}}\ler{\rho, \omega}
=\tr_{\hohc}\left[\ler{\Pi_{*}^{(\rho,\omega)}-\fel\ler{\Pi_{*}^{(\rho,\rho)}+\Pi_{*}^{(\omega,\omega)}}} C_{\cA,p}\right]
\ee
where $\Pi_{*}^{(\alpha,\beta)}$ stands for a $C_{\cA,p}$-optimal coupling of the states $\alpha$ and $\beta.$
Although the definition \eqref{eq:p-W-div-def} is highly natural in view of the definition of the quadratic Wasserstein divergence \cite{DPT-lecture-notes}, 
$$
D_{\cA,p}^p(\rho, \omega)-\frac{1}{2}\ler{D_{\cA,p}^p(\rho, \rho)+D_{\cA,p}^p(\omega, \omega)}<0
$$
may happen for $p>2$ as shown in the following.

\begin{proposition}
For every $p>2$, there exist a finite-dimensional Hilbert space $\cH$, a collection
$\cA\in\fcoh$ with $\#\{\cA\}=1$, and pure states $\rho,\omega\in\cS(\cH)$ such that
$$
d_{\cA,p}^p(\rho,\omega)<0.
$$
In particular, the quantity
$$
D_{\cA,p}^p(\rho,\omega)-\frac12\Big(D_{\cA,p}^p(\rho,\rho)+D_{\cA,p}^p(\omega,\omega)\Big)
$$
need not be nonnegative for $p>2$.
\end{proposition}

\begin{proof}
Let $\cH=\bC^3$ with orthonormal basis $\{e_1,e_2,e_3\}$, and define
$$
Q_i:=|e_i\rangle\langle e_i|, \qquad i=1,2,3.
$$
For some $b_1,b_2>0$ to be fixed later let
$$
A:=0\cdot Q_1+b_1Q_2+(b_1+b_2)Q_3.
$$
We take $\cA=\{A\}$. By \eqref{eq:cap-def}, the corresponding cost operator is
$$
C_{\cA,p}
=\sum_{i,j=1}^3 |a_i-a_j|^p\, Q_i\otimes Q_j^T,
$$
where $a_1=0$, $a_2=b_1$, and $a_3=b_1+b_2$. Expanding this expression, we obtain
\begin{align*}
C_{\cA,p}
&=b_1^p(Q_1\otimes Q_2^T+Q_2\otimes Q_1^T)
 +b_2^p(Q_2\otimes Q_3^T+Q_3\otimes Q_2^T) \\
&\quad +(b_1+b_2)^p(Q_1\otimes Q_3^T+Q_3\otimes Q_1^T).
\end{align*}

Let
$$
\rho=|u\rangle\langle u|, \qquad \omega=|v\rangle\langle v|
$$
be pure states. Pure states admit a unique coupling, the product coupling. Hence, by \eqref{eq:d-a-p-p-form},
$$
d_{\cA,p}^p(\rho,\omega)
=\tr_{\hohc}\Big[\Big(\omega\otimes\rho^T-\frac12(\rho\otimes\rho^T+\omega\otimes\omega^T)\Big)C_{\cA,p}\Big].
$$

Writing
$$
r_i:=\langle u,Q_i u\rangle=|\langle e_i,u\rangle|^2,
\qquad
s_i:=\langle v,Q_i v\rangle=|\langle e_i,v\rangle|^2,
$$
a direct computation yields
\begin{align*}
d_{\cA,p}^p(\rho,\omega)
&=b_1^p\big(r_1s_2+s_1r_2-r_1r_2-s_1s_2\big) \\
&\quad +b_2^p\big(r_2s_3+s_2r_3-r_2r_3-s_2s_3\big) \\
&\quad +(b_1+b_2)^p\big(r_1s_3+s_1r_3-r_1r_3-s_1s_3\big).
\end{align*}

We now choose
$$
u=e_2,
\qquad
v=\frac{e_1+e_3}{\sqrt2}.
$$
Then
$$
(r_1,r_2,r_3)=(0,1,0),
\qquad
(s_1,s_2,s_3)=\Big(\frac12,0,\frac12\Big),
$$
and therefore
$$
d_{\cA,p}^p(\rho,\omega)
=\frac12\,b_1^p+\frac12\,b_2^p-\frac14\,(b_1+b_2)^p.
$$

Finally, choosing $b_1=b_2=1$, we obtain
$$
d_{\cA,p}^p(\rho,\omega)=1-2^{p-2}.
$$
Consequently,
$$
d_{\cA,p}^p(\rho,\omega)\ge 0 \iff p\le 2,
$$
and in particular $d_{\cA,p}^p(\rho,\omega)<0$ for every $p>2$.
\end{proof}

%================================================================================
%================================================================================
\section{Relations between the sets of all possible cost operators for different values of $p$}
It is one of the key results of \cite{DPT-AHP} (see Theorem 1 there) that in the quadratic case, 
$$
\tr_{\hohc}\left[\ler{\Pi_{*}^{(\rho,\omega)}-\fel\ler{\Pi_{*}^{(\rho,\rho)}+\Pi_{*}^{(\omega,\omega)}}} C_{\cA,2}\right] \geq 0
$$
for every $\rho, \omega \in \cS(\cH)$ and for every $\cA \in \fcoh.$
Therefore, if for some $p>0$ and $\cA \in \fcoh$ the cost operator $C_{\cA,p}$ coincides with the quadratic cost operator $C_{\cA',2}$ for some other $\cA' \in \fcoh,$ then the definition \eqref{eq:p-W-div-def} makes sense. This phenomenon serves as a motivation to study relations between the sets of all possible cost operators for different values of $p.$ In this section we study how the choice of the parameter $p$ affects the set of all possible $p$-cost operators. Throughout this section we assume that the spectra of all observables involved are finite. For a finite $p>0,$ let us introduce
\be \label{eq:c-p-def}
\bC_p(\cH):=\lers{C_{\cA,p} \, \middle| \, \cA \in \fcoh, \, \#\{\mathrm{spec}(A)\}<\infty \text{ for every } A \in \cA}.
\ee 
For $p=\infty$ we define
\be
\bC_\infty(\cH)=\{C\geq 0\;|\;\forall \epsilon>0, \exists p': \forall p\geq p', \exists C'\in \bC_p(\cH): \norm{C-C'}_{op}\leq \epsilon\}
\ee
where $\norm{\cdot}_{op}$ denotes the operator norm on $\cB(\cH).$
\par
Cost operators generated by observables of small spectrum will play an important role in the sequel, so we introduce also the notation

\be \label{eq:c-p-k-def}
\bC_p^{(k)}(\cH):=\lers{C_{\cA,p} \, \middle| \, \cA \in \fcoh \text{ and } \#\lers{\mathrm{spec}(A)}\leq k \text{ for all } A \in \cA} \qquad \ler{k \in \N}.
\ee

The first result says that the parameter value $p=1$ leads to the smallest set of possible cost operators, no matter what the Hilbert space $\cH$ is. The proof heavily relies on the characterization of a notion from graph theory, the cut cone, by $l_1$-embeddability.
%% Reference or small explanation is needed for readability. The reader might not immediately recognize the cut cone as a graph theory term within a text about Wassertein divergences.

\begin{proposition} \label{prop:C1-Cp-rel}
We have 
\be \label{eq:C1-Cp-rel}
\bC_1(\cH) \subseteq \bC_p(\cH)
\ee
for all finite $p>0$ and for every separable Hilbert space $\cH.$
\end{proposition}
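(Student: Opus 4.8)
The plan is to realise every cost operator of the form $C_{\cA,1}$ as $C_{\cA',p}$ for a suitable finite family $\cA'$ of observables built out of orthogonal projections. Two ingredients do the work: the decomposition of a line metric into a nonnegative combination of cut metrics on a path (the easy, constructive half of the cut-cone / $\ell_1$-embeddability dictionary), and the observation that for an orthogonal projection $Q$ the elementary ``cut cost operator'' $\abs{Q\otimes I^T-I\otimes Q^T}$ is a fixed point of $X\mapsto\abs{X}^p$ for every finite $p>0$, because its spectrum is contained in $\{-1,0,1\}$.

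First I would fix $\cA=\{A_1,\dots,A_K\}$ with each $A_k$ of finite spectrum and write its (necessarily finite) spectral resolution $A_k=\sum_{i=1}^{m_k}a^{(k)}_iP^{(k)}_i$ with $a^{(k)}_1<\dots<a^{(k)}_{m_k}$. Since the $P^{(k)}_i\otimes(P^{(k)}_j)^T$ are mutually orthogonal projections summing to $I_{\hohc}$, functional calculus gives $\abs{A_k\otimes I^T-I\otimes A_k^T}=\sum_{i,j}\abs{a^{(k)}_i-a^{(k)}_j}\,P^{(k)}_i\otimes(P^{(k)}_j)^T$. The telescoping identity $\abs{a^{(k)}_i-a^{(k)}_j}=\sum_{\ell=1}^{m_k-1}\bigl(a^{(k)}_{\ell+1}-a^{(k)}_\ell\bigr)\,\mathbf{1}\{\min(i,j)\le\ell<\max(i,j)\}$, i.e. the expression of the line metric as a combination of cuts of the path, then yields
$$\abs{A_k\otimes I^T-I\otimes A_k^T}=\sum_{\ell=1}^{m_k-1}\bigl(a^{(k)}_{\ell+1}-a^{(k)}_\ell\bigr)\,\abs{Q^{(k)}_\ell\otimes I^T-I\otimes(Q^{(k)}_\ell)^T},$$
where $Q^{(k)}_\ell:=P^{(k)}_1+\dots+P^{(k)}_\ell$ is an orthogonal projection. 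Summing over $k$ expresses $C_{\cA,1}$ as a nonnegative combination $\sum_{k,\ell}c^{(k)}_\ell\,\abs{Q^{(k)}_\ell\otimes I^T-I\otimes(Q^{(k)}_\ell)^T}$ of elementary cut cost operators, with coefficients $c^{(k)}_\ell=a^{(k)}_{\ell+1}-a^{(k)}_\ell>0$.

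Next I would absorb the coefficients. For an orthogonal projection $Q$ the operator $Q\otimes I^T-I\otimes Q^T$ has spectrum in $\{-1,0,1\}$, hence $\abs{Q\otimes I^T-I\otimes Q^T}^p=\abs{Q\otimes I^T-I\otimes Q^T}$ for every finite $p>0$; so for $c\ge0$ the observable $B=c^{1/p}Q$ satisfies $\abs{B\otimes I^T-I\otimes B^T}^p=c\,\abs{Q\otimes I^T-I\otimes Q^T}$. Collecting, for each distinct projection $R$ occurring in the decomposition above, the sum $\tilde c_{R}$ of the corresponding coefficients, and letting $R_1,\dots,R_N$ enumerate these projections, I would put $\cA':=\{\tilde c_{R_j}^{1/p}R_j:j=1,\dots,N\}$, a finite family of observables each with at most two spectral values. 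Then
$$C_{\cA',p}=\sum_{j=1}^{N}\abs{\tilde c_{R_j}^{1/p}R_j\otimes I^T-I\otimes \tilde c_{R_j}^{1/p}R_j^T}^p=\sum_{j=1}^{N}\tilde c_{R_j}\,\abs{R_j\otimes I^T-I\otimes R_j^T}=C_{\cA,1},$$
so $C_{\cA,1}\in\bC_p(\cH)$; since $\cA$ was arbitrary, \eqref{eq:C1-Cp-rel} follows. (The same argument in fact gives the sharper inclusion $\bC_1(\cH)\subseteq\bC_p^{(2)}(\cH)$.)

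I do not expect a genuine analytic obstacle; the point to get right is that the line-metric-into-cuts identity upgrades to an \emph{operator} identity precisely because the spectral projections of a fixed observable $A_k$ commute, so no non-commutativity intervenes within a given $k$, and that scalar coefficients can be moved into projection-valued observables without leaving $\bC_p(\cH)$. Consequently the cut cone enters here only through its one-dimensional instance; the write-up will likely recall the general cut-cone / $\ell_1$-embeddability correspondence because later statements in the section use it in full strength.
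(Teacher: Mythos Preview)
Your proof is correct and follows the same underlying idea as the paper: reduce to scalar multiples of orthogonal projections, for which the spectrum of $Q\otimes I^T-I\otimes Q^T$ lies in $\{-1,0,1\}$ and hence $\abs{\cdot}^p$ is independent of $p$. The only difference is in how the cut decomposition is obtained. The paper parametrises the new observables by \emph{all} subsets $J\subseteq[n]$ of the spectrum and then appeals to the abstract equivalence between $\ell_1$-embeddability and membership in the cut cone \cite[Prop.~4.2.2]{cut-metric-book} to get nonnegative weights $\beta_J$; you instead write down the decomposition explicitly via the telescoping identity $\abs{a_i-a_j}=\sum_{\ell}(a_{\ell+1}-a_\ell)\,\mathbf{1}\{\min(i,j)\le\ell<\max(i,j)\}$, using only the $m_k-1$ ``interval cuts'' $Q^{(k)}_\ell=P^{(k)}_1+\dots+P^{(k)}_\ell$. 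Your route is more elementary and self-contained (no external reference needed), produces fewer observables, and makes the sharper conclusion $\bC_1(\cH)\subseteq\bC_p^{(2)}(\cH)$ explicit --- a fact that the paper's construction also yields (each $B_J$ is a scalar multiple of a projection) but does not record.
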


\begin{proof}
It suffices to see that for every finite $p>0$ and for every $A \in \cL(\cH)^{sa}$ with finite spectrum there are $B_1, \dots, B_K \in \cL(\cH)^{sa},$ all of them having finite spectrum, such that
\be \label{eq:C1-Cp-rel-key}
\abs{A \otimes I^T-I \otimes A^T}=\sum_{k=1}^K \abs{B_k \otimes I^T - I \otimes B_k^T}^p.
\ee
So let $0<p<\infty$ and $A \in \cL(\cH)^{sa}, \, \#\{\mathrm{spec}(A)\}<\infty$ be fixed, and assume that $A$ admits the spectral decomposition $A=\sum_{r=1}^n \lambda_r Q_r.$ Let $J \subseteq [n]$ where the shorthand $[n]:=\lers{1, \dots, n}$ is used. Let us define 
\be \label{eq:BJ-def}
B_J:= \alpha_{J} \sum_{r \in J} Q_r 
\ee
for every $J\subseteq [n].$ That is, $B_J$ is the spectral projection of $A$ corresponding to the set $\lers{\lambda_r \, \middle| \, r \in J},$ multiplied by a non-negative coefficient $\alpha_J$ to be determined later. With this definition,
$$
\abs{B_J \otimes I^T -I \otimes B_J^T}^p
=\abs{\alpha_J \ler{\sum_{r \in J} Q_r} \otimes \ler{\sum_{s \in [n]} Q_s^T}-\alpha_J\ler{\sum_{r \in [n]} Q_r} \otimes \ler{\sum_{s \in J} Q_s^T}}^p
$$
\be \label{eq:BJ-expand}
= \abs{\alpha_J}^p \sum_{r \in [n]} \sum_{s \in [n]}  \abs{\ind{r \in J}-\ind{s \in J}} Q_r \otimes Q_s^T. 
\ee
On the other hand, a very similar computation yields that
\be \label{eq:A-expand}
\abs{A \otimes I^T -I \otimes A^T}=\sum_{r \in [n]} \sum_{s \in [n]} \abs{\lambda_{r}-\lambda_{s}} Q_r \otimes Q_s^T.
\ee
Therefore,
\be \label{eq:decomp-1}
\abs{A \otimes I^T-I \otimes A^T}=\sum_{J \subseteq [n]} \abs{B_J \otimes I^T - I \otimes B_J^T}^p
\ee
if and only if
\be \label{eq:decomp-2}
\sum_{r \in [n]} \sum_{s \in [n]} \abs{\lambda_{r}-\lambda_{s}} Q_r \otimes Q_s^T
=\sum_{J \subseteq [n]} \sum_{r \in [n]} \sum_{s \in [n]}  \abs{\alpha_J}^p \abs{\ind{r \in J}-\ind{s \in J}} Q_r \otimes Q_s^T
\ee
if and only if
\be \label{eq:decomp-3}
\abs{\lambda_{r}-\lambda_{s}}=\sum_{J \subseteq [n]} \abs{\alpha_J}^p \abs{\ind{r \in J}-\ind{s \in J}}
\ee
holds for every $r,s \in [n].$ The function $[n]\times [n] \ni (r,s) \mapsto \abs{\lambda_r-\lambda_s}$ appearing on the left-hand side of  \eqref{eq:decomp-3} is clearly an $l_1$-embeddable semi-metric, that is, there is a positive integer $m$ and points $u_1, \dots, u_n \in \R^m$ such that $\abs{\lambda_r-\lambda_s}=\norm{u_r-u_s}_{l_1(\R^m)}.$ The trivial choice $m=1$ and $u_r:=\lambda_r$ for $r \in \lers{1, \dots, n}$ does the job. Therefore, by Proposition 4.2.2. of \cite{cut-metric-book}, the function $(r,s) \mapsto \abs{\lambda_{r}-\lambda_{s}}$ belongs to the \emph{cut cone $\mathrm{CUT}_n$}. By the definition of the cut cone, see Eq. (4.1.2.) of \cite{cut-metric-book}, this means that there is an element-wise non-negative vector $\lers{\beta_J}_{J \subseteq [n]}$ such that 
\be
\abs{\lambda_{r}-\lambda_{s}}=\sum_{J \subseteq [n]} \beta_J \abs{\ind{r \in J}-\ind{s \in J}}.
\ee
The choice $\alpha_J:=\beta_J^{1/p}$ completes the proof. 
\end{proof}

The next proposition tells us that observables having at most two different eigenvalues generate the same set of cost operators for any positive parameter value.

\begin{proposition} \label{prop:cost-op-2-level-obs}
For any $p, p'>0,$ and for any separable Hilbert space $\cH,$ we have
\be \label{eq:cost-op-2-level-obs}
\bC_{p}^{(2)}(\cH)=\bC_{p'}^{(2)}(\cH).
\ee
That is, the set of all possible cost operators generated by two-level observables is the same for $p$ and $p'.$
\end{proposition}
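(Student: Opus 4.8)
The plan is to reduce everything to a single two-level observable, where the relevant operator is so simple that changing the exponent only rescales it. Recall from \eqref{eq:A-expand} that for $A = \lambda_1 Q_1 + \lambda_2 Q_2 \in \cL(\cH)^{sa}$ with $Q_1 + Q_2 = I$ and $Q_1 Q_2 = 0$ one has
\[
\abs{A \otimes I^T - I \otimes A^T} = \abs{\lambda_1 - \lambda_2}\ler{Q_1 \otimes Q_2^T + Q_2 \otimes Q_1^T},
\]
since the diagonal terms $r=s$ in \eqref{eq:A-expand} vanish. The crucial point is that $P := Q_1 \otimes Q_2^T + Q_2 \otimes Q_1^T$ is itself an orthogonal projection: $Q_1 \otimes Q_2^T$ and $Q_2 \otimes Q_1^T$ are self-adjoint idempotents, and their product is $(Q_1 Q_2) \otimes (Q_1 Q_2)^T = 0$ because transposition reverses products, so they are mutually orthogonal. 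Consequently, by the Borel functional calculus,
\[
\abs{A \otimes I^T - I \otimes A^T}^p = \abs{\lambda_1 - \lambda_2}^p\, P
\]
for every $p>0$; in other words, the $p$-cost operator of a single two-level observable is completely determined by the pair of projections $Q_1, Q_2$ and the single scalar $\abs{\lambda_1 - \lambda_2}^p$.

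With this in hand, I would argue as follows. Fix $p, p' > 0$ and let $\cA = \lers{A_1, \dots, A_K} \in \fcoh$ with each $A_k = \lambda_1^{(k)} Q_1^{(k)} + \lambda_2^{(k)} Q_2^{(k)}$ having at most two eigenvalues (a genuinely degenerate $A_k = \lambda I$ is covered by taking $Q_2^{(k)} = 0$, which makes the corresponding summand vanish). For each $k$ set
\[
B_k := \abs{\lambda_1^{(k)} - \lambda_2^{(k)}}^{p/p'}\, Q_1^{(k)},
\]
a two-level observable with eigenprojections $Q_1^{(k)}, Q_2^{(k)}$ and eigenvalue gap $\abs{\lambda_1^{(k)} - \lambda_2^{(k)}}^{p/p'}$. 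Applying the last displayed identity with exponent $p'$ to $B_k$ and with exponent $p$ to $A_k$ gives $\abs{B_k \otimes I^T - I \otimes B_k^T}^{p'} = \abs{\lambda_1^{(k)} - \lambda_2^{(k)}}^{p}\, P^{(k)} = \abs{A_k \otimes I^T - I \otimes A_k^T}^{p}$, where $P^{(k)} = Q_1^{(k)} \otimes (Q_2^{(k)})^T + Q_2^{(k)} \otimes (Q_1^{(k)})^T$. Summing over $k$ yields $C_{\cB, p'} = C_{\cA, p}$ with $\cB = \lers{B_1, \dots, B_K} \in \fcoh$ consisting of two-level observables, so $C_{\cA,p} \in \bC_{p'}^{(2)}(\cH)$. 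This proves $\bC_{p}^{(2)}(\cH) \subseteq \bC_{p'}^{(2)}(\cH)$, and interchanging the roles of $p$ and $p'$ gives the reverse inclusion, hence equality.

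I do not anticipate a genuine obstacle: the entire content is the observation that $\abs{A \otimes I^T - I \otimes A^T}$ has a single nonzero eigenvalue when $A$ is two-level. The only points requiring a little care are the verification that $Q_1 \otimes Q_2^T + Q_2 \otimes Q_1^T$ is a projection (so that raising to a power merely rescales the one nonzero eigenvalue), and the trivial bookkeeping for observables with a single eigenvalue, whose contribution to the cost operator is zero and is therefore carried over for free.
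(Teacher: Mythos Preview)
Your proof is correct and follows essentially the same approach as the paper: both write a two-level observable as $\lambda Q + \lambda'(I-Q)$, compute $\abs{A\otimes I^T - I\otimes A^T}^p = \abs{\lambda-\lambda'}^p\bigl((I-Q)\otimes Q^T + Q\otimes(I-Q)^T\bigr)$, and then replace $A$ by $A' = \mu Q$ with $\mu = \abs{\lambda-\lambda'}^{p/p'}$ to match exponents. Your version is slightly more explicit in noting that $Q_1\otimes Q_2^T + Q_2\otimes Q_1^T$ is a projection and in handling the sum over $k$ and the degenerate one-eigenvalue case, but these are presentational refinements rather than a different argument.
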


In other words, for any $p, p' >0$ and any finite collection $\cA$ of observables on $\cH$ having at most two different eigenvalues there is a collection $\cA'$ with the same property such that
\be \label{eq:cost-dist-equiv}
C_{\cA',p'}=C_{\cA,p} \text{ and hence } D_{\cA',p'}^{\max \lers{p',1}}\ler{\cdot,\cdot} \equiv D_{\cA,p}^{\max \lers{p,1}}\ler{\cdot,\cdot}.
\ee

\begin{proof}
If $A \in \cL(\cH)^{sa}$ and $\#\{\mathrm{spec}(A)\}\leq 2,$ then $A=\lambda Q+ \lambda' (I-Q)$ for some $Q \in \cP(\cH)$ and $\lambda, \lambda' \in \R.$
Therefore, 
$$
\abs{A \otimes I^T - I \otimes A^T}^p
=\left|\lambda Q \otimes Q^T+ \lambda' (I-Q) \otimes Q^T+ \lambda Q \otimes (I-Q)^T+ \lambda' (I-Q) \otimes (I-Q)^T- \right.
$$
$$
\left. - \lambda Q \otimes Q^T- \lambda (I-Q) \otimes Q^T- \lambda' Q \otimes (I-Q)^T-\lambda' (I-Q) \otimes (I-Q)^T\right|^p=
$$
$$
=\abs{\lambda-\lambda'}^p\ler{(I-Q) \otimes Q^T + Q \otimes (I-Q)^T}.
$$
Now let $A'$ be given by $A'=\mu Q.$ Then
$$
\abs{A' \otimes I^T - I \otimes (A')^T}^{p'}=\abs{\mu Q \otimes Q^T+ \mu Q \otimes (I-Q)^T-\mu Q \otimes Q^T-\mu (I-Q) \otimes Q^T}^{p'}=
$$
$$
=\abs{\mu}^{p'} \ler{(I-Q) \otimes Q^T + Q \otimes (I-Q)^T}.
$$
We finish the proof by choosing $\mu:= \abs{\lambda-\lambda'}^{\frac{p}{p'}}.$
\end{proof}

Note that Proposition \ref{prop:cost-op-2-level-obs} has the following immediate consequence for the qubit case, that is, for $\cH=\C^2.$

\begin{corollary} \label{cor:qubit-cost-ops-equal}
For every $0<p,p' <\infty$ we have
\be \label{eq:qubit-cost-ops-equal}
\bC_p(\C^2)=\bC_{p'}(\C^2)
\ee
where $\bC_p(\cH)$ is defined by \eqref{eq:c-p-def}. Consequently,
\be \label{eq:concavity}
D_{\cA,p}^{\max \lers{p,1}}(\rho, \omega)\geq \frac{1}{2}\ler{D_{\cA,p}^{\max \lers{p,1}}(\rho, \rho)+D_{\cA,p}^{\max \lers{p,1}}(\omega, \omega)}
\ee
holds for every $0<p<\infty$ and $\cA \in \fco(\C^2)$ and $\rho, \omega \in \cS(\C^2).$ Therefore, the definition \eqref{eq:p-W-div-def} of the $(\cA,p)$-Wasserstein divergence makes sense for any $\cA \in \fco(\C^2)$ and finite $p>0.$ %Moreover, if $\mathrm{linspan}(\cA)=\cB(\C^2),$ then \eqref{eq:concavity} is saturated if and only if $\rho=\omega.$
\end{corollary}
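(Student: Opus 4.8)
The plan is to reduce the statement entirely to Proposition~\ref{prop:cost-op-2-level-obs} together with the quadratic positivity theorem of De Palma and Trevisan recalled at the beginning of this section. First I would observe that on $\C^2$ every self-adjoint operator has at most two distinct eigenvalues, so $\fco(\C^2)$ is nothing but the family of finite collections of two-level observables, and in particular the ``finite spectrum'' restriction appearing in \eqref{eq:c-p-def} is automatically satisfied. Hence $\bC_p(\C^2)=\bC_p^{(2)}(\C^2)$ for every finite $p>0$, and applying Proposition~\ref{prop:cost-op-2-level-obs} with $\cH=\C^2$ gives $\bC_p^{(2)}(\C^2)=\bC_{p'}^{(2)}(\C^2)$, which is exactly \eqref{eq:qubit-cost-ops-equal}.

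For the consequence \eqref{eq:concavity}, the first step is the elementary bookkeeping identity $\min\lers{1/p,1}\cdot\max\lers{p,1}=1$, valid for all finite $p>0$; combined with \eqref{eq:p-W-dist-def} it shows that for any states $\alpha,\beta\in\cS(\C^2)$ one has $D_{\cA,p}^{\max\lers{p,1}}(\alpha,\beta)=\inf_{\Pi\in\cC(\alpha,\beta)}\tr_{\hohc}\lesq{\Pi C_{\cA,p}}$, i.e.\ raising the $(\cA,p)$-Wasserstein distance to the power $\max\lers{p,1}$ merely undoes the normalization and returns the raw transport cost. Now fix $\cA\in\fco(\C^2)$ and a finite $p>0$. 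By the first part of the corollary, $C_{\cA,p}\in\bC_p(\C^2)=\bC_2(\C^2)$, so there is a collection $\cA'\in\fco(\C^2)$ with $C_{\cA',2}=C_{\cA,p}$, and consequently $D_{\cA,p}^{\max\lers{p,1}}(\alpha,\beta)=D_{\cA',2}^{2}(\alpha,\beta)$ for all $\alpha,\beta\in\cS(\C^2)$. Applying Theorem~1 of \cite{DPT-AHP} to the collection $\cA'$ and the pair $\rho,\omega$ gives $D_{\cA',2}^2(\rho,\omega)\geq\fel\ler{D_{\cA',2}^2(\rho,\rho)+D_{\cA',2}^2(\omega,\omega)}$, and substituting $C_{\cA',2}=C_{\cA,p}$ back into both sides yields \eqref{eq:concavity}.

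Finally, \eqref{eq:concavity} is precisely the statement that the quantity inside the outer exponent $\min\lers{1/p,1}$ on the right-hand side of \eqref{eq:p-W-div-def} is non-negative, so $d_{\cA,p}(\rho,\omega)$ is a well-defined non-negative real number, which is the last assertion. I expect no genuine obstacle: the whole argument is a transfer of already-established facts. The only point that needs care is the exponent bookkeeping and the observation that the equality $C_{\cA',2}=C_{\cA,p}$ must be exploited at the level of the un-normalized transport costs $\inf_{\Pi}\tr_{\hohc}\lesq{\Pi C}$ rather than the normalized distances, since otherwise the powers $\max\lers{p,1}$ and $\max\lers{2,1}=2$ would not line up.
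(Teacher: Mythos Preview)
Your argument is correct and matches the paper's intended reasoning: the corollary is presented there as an immediate consequence of Proposition~\ref{prop:cost-op-2-level-obs} together with the quadratic positivity result \cite[Theorem~1]{DPT-AHP}, exactly as you unpack it. Your explicit handling of the exponent bookkeeping (that $D_{\cA,p}^{\max\{p,1\}}$ is the raw transport cost) is a useful clarification the paper leaves implicit.
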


We continue with a lemma that will be useful several times in the sequel.

\begin{lemma}\label{lemma-commutativity}
Suppose that for a separable Hilbert space $\cH$ and observables $A$ and $\lers{B_k}_{k=1}^K$ and for finite $p,p'>0$
\be\label{eq-commutativity-lemma}
\abs{A\otimes I^T-I\otimes A^T}^p=\sum_{k=1}^K \abs{B_k\otimes I^T-I\otimes B_k^T}^{p'},
\ee
then $\left[A,B_k\right]=0$ for every $k\in [K].$
\end{lemma}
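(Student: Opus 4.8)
The plan is to extract $[A,B_k]=0$ from the kernel of the operator on the left-hand side of \eqref{eq-commutativity-lemma}. Since every observable in this section has finite spectrum, write the spectral decomposition $A=\sum_{r=1}^n\lambda_r Q_r$ with pairwise distinct eigenvalues $\lambda_r$ and pairwise orthogonal nonzero projections $Q_r$ with $\sum_{r=1}^n Q_r=I$. By the same computation as in \eqref{eq:A-expand}, the left-hand side of \eqref{eq-commutativity-lemma} equals $\sum_{r,s=1}^n\abs{\lambda_r-\lambda_s}^p\,Q_r\otimes Q_s^T$; as the $Q_r\otimes Q_s^T$ are pairwise orthogonal projections summing to the identity of $\hohc$ and $\abs{\lambda_r-\lambda_s}^p>0$ exactly when $r\neq s$, the kernel of this operator is the range of the projection $P:=\sum_{r=1}^n Q_r\otimes Q_r^T$.

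First I would transfer this kernel information to the right-hand side. Since the two sides of \eqref{eq-commutativity-lemma} coincide, the positive operator $\sum_{k=1}^K\abs{B_k\otimes I^T-I\otimes B_k^T}^{p'}$ annihilates $\mathrm{ran}(P)$. Compressing by $P$ and using that a finite sum of positive operators vanishes only if each summand does, one gets $\abs{B_k\otimes I^T-I\otimes B_k^T}^{p'}P=0$ for every $k$; taking $p'$-th roots via the spectral theorem (a positive operator $S$ with $S^{p'}P=0$ satisfies $SP=0$) yields $(B_k\otimes I^T-I\otimes B_k^T)P=0$, equivalently $(B_k\otimes I^T-I\otimes B_k^T)(Q_r\otimes Q_r^T)=0$ for every $r\in[n]$.

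The heart of the argument is then a direct evaluation on a convenient dense-enough family inside $\mathrm{ran}(P)$. For a unit vector $\varphi\in\mathrm{ran}(Q_r)$ let $\varphi^*\in\cH^*$ be the functional $\psi\mapsto\inner{\varphi}{\psi}$; one checks that $Q_r^T\varphi^*=\varphi^*$, hence $Q_r\otimes Q_r^T$ fixes $\varphi\otimes\varphi^*$, and that $B_k^T\varphi^*=(B_k\varphi)^*$ by self-adjointness of $B_k$. Feeding $\varphi\otimes\varphi^*\in\mathrm{ran}(P)$ into $(B_k\otimes I^T-I\otimes B_k^T)P=0$ gives $B_k\varphi\otimes\varphi^*=\varphi\otimes(B_k\varphi)^*$ in $\hohc$; pairing both sides with $\chi\otimes\varphi^*$ for arbitrary $\chi\in\cH$ forces $B_k\varphi\in\mathrm{span}\lers{\varphi}$. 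Thus $B_k$ leaves each eigenspace $\mathrm{ran}(Q_r)$ of $A$ invariant and every vector of $\mathrm{ran}(Q_r)$ is an eigenvector of $B_k$, so $B_k$ restricted to $\mathrm{ran}(Q_r)$ is a real scalar $c_{k,r}$; hence $B_kQ_r=c_{k,r}Q_r$, and by self-adjointness also $Q_rB_k=c_{k,r}Q_r$, so $[Q_r,B_k]=0$ for all $r$ and therefore $[A,B_k]=\sum_r\lambda_r[Q_r,B_k]=0$.

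I expect the main obstacle to be the bookkeeping in the last step: isolating an explicitly describable subset of $\mathrm{ran}(P)$ (here $\lers{\varphi\otimes\varphi^*}$) and keeping the transpose and dual-space conventions straight so that the pairing argument really yields ``$B_k\varphi$ is proportional to $\varphi$'' without spurious conjugates spoiling the conclusion. Everything else — identifying the kernel of the left-hand side, the compression and ``sum of positives'' step, passing to $p'$-th roots, and the elementary fact that an operator having every vector as an eigenvector is scalar — is routine once the setup is in place. (Note that only finiteness, or even just discreteness, of the spectrum of $A$ is used; no restriction on the $B_k$ beyond self-adjointness is needed.)
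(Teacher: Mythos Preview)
Your proposal is correct and follows essentially the same route as the paper: both arguments show that every eigenvector $\varphi$ of $A$ produces a vector $\varphi\otimes\varphi^*$ in the kernel of the left-hand side of \eqref{eq-commutativity-lemma}, use positivity to push this into the kernel of each $\abs{B_k\otimes I^T-I\otimes B_k^T}^{p'}$, and conclude that $\varphi$ is an eigenvector of $B_k$. The paper phrases the key step as the biconditional ``$v\otimes v^T\in\ker\abs{X\otimes I^T-I\otimes X^T}^p\iff Xv=\lambda v$'' and applies it to both $A$ and $B_k$, whereas you compute the full kernel of the $A$-side and then do the tensor computation for the $B_k$-side by hand; these are the same argument in slightly different packaging.
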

\begin{proof}
One  has
\be
\abs{A\otimes I^T-I\otimes A^T}^p\geq \abs{B_k\otimes I^T-I\otimes B_k^T}^{p'}\;\forall k\in [K],
\ee
and so
\be
\ker\abs{A\otimes I^T-I\otimes A^T}^p\subseteq \ker\abs{B_k\otimes I^T-I\otimes B_k^T}^{p'}\;\forall k\in [K].
\ee
However, for a vector $v\in \cH$ and any selfadjoint operator $A$ we have that
\be
v\otimes v^T\in \ker\abs{A\otimes I^T-I\otimes A^T}^p\iff v\otimes v^T\in \ker\abs{A\otimes I^T-I\otimes A^T}\iff Av=\lambda v,
\ee
for some $\lambda\in \R$. This is clear from the spectral decomposition
\be
\abs{A\otimes I^T-I\otimes A^T}=\iint_{x<y} \abs{x-y}\left[dE\ler{x}\otimes dE^T\ler{y}+dE\ler{y}\otimes dE^T\ler{x}\right].
\ee
Thus we conclude that if \eqref{eq-commutativity-lemma} holds then any eigenvector of $A$ is also an eigenvector of $B_k$ for every $k\in [K].$
\end{proof}

\begin{remark} \label{remark:commutativity}
There is a physical interpretation to Lemma~\ref{lemma-commutativity}, namely if a Wasserstein cost operator is based on the expected value of some power of the difference of the measurement of a single observable $A$ between the subsystems, then any observable $B_k$ used in constructing the cost operator of $A$ can not distinguish more sensitively between the subsystems than $A$.

If $\dim\cH<\infty$, then a stronger claim holds: any eigensubspace of $A$ must be an eigensubspace of $B_k$, $\forall k\in [r]$.
\end{remark}

We conjecture that the sets of all possible $p$-cost operators are ordered in a nice way: the bigger the parameter, the larger the corresponding set.

\begin{conjecture} \label{conj:C-p-well-ordered}
For any separable Hilbert space $\cH,$ and for any parameter values $0<p\leq p' \leq\infty$ we have the relation
\be \label{eq:C-p-well-ordered}
\bC_{p}(\cH)\subseteq \bC_{p'}(\cH).
\ee
\end{conjecture}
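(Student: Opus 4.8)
Here is a possible route to Conjecture~\ref{conj:C-p-well-ordered}. The plan is to reduce it to a classical statement on isometric embeddability of snowflaked finite metric spaces into $\ell_{p'}$, and then to settle that statement by combining Schoenberg's theorem with a Gaussian embedding and a Tchakaloff-type discretization.

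First I would exploit that $C_{\cA,p}=\sum_{j}\abs{A_j\otimes I^T-I\otimes A_j^T}^p$ is additive over the members of $\cA$ (see \eqref{eq:cap-def}) and that all observables involved have finite spectrum, to reduce the inclusion $\bC_p(\cH)\subseteq\bC_{p'}(\cH)$ to the following single-observable statement: for all finite $p,p'$ with $0<p\leq p'$ and every $A\in\cL(\cH)^{sa}$ with finite spectrum, there exist finitely many $B_1,\dots,B_m\in\cL(\cH)^{sa}$, each with finite spectrum, such that
\be \label{eq:conj-prop-key}
\abs{A\otimes I^T-I\otimes A^T}^p=\sum_{l=1}^m\abs{B_l\otimes I^T-I\otimes B_l^T}^{p'}.
\ee
Indeed, applying \eqref{eq:conj-prop-key} to each $A_j$ and taking $\cA'$ to be the union of the resulting (finite) families gives $C_{\cA,p}=C_{\cA',p'}\in\bC_{p'}(\cH)$; and the case $p'=\infty$ is then free, because the finite case already yields $C_{\cA,p}\in\bC_{p''}(\cH)$ for every finite $p''\geq p$, hence $C_{\cA,p}\in\bC_\infty(\cH)$ straight from the definition of $\bC_\infty(\cH)$. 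To analyse \eqref{eq:conj-prop-key}, I would write $A=\sum_{r=1}^n\lambda_r Q_r$ with distinct $\lambda_r$ and orthogonal projections $Q_r$ summing to $I$, and look for $B_l$ of the special form $B_l=\sum_r\mu_r^{(l)}Q_r$; such operators commute with $A$, in agreement with Lemma~\ref{lemma-commutativity}, which moreover shows that commuting with $A$ is forced for any solution of \eqref{eq:conj-prop-key}. Computing both sides of \eqref{eq:conj-prop-key} through the joint spectral resolution of the commuting pair $A\otimes I^T,\ I\otimes A^T$ exactly as in \eqref{eq:A-expand}, the left side equals $\sum_{r,s}\abs{\lambda_r-\lambda_s}^p\,Q_r\otimes Q_s^T$ and the $l$-th summand on the right equals $\sum_{r,s}\abs{\mu_r^{(l)}-\mu_s^{(l)}}^{p'}Q_r\otimes Q_s^T$; since the $Q_r\otimes Q_s^T$ are mutually orthogonal and sum to the identity, \eqref{eq:conj-prop-key} is equivalent to the scalar system
\be \label{eq:conj-prop-scalar}
\abs{\lambda_r-\lambda_s}^p=\sum_{l=1}^m\abs{\mu_r^{(l)}-\mu_s^{(l)}}^{p'}\qquad(r,s\in[n]).
\ee

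Condition \eqref{eq:conj-prop-scalar} says precisely that the $n$-point metric space $([n],d)$ with $d_{rs}:=\abs{\lambda_r-\lambda_s}^{p/p'}$ --- a genuine metric because $p/p'\leq 1$ --- embeds isometrically into $\ell_{p'}^m$ via $r\mapsto(\mu_r^{(1)},\dots,\mu_r^{(m)})$. I would build such an embedding in three steps. \emph{(i) Into a Hilbert space.} The kernel $t\mapsto\abs{t}^{2p/p'}$ is conditionally negative definite on $\R$ since $2p/p'\in(0,2]$ (equivalently, $e^{-s\abs{t}^{2p/p'}}$ is for each $s>0$ the characteristic function of a symmetric stable law), so by Schoenberg's theorem the snowflaked line $(\R,\abs{x-y}^{p/p'})$, and in particular our finite subset, embeds isometrically into $\ell_2^{n-1}$: say $\lambda_r\mapsto u_r$ with $\norm{u_r-u_s}_2=\abs{\lambda_r-\lambda_s}^{p/p'}$ (see \cite{cut-metric-book} for the $\ell_2$-embeddability of metrics of negative type). \emph{(ii) Into $L^{p'}$.} For a standard Gaussian vector $G$ on $\R^{n-1}$, the map $a\mapsto\inner{a}{G}$ is, up to the constant $\gamma_{p'}:=\ler{\E\abs{g}^{p'}}^{1/p'}$, an isometry of $\ell_2^{n-1}$ into $L^{p'}(\Omega,\bP)$, because $\inner{a-b}{G}$ is centred Gaussian with variance $\norm{a-b}_2^2$; this works for every finite $p'>0$. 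Putting $f_r:=\inner{u_r}{G}$ one gets $\norm{f_r-f_s}_{p'}^{p'}=\gamma_{p'}^{p'}\abs{\lambda_r-\lambda_s}^p$. \emph{(iii) Down to finite dimension.} The $\binom{n}{2}$ functions $\omega\mapsto\abs{f_r(\omega)-f_s(\omega)}^{p'}$ lie in $L^1(\Omega,\bP)$ (Gaussian moments are finite), so by Tchakaloff's theorem there is a finitely supported probability measure $\sum_{l=1}^m w_l\delta_{\omega_l}$ with $m\leq\binom{n}{2}+1$ reproducing all of their $\bP$-integrals. Setting $\mu_r^{(l)}:=\gamma_{p'}^{-1}w_l^{1/p'}f_r(\omega_l)$ then yields $\sum_{l=1}^m\abs{\mu_r^{(l)}-\mu_s^{(l)}}^{p'}=\gamma_{p'}^{-p'}\sum_l w_l\abs{f_r(\omega_l)-f_s(\omega_l)}^{p'}=\gamma_{p'}^{-p'}\norm{f_r-f_s}_{p'}^{p'}=\abs{\lambda_r-\lambda_s}^p$, which is \eqref{eq:conj-prop-scalar}; and each $B_l=\sum_r\mu_r^{(l)}Q_r$ has at most $n$ distinct eigenvalues.

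The soft ingredients --- additivity over $\cA$, Lemma~\ref{lemma-commutativity}, Schoenberg's theorem, the Gaussian embedding, and the reduction of the $p'=\infty$ case --- are routine. The step I expect to be the main obstacle is (iii), the passage from $L^{p'}$ to a \emph{finite-dimensional} $\ell_{p'}$: for general $p'$ there need not be any isometry $\ell_2^d\hookrightarrow\ell_{p'}^m$, so one cannot simply embed all of $\ell_2^{n-1}$ and must instead produce a cubature rule exact only for the finitely many difference vectors $u_r-u_s$, which is exactly what Tchakaloff's theorem supplies; making that invocation fully rigorous --- in particular the passage from ``the barycentre lies in the closed convex hull of the image'' to ``it lies in the convex hull'' --- is where the genuine work lies. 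Two borderline cases should be checked separately but present no difficulty: if $p=p'$ then $d$ is just the Euclidean line and \eqref{eq:conj-prop-scalar} holds trivially with $m=1$; and if $0<p'<1$ then $\norm{\cdot}_{p'}$ is only a quasi-norm, but \eqref{eq:conj-prop-scalar} and the cost operators involve only sums of $p'$-th powers, so nothing breaks.
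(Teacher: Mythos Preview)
The paper does not prove this statement; it is posed as an open conjecture, and only partial results are established: Proposition~\ref{prop:C1-Cp-rel} handles the base case $p=1$ via cut-cone techniques, Propositions~\ref{prop:cost-op-2-level-obs} and~\ref{prop:c-p-3-smaller} treat observables with at most two or three eigenvalues by explicit constructions, and Proposition~\ref{prop:decomp-smaller-spectrum} settles the full inclusion only for $\cH=\C^3$ (and, via Corollary~\ref{cor:qubit-cost-ops-equal}, trivially for $\C^2$).

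Your argument, by contrast, appears to give a complete proof. The reduction to the scalar system $\abs{\lambda_r-\lambda_s}^p=\sum_l\abs{\mu_r^{(l)}-\mu_s^{(l)}}^{p'}$ is exactly the one the paper uses in its partial results (compare \eqref{eq:decomp-3} and \eqref{eq:to-be-solved}). Your new idea is to recognise this system as the isometric embeddability of the snowflaked metric $(r,s)\mapsto\abs{\lambda_r-\lambda_s}^{p/p'}$ into some $\ell_{p'}^m$, and to settle that via Schoenberg's theorem, a Gaussian embedding $\ell_2\hookrightarrow L^{p'}$, and a cubature step. All three ingredients are sound: $\abs{t}^{2p/p'}$ is negative definite on $\R$ because $2p/p'\in(0,2]$; the Gaussian map is an isometry of $\ell_2^{n-1}$ into $L^{p'}$ for every finite $p'>0$ since Gaussians have moments of all orders; and the discretisation is the Richter--Rogosinski theorem. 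Moreover, the worry you flag about ``closed convex hull versus convex hull'' is a non-issue in this particular setting: by the homogeneity $\abs{\langle v,tg\rangle}^{p'}=\abs{t}^{p'}\abs{\langle v,g\rangle}^{p'}$ you may pass to polar coordinates and replace the Gaussian integral by an integral over the compact sphere $S^{n-2}$, where Carath\'eodory applies directly to the continuous map $\theta\mapsto\ler{\abs{\langle u_r-u_s,\theta\rangle}^{p'}}_{r<s}$. So the step you identify as the main obstacle is in fact routine here.

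The paper's partial methods are combinatorial and ad hoc (cut cones for $p=1$, explicit solution of small linear systems for small spectrum); your approach is analytic and uniform in the spectrum size. In effect you have observed that the scalar problem is the classical question of $L^{p'}$-embeddability of snowflaked finite subsets of the real line, which has a known positive answer by exactly the machinery you invoke. This resolves the conjecture and, with it, the consequences the paper draws from it (notably \eqref{eq:C-p-flat-on-0-1} and the well-definedness of \eqref{eq:p-W-div-def} for all $0<p\leq 2$).
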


Note that \eqref{eq:C-p-well-ordered} holds for $\cH=\C^2,$ and that an affirmative solution of Conjecture \ref{conj:C-p-well-ordered} would imply that
\be \label{eq:C-p-flat-on-0-1}
\bC_{p}(\cH)=\bC_{p'}(\cH) \text{ for every } \cH \text{ and  every } p, p' \in (0,1]. 
\ee
Indeed, Proposition \ref{prop:C1-Cp-rel} and \eqref{eq:C-p-well-ordered} together would yield that
\be \label{eq:}
\bC_{p}(\cH)\subseteq \bC_{1}(\cH)\subseteq \bC_{p'}(\cH) \text{ and } \bC_{p'}(\cH)\subseteq \bC_{1}(\cH)\subseteq \bC_{p}(\cH).
\ee
An affirmative answer to Conjecture \ref{conj:C-p-well-ordered} would also imply that 
the concavity
$$
D_{\cA,p}^{\max \lers{p,1}}(\rho, \omega)\geq \frac{1}{2}\ler{D_{\cA,p}^{\max \lers{p,1}}(\rho, \rho)+D_{\cA,p}^{\max \lers{p,1}}(\omega, \omega)} \qquad \ler{\rho, \omega \in \sh}
$$
holds for all $0<p\leq 2$ and hence the definition of the quantum $p$-Wasserstein divergence \eqref{eq:p-W-div-def} is meaningful. %Moreover, the $(\cA,p)$-Wasserstein divergence $d_{\cA,p}(\cdot,\cdot)$ is definite whenever $\mathrm{linspan}(\cA)=\cB(\cH).$
We have the following partial result in the direction of Conjecture \ref{conj:C-p-well-ordered}.

\begin{proposition} \label{prop:c-p-3-smaller}
For any separable Hilbert space $\cH$ and for any $p \in (0,1]$ and $p' \in (0, \infty)$ we have
\be \label{eq:c-p-3-smaller}
\bC_p^{(3)}(\cH) \subseteq \bC_{p'}(\cH).
\ee
\end{proposition}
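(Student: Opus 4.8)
The plan is to mimic the proof of Proposition~\ref{prop:C1-Cp-rel}, reducing first to a single observable and then invoking the cut cone, only now with an extra twist coming from the exponent $p'$. Since $C_{\cA,p}=\sum_{A\in\cA}\abs{A\otimes I^T-I\otimes A^T}^p$, and the same additive structure holds for $\bC_{p'}$-generators, it suffices to show that for every $A\in\cL(\cH)^{sa}$ with $\#\{\mathrm{spec}(A)\}\leq 3$ there is a finite collection $\cA'\in\fcoh$ of finite-spectrum observables with
$$
\abs{A\otimes I^T-I\otimes A^T}^p=\sum_{B\in\cA'}\abs{B\otimes I^T-I\otimes B^T}^{p'}.
$$
Applying this to each member of a general $\cA$ with $\#\{\mathrm{spec}(A)\}\leq 3$ and forming the union of the resulting collections then yields $C_{\cA,p}\in\bC_{p'}(\cH)$.

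So fix such an $A$ and write its spectral decomposition $A=\sum_{r=1}^3\lambda_r Q_r$ with $Q_1+Q_2+Q_3=I$ (some $Q_r$ possibly zero). Exactly as in \eqref{eq:A-expand}, since $\lers{Q_r\otimes Q_s^T}_{r,s=1}^3$ are mutually orthogonal projections summing to $I_{\hohc}$,
$$
\abs{A\otimes I^T-I\otimes A^T}^p=\sum_{r,s=1}^3\abs{\lambda_r-\lambda_s}^p\,Q_r\otimes Q_s^T.
$$
Because $p\in(0,1]$, the map $t\mapsto t^p$ is concave on $[0,\infty)$ and vanishes at $0$, hence subadditive, so $(r,s)\mapsto\abs{\lambda_r-\lambda_s}^p$ satisfies the triangle inequality and is a semimetric on $\lers{1,2,3}$. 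Every three-point semimetric lies in the cut cone $\mathrm{CUT}_3$ (equivalently, it is $l_1$-embeddable; one may also see this by solving the associated $3\times 3$ linear system, where nonnegativity of the solution is precisely the triangle inequality). By the definition of the cut cone (cf.\ \eqref{eq:decomp-3}) there are nonnegative numbers $\lers{\gamma_J}_{J\subseteq\lers{1,2,3}}$ with
$$
\abs{\lambda_r-\lambda_s}^p=\sum_{J\subseteq\lers{1,2,3}}\gamma_J\,\abs{\ind{r\in J}-\ind{s\in J}}\qquad(r,s\in\lers{1,2,3}).
$$

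Finally, I would set $B_J:=\gamma_J^{1/p'}\sum_{r\in J}Q_r$ for each $J\subseteq\lers{1,2,3}$; each $B_J$ is self-adjoint with at most two distinct eigenvalues, hence of finite spectrum. The computation leading to \eqref{eq:BJ-expand}, run now with exponent $p'$ instead of $p$ (and using that $\abs{\ind{r\in J}-\ind{s\in J}}\in\lers{0,1}$, so the exponent is irrelevant), gives $\abs{B_J\otimes I^T-I\otimes B_J^T}^{p'}=\gamma_J\sum_{r,s}\abs{\ind{r\in J}-\ind{s\in J}}\,Q_r\otimes Q_s^T$; summing over $J$ reproduces exactly the displayed formula for $\abs{A\otimes I^T-I\otimes A^T}^p$. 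Thus $\abs{A\otimes I^T-I\otimes A^T}^p=C_{\cA',p'}$ with $\cA':=\lers{B_J}_{J\subseteq\lers{1,2,3}}$, and the reduction of the first paragraph finishes the proof.

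The operator-algebraic steps are routine — they copy \eqref{eq:A-expand} and \eqref{eq:BJ-expand} essentially verbatim — and the additivity reduction is immediate. The one genuine ingredient, and exactly where the hypothesis $\#\{\mathrm{spec}(A)\}\leq 3$ is used, is the combination of subadditivity of $t\mapsto t^p$ for $p\leq 1$ (which makes $\abs{\lambda_r-\lambda_s}^p$ a metric) with the fact that three-point metrics automatically belong to the cut cone. I expect the main obstacle to pushing this further — and hence to Conjecture~\ref{conj:C-p-well-ordered} — to be that for larger collections of eigenvalues the metric and cut cones no longer coincide, so one would need to decide directly which of the powered metrics $\abs{\lambda_r-\lambda_s}^p$ are $l_1$-embeddable, a genuinely more delicate question.
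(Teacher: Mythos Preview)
Your proof is correct and follows essentially the same route as the paper. Both reduce to a single observable with at most three eigenvalues and use that, since $p\le 1$, the function $(r,s)\mapsto\abs{\lambda_r-\lambda_s}^p$ is a semimetric on three points; both then exploit that every three-point semimetric decomposes as a nonnegative combination of cut semimetrics, with nonnegativity of the coefficients being precisely the triangle inequality. You phrase this via the cut cone machinery of Proposition~\ref{prop:C1-Cp-rel} and the fact that $\mathrm{MET}_3=\mathrm{CUT}_3$, while the paper writes out the explicit $3\times 3$ system and its solution \eqref{eq:solution} directly (which is exactly the parenthetical alternative you mention); the resulting $B_J$'s are the same scaled spectral projections in both arguments.
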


\begin{proof}
It suffices to show that for every $A \in \cL(\cH)^{sa}$ with at most three eigenvalues we can find $B_1, \dots, B_K \in \cL(\cH)^{sa}$ such that 
\be \label{eq:A-decomp}
\abs{A \otimes I^T-I\otimes A^T}^{p}=\sum_{k=1}^K \abs{B_k \otimes I^T-I\otimes B_k^T}^{p'}.
\ee
Let $A=\sum_{r=1}^3 \lambda_r P_r$ be the spectral decomposition of $A$ and let us try to find the $B_k$'s in the form $B_k=\sum_{r=1}^3 \mu_{r}^{(k)} P_r.$ That is, we have to solve the equation 
\be \label{eq:to-be-solved}
\abs{\lambda_r-\lambda_s}^{p}=\sum_{k=1}^{K} \abs{\mu_r^{(k)}-\mu_s^{(k)}}^{p'} \qquad \ler{r,s \in \lers{1,2,3}}.
\ee
Let us further impose the following relations to simplify the picture: $K:=3$ and
\be \label{eq:simplifying-assumptions}
\mu_2^{(1)}=\mu_1^{(1)}=0, \, \mu_3^{(2)}=\mu_2^{(2)}=0, \, \mu_1^{(3)}=\mu_3^{(3)}=0. 
\ee
So the equation to be solved reduces to
\be \label{eq:to-be-solved-reduced}
\abs{\lambda_1-\lambda_2}^{p}=\abs{\mu_1^{(2)}}^{p'}+\abs{\mu_2^{(3)}}^{p'}, \, 
\abs{\lambda_2-\lambda_3}^{p}=\abs{\mu_2^{(3)}}^{p'}+\abs{\mu_3^{(1)}}^{p'}, \,
\abs{\lambda_3-\lambda_1}^{p}=\abs{\mu_3^{(1)}}^{p'}+\abs{\mu_1^{(2)}}^{p'}
\ee
that admits the solution
$$
\mu_3^{(1)}=\fel \ler{\abs{\lambda_2-\lambda_3}^{p}+\abs{\lambda_3-\lambda_1}^{p}-\abs{\lambda_1-\lambda_2}^{p}}^{1/p'},
$$
$$
\mu_1^{(2)}=\fel \ler{\abs{\lambda_3-\lambda_1}^{p}+\abs{\lambda_1-\lambda_2}^{p}-\abs{\lambda_2-\lambda_3}^{p}}^{1/p'},
$$
\be \label{eq:solution}
\mu_2^{(3)}=\fel \ler{\abs{\lambda_1-\lambda_2}^{p}+\abs{\lambda_2-\lambda_3}^{p}-\abs{\lambda_3-\lambda_1}^{p}}^{1/p'}.
\ee
Note that the right-hand sides in \eqref{eq:solution} make sense as $(\alpha,\beta) \mapsto \abs{\alpha-\beta}^p$ is a metric on $\R$ for $p\leq 1$ and hence satisfies the triangle inequality.
\end{proof}

The following results also point in the direction of an affirmative solution of Conjecture \ref{conj:C-p-well-ordered}.

\begin{proposition}
Let $\cH$ be a separable Hilbert space, then for any finite $p,p'>0$ the following are equivalent:
\begin{enumerate}
    \item \label{item-prop-induction-first} For all observable $A$ on $\cH$ with finite spectrum, there exists a finite collection of observables $\lers{B_k}_{k=1}^r$ on $\cH$ with finite spectrum such that
    \be
    \abs{A\otimes I^T-I\otimes A^T}^p=\sum_{k=1}^r \abs{B_k\otimes I^T-I\otimes B_k^T}^{p'};
    \ee
    \item \label{item-prop-induction-second}For all observable $A\neq 0$ on $\cH$ with finite spectrum, there exists a finite collection of observables $\lers{B_k}_{k=1}^r$ on $\cH$ with finite spectrum, such that
    \be
    \abs{A\otimes I^T-I\otimes A^T}^p-\sum_{k=1}^r \abs{B_k\otimes I^T-I\otimes B_k^T}^{p'}=\abs{A'\otimes I^T-I\otimes A'^T}^p,
    \ee
    for some observable $A'$ on $\cH$, such that $\abs{\mathrm{spec}\ler{A'}}<\abs{\mathrm{spec}\ler{A}}$.
\end{enumerate}
\end{proposition}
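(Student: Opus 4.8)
The plan is to prove the two implications separately. The point is that condition \textup{(2)} is, up to bookkeeping, exactly the inductive step needed to prove \textup{(1)} by induction on the number of distinct eigenvalues of $A$, while conversely \textup{(1)} plainly contains \textup{(2)}. Observe first that for a scalar observable $A$ the operator $\abs{A\otimes I^T-I\otimes A^T}$ vanishes, so \textup{(1)} holds trivially for such $A$ (with, e.g., $B_1=0$); hence it is enough to treat observables with at least two distinct eigenvalues, and I do so from now on.

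First I would prove that \textup{(1)} implies \textup{(2)}. Let $A$ be an observable with finite spectrum and $\abs{\mathrm{spec}(A)}\geq 2$; in particular $A\neq 0$. Applying \textup{(1)} to $A$ yields a finite collection $\lers{B_k}_{k=1}^r$ of finite-spectrum observables with
\[
\abs{A\otimes I^T-I\otimes A^T}^p=\sum_{k=1}^r\abs{B_k\otimes I^T-I\otimes B_k^T}^{p'}.
\]
Taking $A':=0$ we get $\abs{A'\otimes I^T-I\otimes A'^T}^p=0$ and $\abs{\mathrm{spec}(A')}=1<\abs{\mathrm{spec}(A)}$, so the displayed equation is precisely the identity demanded by \textup{(2)}.

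Conversely, to prove that \textup{(2)} implies \textup{(1)} I would induct on $n:=\abs{\mathrm{spec}(A)}$. For $n=1$ the left-hand side in \textup{(1)} is $0$ and $B_1=0$ does the job (the case $n=2$ is, alternatively, exactly Proposition~\ref{prop:cost-op-2-level-obs}). Suppose now $n\geq 2$ and that \textup{(1)} is already known for all observables with fewer than $n$ distinct eigenvalues. Since $A\neq 0$, condition \textup{(2)} supplies a finite collection $\lers{B_k}_{k=1}^r$ of finite-spectrum observables together with an observable $A'$ of finite spectrum satisfying $\abs{\mathrm{spec}(A')}<n$ and
\[
\abs{A\otimes I^T-I\otimes A^T}^p=\sum_{k=1}^r\abs{B_k\otimes I^T-I\otimes B_k^T}^{p'}+\abs{A'\otimes I^T-I\otimes A'^T}^p .
\]
By the induction hypothesis, $\abs{A'\otimes I^T-I\otimes A'^T}^p=\sum_{j=1}^{r'}\abs{B'_j\otimes I^T-I\otimes (B'_j)^T}^{p'}$ for some finite collection $\lers{B'_j}_{j=1}^{r'}$ of finite-spectrum observables. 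Substituting this and passing to the union $\lers{B_k}_{k=1}^r\cup\lers{B'_j}_{j=1}^{r'}$, which is still a finite collection of finite-spectrum observables, furnishes the decomposition claimed in \textup{(1)} for $A$.

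I do not foresee a genuine obstacle here: the argument is just a finite descent. The one point that needs a word of care is that the induction in the direction \textup{(2)}$\Rightarrow$\textup{(1)} terminates --- which it does, since every invocation of \textup{(2)} strictly decreases the number of distinct eigenvalues --- together with the trivial remark that ``finite collection of observables with finite spectra'' is closed under concatenation, so the finiteness requirements in \textup{(1)} are met at the end.
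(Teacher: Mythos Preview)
Your proof is correct and follows essentially the same approach as the paper: the paper's proof is the one-liner ``$(\ref{item-prop-induction-first})\Rightarrow(\ref{item-prop-induction-second})$ is obvious, while $(\ref{item-prop-induction-second})\Rightarrow(\ref{item-prop-induction-first})$ follows by iteration and by the fact that $\#\{\mathrm{spec}(A)\}<\infty$,'' and you have simply written out that iteration as a clean induction on $\abs{\mathrm{spec}(A)}$ and made the ``obvious'' direction explicit via $A':=0$.
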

\begin{proof}
The direction $\ler{\ref{item-prop-induction-first}}\Rightarrow\ler{\ref{item-prop-induction-second}}$ is obvious, while $\ler{\ref{item-prop-induction-second}}\Rightarrow\ler{\ref{item-prop-induction-first}}$ follows by iteration and by the fact that $\#\lers{\mathrm{spec}\ler{A}}<\infty$.
\end{proof}

\begin{proposition} \label{prop:decomp-smaller-spectrum}
Let $\cH$ be a separable Hilbert space, then for any $p'\geq p>0$ and any $A=\sum_{i=1}^n \lambda_i P_i\neq c \cdot I$ observable on $\cH$ there exists an observable $B=\sum_{i=1}^n \mu_i P_i$ and an observable $A'=\sum_{i=1}^n \lambda'_i P_i$ on $\cH$, such that $\#\lers{\mathrm{spec}\ler{A'}}<\#\lers{\mathrm{spec}\ler{A}}$ and
\be
\ler{\abs{A\otimes I^T-I\otimes A^T}^p-\abs{B\otimes I^T-I\otimes B^T}^{p'}}P_i\otimes P_j^T=\abs{A'\otimes I^T-I\otimes A'^T}^p P_i\otimes P_j^T,
\ee
whenever $i\in\lers{1,2}$ or $j\in\lers{1,2}$ or $i=j$. In particular, for $\cH=\C^3$ we get $\bC_p(\C^3) \subseteq \bC_{p'}(\C^3).$
\end{proposition}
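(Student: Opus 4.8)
The plan is to exhibit $B$ and $A'$ written in the spectral projections $P_1,\dots,P_n$ of $A$, which turns every assertion into a scalar identity. For a self-adjoint $X=\sum_i x_iP_i$ and $q>0$, Borel functional calculus on $\hohc$ gives $\abs{X\otimes I^T-I\otimes X^T}^q=\sum_{i,j}\abs{x_i-x_j}^q\,P_i\otimes P_j^T$, so, read on the block $P_i\otimes P_j^T$, the claimed identity is $\abs{\lambda_i-\lambda_j}^p-\abs{\mu_i-\mu_j}^{p'}=\abs{\lambda_i'-\lambda_j'}^p$. Diagonal blocks hold trivially, so only the block $(1,2)$ and the blocks $(1,j),(2,j)$ for $j\geq3$ must be arranged. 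Accordingly I would put $\mu_1:=0$, $\mu_2:=\abs{\lambda_1-\lambda_2}^{p/p'}$ and $\lambda_1':=\lambda_2':=0$, so that block $(1,2)$ holds since $\abs{\mu_1-\mu_2}^{p'}=\abs{\lambda_1-\lambda_2}^p$. Because nothing is required of blocks with both indices $\geq3$, the indices $j\geq3$ decouple; for each it suffices to find $\mu_j\in\R$ and $\delta_j\geq0$ with
\[
\abs{\mu_1-\mu_j}^{p'}=\abs{\lambda_1-\lambda_j}^p-\delta_j^p,\qquad \abs{\mu_2-\mu_j}^{p'}=\abs{\lambda_2-\lambda_j}^p-\delta_j^p,
\]
and then to set $\lambda_j':=\delta_j$ (so $\abs{\lambda_1'-\lambda_j'}=\abs{\lambda_2'-\lambda_j'}=\delta_j$). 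The resulting $A'=\sum_{j\geq3}\delta_jP_j$ has at most $n-1$ distinct eigenvalues, as required.

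Everything thus reduces to the following three-point fact: \emph{given distinct reals $\lambda_1,\lambda_2,\lambda_3$ and $p'\geq p>0$, write $m:=\abs{\lambda_1-\lambda_2}^{p/p'}$, $b:=\abs{\lambda_1-\lambda_3}^{p/p'}$, $c:=\abs{\lambda_2-\lambda_3}^{p/p'}$; then there are $\mu_3\in\R$ and $\delta\geq0$ with $\abs{\mu_3}^{p'}=\abs{\lambda_1-\lambda_3}^p-\delta^p$ and $\abs{\mu_3-m}^{p'}=\abs{\lambda_2-\lambda_3}^p-\delta^p$.} To prove it I would study $H(t):=\abs{t}^{p'}-\abs{t-m}^{p'}$, which is continuous, strictly increasing on $[0,m]$ with $H([0,m])=[-m^{p'},m^{p'}]$, and, when $p'>1$, a strictly increasing bijection of $\R$. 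Since $p/p'\leq1$, the snowflake map $t\mapsto t^{p/p'}$ sends the metric on $\R$ to a metric, so $\{m,b,c\}$ satisfy the triangle inequalities; in particular $\abs{b-m}\leq c$ and $b+m\geq c$, whence $H(b)=b^{p'}-\abs{b-m}^{p'}\geq b^{p'}-c^{p'}$ and $H(-b)=b^{p'}-(b+m)^{p'}\leq b^{p'}-c^{p'}$, and note $b^{p'}-c^{p'}=\abs{\lambda_1-\lambda_3}^p-\abs{\lambda_2-\lambda_3}^p$. If $p'>1$, take $\mu_3$ the unique point with $H(\mu_3)=b^{p'}-c^{p'}$; monotonicity of $H$ forces $-b\leq\mu_3\leq b$. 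If $p'\leq1$ (hence $p\leq1$), then $t\mapsto t^p$ is metric-preserving too, so $\bigl|\abs{\lambda_1-\lambda_3}^p-\abs{\lambda_2-\lambda_3}^p\bigr|\leq\abs{\lambda_1-\lambda_2}^p=m^{p'}$, i.e. $b^{p'}-c^{p'}\in H([0,m])$; choosing $\mu_3\in[0,m]$ with $H(\mu_3)=b^{p'}-c^{p'}$ we get $\mu_3\geq0\geq-b$ and $\mu_3\leq b$ (either $b\geq m\geq\mu_3$, or $b<m$ and then $H(b)\geq H(\mu_3)$ by monotonicity on $[0,m]$). In all cases $\abs{\mu_3}\leq b$, so $\delta:=\bigl(\abs{\lambda_1-\lambda_3}^p-\abs{\mu_3}^{p'}\bigr)^{1/p}$ is a well-defined non-negative real (because $\abs{\mu_3}^{p'}\leq b^{p'}=\abs{\lambda_1-\lambda_3}^p$), and rearranging $H(\mu_3)=b^{p'}-c^{p'}$ yields $\abs{\lambda_2-\lambda_3}^p-\abs{\mu_3-m}^{p'}=\delta^p$ as well. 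I expect the non-negativity of $\delta$ — equivalently the estimate $\abs{\mu_3}\leq\abs{\lambda_1-\lambda_3}^{p/p'}$ — to be the only genuinely delicate point, and this is exactly where $p'\geq p$ enters.

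For $\cH=\C^3$: an observable with at most two eigenvalues already has its cost operator in $\bC_{p'}(\C^3)$ by Proposition~\ref{prop:cost-op-2-level-obs}. If $A$ has three eigenvalues then $n=3$ and the listed index pairs exhaust all nine pairs (for $n=3$ every pair has an index in $\{1,2\}$ or is diagonal), so — since $\sum_{i,j}P_i\otimes P_j^T$ is the identity of $\hohc$ — the scalar identities above add up to $\abs{A\otimes I^T-I\otimes A^T}^p=\abs{B\otimes I^T-I\otimes B^T}^{p'}+\abs{A'\otimes I^T-I\otimes A'^T}^p$ with $A'$ having at most two eigenvalues; by Proposition~\ref{prop:cost-op-2-level-obs}, $\abs{A'\otimes I^T-I\otimes A'^T}^p=\abs{A''\otimes I^T-I\otimes A''^T}^{p'}$ for a suitable two-level $A''$, so $C_{\lers{A},p}=C_{\lers{B,A''},p'}\in\bC_{p'}(\C^3)$. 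Summing over the members of an arbitrary $\cA\in\fco(\C^3)$ and using additivity of cost operators in the collection gives $\bC_p(\C^3)\subseteq\bC_{p'}(\C^3)$.
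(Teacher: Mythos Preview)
Your proof is correct and follows essentially the same route as the paper's. Both arguments make the same ansatz $\mu_1=0$, $\mu_2=\abs{\lambda_1-\lambda_2}^{p/p'}$, $\lambda_1'=\lambda_2'=0$, reduce to finding $\mu_j$ with $\abs{\mu_j}^{p'}-\abs{\mu_j-m}^{p'}=\abs{\lambda_1-\lambda_j}^p-\abs{\lambda_2-\lambda_j}^p$ subject to $\abs{\mu_j}\le b:=\abs{\lambda_1-\lambda_j}^{p/p'}$, and settle this by the intermediate value theorem on the interval $[-b,b]$. The only cosmetic difference is that you package the endpoint estimates $H(-b)\le b^{p'}-c^{p'}\le H(b)$ as the triangle inequality for the snowflake metric $\abs{x-y}^{p/p'}$, whereas the paper derives the equivalent inequalities $\abs{m-b}^{p'}\le c^{p'}\le (m+b)^{p'}$ directly from the concavity of $t\mapsto t^{p/p'}$; your separate treatment of the case $p'\le 1$ is not actually needed (IVT on $[-b,b]$ already suffices there), but it does no harm.
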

\begin{proof}
Without loss of generality, we can assume that $\lambda_1=0$, since
\be
\abs{\Tilde{A}\otimes I^T-I\otimes \Tilde{A}^T}^p=\abs{A\otimes I^T-I\otimes A^T}^p,
\ee
with $\Tilde{A}=\sum_{i=1}^n \ler{\lambda_i-\lambda_1} P_i$.
Let $\mu_1:=0$ and $\mu_2:=\abs{\lambda_2}^{\frac{p}{p'}}$. With
\be
\abs{A\otimes I^T-I\otimes A^T}^p-\abs{B\otimes I^T-I\otimes B^T}^{p'}=\sum_{i,j=1}^n\ler{\abs{\lambda_i-\lambda_j}^p-\abs{\mu_i-\mu_j}^p}P_i\otimes P_j^T:= \sum_{i,j=1}^n M'_{ij}P_i\otimes P_j^T,
\ee
we already have $M'_{12}=M'_{21}=\abs{\lambda'_1-\lambda'_2}^p$, if we choose $\lambda'_1:=\lambda'_2:=0$. Suppose now that $j>2$, we need to show that $\exists \mu_j,\lambda'_j$ such that
\be
M'_{1j}=M'_{2j}=M'_{j1}=M'_{j2}=\abs{\lambda'_j}^p=\abs{\lambda_j}^p-\abs{\mu_j}^{p'}=\abs{\lambda_2-\lambda_j}^p-\abs{\abs{\lambda_2}^{\frac{p}{p'}}-\mu_j}^{p'}.
\ee
Thus we need to show that
\be
\abs{\lambda_j}^p-\abs{\lambda_2-\lambda_j}^p=\abs{\mu_j}^{p'}-\abs{\abs{\lambda_2}^{\frac{p}{p'}}-\mu_j}^{p'}=:f\ler{\mu_j}
\ee
has a solution in $\mu_j$ with the additional condition, $\abs{\mu_j}^{p'}\leq\abs{\lambda_j}^p$. Then we can let $\lambda'_j:=\ler{\abs{\lambda_j}^p-\abs{\mu_j}^{p'}}^{\frac{1}{p}}$. $f\ler{\mu_j}$ is continuous and so it is sufficient to show that
\be
f\ler{\abs{\lambda_j}^\frac{p}{p'}}\leq\abs{\lambda_j}^p-\abs{\lambda_2-\lambda_j}^p\leq f\ler{-\abs{\lambda_j}^\frac{p}{p'}},
\ee
which is equivalent to
\be
\abs{\abs{\lambda_2}^{\frac{p}{p'}}-\abs{\lambda_j}^{\frac{p}{p'}}}^{p'}\leq\abs{\lambda_2-\lambda_j}^p\leq\abs{\abs{\lambda_2}^{\frac{p}{p'}}+\abs{\lambda_j}^{\frac{p}{p'}}}^{p'},
\ee
which follows from the concavity of $\mathrm{id}^{\frac{p}{p'}}$ on real numbers. Indeed, if $f$ is a concave function, then 
\be
\frac{d}{dx}\ler{f\ler{x_0+x}+f\ler{x_0-x}}=f'\ler{x_0+x}-f'\ler{x_0-x}\leq 0,
\ee
if $x\geq 0$ and so by $x_0:=\frac{\abs{\lambda_2}+\abs{\lambda_j}}{2}$,
\be
\abs{\abs{\lambda_2}^{\frac{p}{p'}}+\abs{\lambda_j}^{\frac{p}{p'}}}^{p'}\geq \abs{0^{\frac{p}{p'}}+\ler{\abs{\lambda_2}+\abs{\lambda_j}}^{\frac{p}{p'}}}^{p'}=\ler{\abs{\lambda_2}+\abs{\lambda_j}}^p\geq \abs{\lambda_2-\lambda_j}^p.
\ee
Whereas, if $f$ is a concave function, then 
\be
\frac{d}{dx}\ler{f\ler{x_1-x}-f\ler{x_0-x}}=f'\ler{x_0-x}-f'\ler{x_1-x}\geq 0,
\ee
if $x_1\geq x_0$ and so by $x_0:=\min\lers{\abs{\lambda_2},\abs{\lambda_j}}$ and $x_1:=\max\lers{\abs{\lambda_2},\abs{\lambda_j}}$,
\be
\abs{\abs{\lambda_2}^{\frac{p}{p'}}-\abs{\lambda_j}^{\frac{p}{p'}}}^{p'}\leq \abs{\abs{\abs{\lambda_2}-\abs{\lambda_j}}^{\frac{p}{p'}}-0^{\frac{p}{p'}}}^{p'}=\abs{\abs{\lambda_2}-\abs{\lambda_j}}^p\leq \abs{\lambda_2-\lambda_j}^p.
\ee
\end{proof}

The following proposition complements Conjecture \ref{conj:C-p-well-ordered} and asserts that the map $p \mapsto \bC_{p}(\cH)$ is not only monotone increasing but strictly monotone increasing on $[1,\infty).$

\begin{proposition}\label{prop:C-p-well-ordered-compl}
If $\mathrm{dim}(\cH)\geq 3$ and $0<p<p'<\infty$ such that also $1<p'$ then $\bC_p(\cH) \nsupseteq \bC_{p'}(\cH).$
\end{proposition}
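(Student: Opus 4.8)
The plan is to exhibit a single cost operator lying in $\bC_{p'}(\cH)$ but not in $\bC_p(\cH).$ Since $\dim(\cH)\geq 3,$ I would fix three nonzero, mutually orthogonal projections $P_1,P_2,P_3$ on $\cH$ with $P_1+P_2+P_3=I,$ and set $A':=P_3-P_1.$ This observable has the three-element spectrum $\lers{\lambda_1,\lambda_2,\lambda_3}=\lers{-1,0,1}$ with spectral projections $P_1,P_2,P_3,$ so $C:=C_{\lers{A'},p'}=\abs{A'\otimes I^T-I\otimes (A')^T}^{p'}$ belongs to $\bC_{p'}(\cH)$ (in fact to $\bC_{p'}^{(3)}(\cH)$). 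It then remains to show that $C$ is \emph{not} of the form $C_{\cA,p}$ for any finite collection $\cA=\lers{B_1,\dots,B_K}$ of finite-spectrum observables; this gives $\bC_{p'}(\cH)\nsubseteq\bC_p(\cH),$ i.e.\ $\bC_p(\cH)\nsupseteq\bC_{p'}(\cH).$

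So suppose, towards a contradiction, that $\abs{A'\otimes I^T-I\otimes (A')^T}^{p'}=\sum_{k=1}^K\abs{B_k\otimes I^T-I\otimes B_k^T}^{p}.$ The first step is to apply Lemma~\ref{lemma-commutativity} with $(p',p)$ in the roles of $(p,p')$ to conclude $\lesq{A',B_k}=0$ for every $k;$ as $P_1,P_2,P_3$ are polynomials in $A',$ each $B_k$ then commutes with every $P_i,$ so $B_k=\sum_{i=1}^3 B_k^{(i)}$ with $B_k^{(i)}:=P_iB_kP_i.$ Consequently $B_k\otimes I^T-I\otimes B_k^T,$ hence also $\abs{B_k\otimes I^T-I\otimes B_k^T}^{p},$ is block-diagonal with respect to the decomposition $\hohc=\bigoplus_{i,j=1}^3 P_i\cH\otimes(P_j\cH)^*,$ acting on the $(i,j)$-block as $\abs{B_k^{(i)}\otimes I-I\otimes (B_k^{(j)})^T}^p,$ whereas $C$ acts on that block as $\abs{\lambda_i-\lambda_j}^{p'}$ times the identity.

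Comparing the two operators block by block, the diagonal blocks ($i=j$) give $\sum_k\abs{B_k^{(i)}\otimes I-I\otimes(B_k^{(i)})^T}^p=0,$ so each summand vanishes; since an operator of the form $X\otimes I-I\otimes X^T$ on a space of the form $\cK\otimes\cK^*$ vanishes only for scalar $X$ (cf.\ the kernel computation in the proof of Lemma~\ref{lemma-commutativity}, or Remark~\ref{remark:commutativity} in finite dimension), this forces $B_k^{(i)}=c_k^{(i)}I$ with $c_k^{(i)}\in\R.$ The off-diagonal blocks then collapse to the scalar relations
\be
\abs{\lambda_i-\lambda_j}^{p'}=\sum_{k=1}^K\abs{c_k^{(i)}-c_k^{(j)}}^{p}\qquad (i,j\in\lers{1,2,3}).
\ee
Writing $D_{ij}$ for the right-hand side, $D^{\min\lers{1/p,1}}$ is a semimetric on $\lers{1,2,3}$: for $p\leq 1$ because $(x,y)\mapsto\abs{x-y}^p$ is a semimetric and sums of semimetrics are semimetrics, and for $p>1$ because $D_{ij}^{1/p}$ is the $\ell^p$-norm distance between the vectors $(c_1^{(i)},\dots,c_K^{(i)}).$ Hence, with $s:=p'\min\lers{1/p,1}=p'/\max\lers{p,1},$ the triangle inequality for the points $\lambda_1,\lambda_2,\lambda_3$ gives $2^{s}=\abs{\lambda_1-\lambda_3}^{s}\leq\abs{\lambda_1-\lambda_2}^{s}+\abs{\lambda_2-\lambda_3}^{s}=2,$ so $s\leq 1,$ i.e.\ $p'\leq\max\lers{p,1}$ --- contradicting $p<p'$ together with $1<p'.$ I expect the main obstacle to be precisely this reduction from the operator identity to the scalar relations above: Lemma~\ref{lemma-commutativity} is needed to force $\lesq{A',B_k}=0,$ and the vanishing of the diagonal blocks is needed to pin the $B_k^{(i)}$ down to scalars; after that the rest is the elementary failure of the triangle inequality for $t\mapsto t^{p'/\max\lers{p,1}}$ on the equally spaced points $-1,0,1,$ which is exactly where the hypotheses $p<p'$ and $1<p'$ are used.
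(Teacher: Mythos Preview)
Your proof is correct and follows essentially the same route as the paper: reduce via Lemma~\ref{lemma-commutativity} to scalar relations $|\lambda_i-\lambda_j|^{p'}=\sum_k|\mu_i^{(k)}-\mu_j^{(k)}|^{p}$, then test them on the equally spaced points $-1,0,1$ to obtain a contradiction. Your reduction to scalars via the vanishing diagonal blocks is a slight variant of the paper's direct appeal to Remark~\ref{remark:commutativity}, and your endgame (the semimetric $D^{\min\{1/p,1\}}$ and the inequality $2^{s}\le 2$ with $s=p'/\max\{p,1\}$) is a tidier packaging of what the paper does with an auxiliary parameter $\tilde p\in[\max\{p,1\},p')$ and the linear combination $|\lambda_1-\lambda_2|^{p'}+|\lambda_2-\lambda_3|^{p'}-2^{1-\tilde p}|\lambda_1-\lambda_3|^{p'}$; both arguments exploit exactly the same convexity/subadditivity facts.
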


\begin{proof}
Assume indirectly that $\bC_p(\cH) \supseteq \bC_{p'}(\cH)$ for some $p' \in (1, \infty)$ and $p \in (0,p').$ Let $A=\sum_{r=1}^3 \lambda_r P_r$ where $\lambda_1 <\lambda_2<\lambda_3$ and $\lers{P_r}_{r=1}^3$ is a resolution of the identity. Consider the $p'$-cost operator $\abs{A \otimes I^T-I\otimes A^T}^{p'}$ and note that by the indirect assumption it can be written as
\be \label{eq:ind-ass}
\abs{A \otimes I^T-I\otimes A^T}^{p'}=\sum_{k=1}^K \abs{B_k \otimes I^T-I\otimes B_k^T}^{p}.
\ee
By Lemma \ref{lemma-commutativity} and Remark \ref{remark:commutativity}, the above \eqref{eq:ind-ass} implies that $B_k=\sum_{r=1}^3 \mu_r^{(k)} P_r$ with some real numbers $\mu_1^{(k)},\mu_2^{(k)},\mu_3^{(k)} \in \R$ for every $k \in \lers{1, \dots, K},$ and hence \eqref{eq:ind-ass} yields that
\be \label{eq:lambda-mu}
\abs{\lambda_r-\lambda_s}^{p'}=\sum_{k=1}^{K} \abs{\mu_r^{(k)}-\mu_s^{(k)}}^p \qquad \ler{r,s \in \lers{1,2,3}}.
\ee
In particular, for every real parameter $\tilde{p}$ we have
\be \label{eq:lambda-mu-conseq}
\abs{\lambda_1-\lambda_2}^{p'}+\abs{\lambda_2-\lambda_3}^{p'}-2^{1-\tilde{p}}\abs{\lambda_1-\lambda_3}^{p'}
=\sum_{k=1}^{K} \ler{\abs{\mu_1^{(k)}-\mu_2^{(k)}}^p +\abs{\mu_2^{(k)}-\mu_3^{(k)}}^p - 2^{1-\tilde{p}} \abs{\mu_1^{(k)}-\mu_3^{(k)}}^p}.
\ee
Now let us choose the parameter $\tilde{p} \in \R$ such that $\tilde{p}\geq \max \lers{p,1}$ and $\tilde{p}<p'.$ Such a choice is possible by the assumptions on $p$ and $p'.$ Note furthermore that for $p\geq 1$ the monotonicity and the convexity of the function $t \mapsto t^p$ implies that 
$$
\abs{\mu_1^{(k)}-\mu_3^{(k)}}^p \leq \ler{\abs{\mu_1^{(k)}-\mu_2^{(k)}}+\abs{\mu_2^{(k)}-\mu_3^{(k)}}}^p
=2^p\ler{\fel \ler{\abs{\mu_1^{(k)}-\mu_2^{(k)}}+\abs{\mu_2^{(k)}-\mu_3^{(k)}}}}^p \leq
$$
$$
\leq 2 ^{p-1} \ler{\abs{\mu_1^{(k)}-\mu_2^{(k)}}^p +\abs{\mu_2^{(k)}-\mu_3^{(k)}}^p} \leq 2 ^{\tilde{p}-1} \ler{\abs{\mu_1^{(k)}-\mu_2^{(k)}}^p +\abs{\mu_2^{(k)}-\mu_3^{(k)}}^p}
$$
for every $k \in \lers{1, \dots, K}.$ Therefore, the right-hand side of \eqref{eq:lambda-mu-conseq} is non-negative for every choice of the eigenvalues $\lers{\mu_r^{(k)}}_{r,k}.$ And this is the case in the $p<1$ regime as well: it is a simple consequence of the triangle inequality --- note that $(\alpha,\beta) \mapsto \abs{\alpha-\beta}^p$ is a metric on $\R$ for $p\leq 1$ --- and the fact that $1-\tilde{p}\leq 0$ that 
$$
2^{1-\tilde{p}} \abs{\mu_1^{(k)}-\mu_3^{(k)}}^p \leq \abs{\mu_1^{(k)}-\mu_3^{(k)}}^p \leq \abs{\mu_1^{(k)}-\mu_2^{(k)}}^p +\abs{\mu_2^{(k)}-\mu_3^{(k)}}^p
$$
no matter how we choose the eigenvalues $\lers{\mu_r^{(k)}}_{r,k}.$
\par
However, the left-hand side of \eqref{eq:lambda-mu-conseq} may be negative as the following example shows. Let $\lambda_1=-1, \lambda_2=0, \lambda_3=1.$ Then 
$$
\abs{\lambda_1-\lambda_2}^{p'}+\abs{\lambda_2-\lambda_3}^{p'}-2^{1-\tilde{p}}\abs{\lambda_1-\lambda_3}^{p'}
=1+1-2^{1-\tilde{p}}\cdot 2^{p'}=2-2 \cdot 2^{(p'-\tilde{p})}=2(1-2^{(p'-\tilde{p})})
$$
which is negative as $p'>\tilde{p}.$ A contradiction, as desired.
\end{proof}

Proposition \ref{prop:c-p-3-smaller}, Proposition \ref{prop:decomp-smaller-spectrum}, and Proposition \ref{prop:C-p-well-ordered-compl} have the following consequence on $\cH=\C^3.$ 

\begin{corollary} \label{cor:c-3}
For every $p, p' \in (0, 1]$ one has
$$
\bC_p(\C^3)=\bC_{p'}(\C^3)
$$
and for every $p' \in (1,\infty)$ and $p \in (0,p')$ one has
$$
\bC_p(\C^3) \subseteq \bC_{p'}(\C^3) 
\text{ and }
\bC_p(\C^3) \nsupseteq \bC_{p'}(\C^3).
$$   
\end{corollary}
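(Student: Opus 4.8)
The plan is to assemble the statement from the three preceding propositions; the only extra observation needed is the elementary fact that every self-adjoint operator on $\C^3$ has at most three distinct eigenvalues, so that $\bC_p^{(3)}(\C^3)=\bC_p(\C^3)$ for every finite $p>0$. This is what upgrades Proposition \ref{prop:c-p-3-smaller}, a statement about three-level observables, into a statement about all observables on $\C^3$.

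First I would treat the case $p,p'\in(0,1]$. Here I would invoke Proposition \ref{prop:c-p-3-smaller} twice: once with $p$ in the role of the parameter in $(0,1]$ and $p'\in(0,\infty)$ to get $\bC_p^{(3)}(\C^3)\subseteq\bC_{p'}(\C^3)$, and once with the roles of $p$ and $p'$ exchanged to get $\bC_{p'}^{(3)}(\C^3)\subseteq\bC_{p}(\C^3)$. Using $\bC_p^{(3)}(\C^3)=\bC_p(\C^3)$ and $\bC_{p'}^{(3)}(\C^3)=\bC_{p'}(\C^3)$, the two inclusions combine to $\bC_p(\C^3)=\bC_{p'}(\C^3)$.

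Next I would handle $p'\in(1,\infty)$ and $p\in(0,p')$. The inclusion $\bC_p(\C^3)\subseteq\bC_{p'}(\C^3)$ is exactly the ``in particular'' clause at the end of Proposition \ref{prop:decomp-smaller-spectrum}, which applies for all $0<p\le p'$ and so covers this range directly (alternatively, when $p\le 1$ one could instead quote Proposition \ref{prop:c-p-3-smaller} together with $\bC_p^{(3)}(\C^3)=\bC_p(\C^3)$). For the strictness, namely $\bC_p(\C^3)\nsupseteq\bC_{p'}(\C^3)$, I would apply Proposition \ref{prop:C-p-well-ordered-compl} with $\cH=\C^3$: since $\dim\C^3=3\ge 3$ and $0<p<p'<\infty$ with $1<p'$, the hypotheses of that proposition hold verbatim, and its conclusion is precisely the desired non-inclusion.

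Since the corollary is a pure consequence of results already established, there is no substantive obstacle; the only point requiring care is to state and use the identification $\bC_p^{(3)}(\C^3)=\bC_p(\C^3)$, which is what makes the three-level results of Proposition \ref{prop:c-p-3-smaller} applicable to arbitrary observables on $\C^3$, and to keep track of which of the two ``halves'' (the chain of inclusions versus the failure of the reverse inclusion) each cited proposition supplies.
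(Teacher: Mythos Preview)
Your proposal is correct and follows exactly the route the paper intends: the corollary is stated as an immediate consequence of Propositions \ref{prop:c-p-3-smaller}, \ref{prop:decomp-smaller-spectrum}, and \ref{prop:C-p-well-ordered-compl}, and you have assembled them precisely as needed, including the necessary (and easy) identification $\bC_p^{(3)}(\C^3)=\bC_p(\C^3)$.
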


That is, we have complete understanding of the relations between the sets $\bC_p(\cH)$ for different values of $p$ if $\cH=\C^2$ or $\cH=\C^3.$
\par
We close this section with an argument demonstrating that for any separable Hilbert space $\cH,$ the set of cost operators corresponding to $p'=\infty$ contains the set of cost operators corresponding to $p$ for any finite $p.$

\begin{proposition} \label{prop:C-infty-rules}
For all $p\in (0,\infty)$
\be
\bC_p(\cH) \subseteq \bC_{\infty}(\cH).
\ee
\end{proposition}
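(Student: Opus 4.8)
The plan is to reduce the statement to the case of a single observable and then to approximate the $p$-cost operator it generates, in operator norm, by $q$-cost operators for \emph{every} $q\geq p$; this suffices, since then the threshold $p'=p$ works in the definition of $\bC_\infty(\cH)$. For the reduction, recall that $C_{\cA,p}=\sum_{A\in\cA}\abs{A\otimes I^T-I\otimes A^T}^p$, observe that $\bC_q(\cH)$ is closed under finite sums (take the union of the underlying collections; a repeated observable $A$ causes no trouble because $2\abs{A\otimes I^T-I\otimes A^T}^q=\abs{2^{1/q}A\otimes I^T-I\otimes(2^{1/q}A)^T}^q$), and use subadditivity of the operator norm. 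Hence it is enough to prove that $\abs{A\otimes I^T-I\otimes A^T}^p\in\bC_\infty(\cH)$ for a single observable $A=\sum_{r=1}^{n}\lambda_rQ_r$ with pairwise distinct $\lambda_r$ and $\{Q_r\}_{r=1}^n$ a resolution of the identity; the case $n=1$ is trivial, since then $\abs{A\otimes I^T-I\otimes A^T}^p=0=C_{\emptyset,q}$, so we may assume $n\geq 2$.

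Fix $q\geq p$ and set $\alpha:=p/q\in(0,1]$. We search for approximating observables of the form $B_j=\sum_{r=1}^{n}w_r^{(j)}Q_r$ ($j=1,\dots,K$) with $w_r^{(j)}\in\R$; writing $w_r:=\bigl(w_r^{(1)},\dots,w_r^{(K)}\bigr)\in\R^K$, the spectral computation behind \eqref{eq:BJ-expand}--\eqref{eq:A-expand} gives
\[
\sum_{j=1}^{K}\abs{B_j\otimes I^T-I\otimes B_j^T}^q=\sum_{r,s=1}^{n}\norm{w_r-w_s}_{\ell_q^K}^q\,Q_r\otimes Q_s^T
\quad\text{and}\quad
\abs{A\otimes I^T-I\otimes A^T}^p=\sum_{r,s=1}^{n}\abs{\lambda_r-\lambda_s}^p\,Q_r\otimes Q_s^T .
\]
As the $Q_r\otimes Q_s^T$ are pairwise orthogonal projections summing to the identity of $\hohc$, the operator norm of the difference of the two operators above equals $\max_{r,s}\bigl|\,\norm{w_r-w_s}_{\ell_q^K}^q-\abs{\lambda_r-\lambda_s}^p\,\bigr|$. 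So, given $\epsilon>0$, we only have to find $K\in\N$ and $w_1,\dots,w_n\in\R^K$ with $\bigl|\,\norm{w_r-w_s}_{\ell_q^K}^q-\abs{\lambda_r-\lambda_s}^p\,\bigr|\leq\epsilon$ for all $r,s$ --- equivalently, since $\abs{\lambda_r-\lambda_s}^p=\bigl(\abs{\lambda_r-\lambda_s}^\alpha\bigr)^q$, an almost-isometric embedding of the snowflaked finite metric space $\bigl(\{\lambda_1,\dots,\lambda_n\},\ (x,y)\mapsto\abs{x-y}^\alpha\bigr)$ into $\ell_q^K$.

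Such an embedding exists by two classical facts. Since $2\alpha\in(0,2]$, the kernel $(x,y)\mapsto\abs{x-y}^{2\alpha}$ is conditionally negative definite on $\R$, so by Schoenberg's theorem $\bigl(\{\lambda_r\},\abs{x-y}^\alpha\bigr)$ embeds isometrically into a Euclidean space: there are $v_1,\dots,v_n\in\R^N$ with $\norm{v_r-v_s}_2=\abs{\lambda_r-\lambda_s}^\alpha$. Moreover a Euclidean space embeds isometrically into $L^q$ for every $q\in[1,\infty)$: with i.i.d.\ standard Gaussians $g_1,\dots,g_N$ and $f_r:=c_q^{-1}\sum_i v_r^{(i)}g_i$, where $c_q:=(\E\abs{g_1}^q)^{1/q}$, the increment $f_r-f_s$ is centred Gaussian of variance $\norm{v_r-v_s}_2^2$, so $\norm{f_r-f_s}_{L^q}=\norm{v_r-v_s}_2$ and hence $\norm{f_r-f_s}_{L^q}^q=\abs{\lambda_r-\lambda_s}^p$ (equivalently, one may take a fractional Brownian motion of Hurst index $\alpha$ directly). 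Discretising the integrals $\norm{f_r-f_s}_{L^q}^q$ by empirical averages over $K$ i.i.d.\ sampled points of the Gaussian probability space and applying the strong law of large numbers to the finitely many pairs simultaneously, we obtain, for $K$ large enough, vectors $w_1,\dots,w_n\in\R^K$ with $\bigl|\,\norm{w_r-w_s}_{\ell_q^K}^q-\abs{\lambda_r-\lambda_s}^p\,\bigr|\leq\epsilon$ for all $r,s$. Then $B_j:=\sum_r w_r^{(j)}Q_r$ have finite spectrum, $C_q:=\sum_{j=1}^K\abs{B_j\otimes I^T-I\otimes B_j^T}^q\in\bC_q(\cH)$, and $\norm{\abs{A\otimes I^T-I\otimes A^T}^p-C_q}_{op}\leq\epsilon$. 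As $q\geq p$ was arbitrary, $\abs{A\otimes I^T-I\otimes A^T}^p\in\bC_\infty(\cH)$, and summing over $A\in\cA$ finishes the argument.

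The only genuinely non-elementary step is the almost-isometric $\ell_q^K$-embedding of the $\alpha$-snowflake of a one-dimensional metric, which we handle by composing Schoenberg's theorem (snowflake $\hookrightarrow$ Hilbert space) with Gaussian randomisation ($\ell_2\hookrightarrow L^q$, then discretise to $\ell_q^K$). Everything else --- the reduction to a single observable, the $Q_r\otimes Q_s^T$-bookkeeping, and the discretisation --- is routine; the one point to keep honest is that $\alpha=p/q$ must stay in $(0,1]$, which is exactly the regime $q\geq p$ in which $\abs{x-y}^\alpha$ is a metric of negative type.
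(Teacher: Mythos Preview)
Your argument is correct (the only cosmetic imprecision is the restriction to $q\in[1,\infty)$ in the Gaussian step; the same computation works verbatim for all $q>0$, or you may simply take the threshold $p'=\max\{p,1\}$, which is all the definition of $\bC_\infty$ requires). It is, however, a genuinely different and considerably heavier route than the paper's.

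The paper proceeds by first observing that for any pair of orthogonal projections $Q_1,Q_2$ with $Q_1+Q_2\le I$ and any $\lambda\ge 0$, the single three-level observable $A(q)=-\tfrac{\lambda^{1/q}}{2}Q_1+\tfrac{\lambda^{1/q}}{2}Q_2$ yields $C_{\{A(q)\},q}=\lambda(Q_1\otimes Q_2^T+Q_2\otimes Q_1^T)$ plus an error term of operator norm $\lambda/2^{q}$, which vanishes as $q\to\infty$; hence each ``dyadic'' block $\lambda(Q_1\otimes Q_2^T+Q_2\otimes Q_1^T)$ lies in $\bC_\infty(\cH)$. Since every $p$-cost decomposes as a finite nonnegative sum of such blocks via $\abs{A\otimes I^T-I\otimes A^T}^p=\sum_{r<s}\abs{\lambda_r-\lambda_s}^p(Q_r\otimes Q_s^T+Q_s\otimes Q_r^T)$, and since $\bC_\infty(\cH)$ is a cone, the inclusion follows.

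So the paper decomposes the target operator into dyadic pieces and approximates each piece by a \emph{single} explicit observable with a vanishing error as $q\to\infty$; you instead keep the target operator whole and invoke Schoenberg's theorem plus Gaussian randomisation plus discretisation to produce, for each $q\ge p$, a \emph{large} commuting family approximating it. Your approach buys a sharper quantitative statement (the threshold $p'=p$ is independent of $\varepsilon$, and in fact $\bC_p(\cH)$ lies in the operator-norm closure of $\bC_q(\cH)$ for every $q\ge p$, not just eventually), at the price of importing nontrivial embedding theory. The paper's argument is fully elementary and constructive, with the approximating collection of size $\binom{n}{2}$ rather than a large random $K$, but its threshold $p'$ depends on $\varepsilon$.
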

\begin{proof}
Let $Q_1$ and $Q_2$ be orthogonal projections over $\cH$ such that $Q_1+Q_2\leq I$. Then if $0\leq\lambda\in\mathbb{R}$, then $\lambda(Q_1\otimes Q_2^T+Q_2\otimes Q_1^T)\in C_\infty(\cH)$. To see this consider $A(p):=-\frac{\sqrt[p]{\lambda}}{2}Q_1+\frac{\sqrt[p]{\lambda}}{2}Q_2$. Then
$$
C_{\{A(p)\},p}=\lambda(Q_1\otimes Q_2^T+Q_2\otimes Q_1^T)+
$$
\be
+\frac{\lambda}{2^p}(Q_1\otimes (I-Q_1-Q_2)^T+(I-Q_1-Q_2)\otimes Q_1^T+Q_2\otimes (I-Q_1-Q_2)^T+(I-Q_1-Q_2)\otimes Q_2^T)
\ee
and
\be
\lim_{p\rightarrow \infty}\norm{C_{\{A(p)\},p}-\lambda(Q_1\otimes Q_2^T+Q_2\otimes Q_1^T)}_{op}=0.
\ee
All $\bC_p(\cH)$ and $\bC_\infty(\cH)$ have a positive linear structure, or in other words, they form a positive cone as is clear from the definition of the costs in \eqref{eq:cap-def}. From this and from the breakdown formula for the costs given in \eqref{eq:BJ-expand} and \eqref{eq:A-expand} it follows that the costs of $\lambda(Q_1\otimes Q_2^T+Q_2\otimes Q_1^T)$ generate all possible costs for all $p>0$.
\end{proof}
Note that Proposition \ref{prop:C1-Cp-rel} and Proposition \ref{prop:C-infty-rules} may be summarized as follows: for any separable Hilbert space $\cH$ and for any $p \in (0, \infty),$ the relation $\bC_1(\cH) \subseteq \bC_p(\cH) \subseteq \bC_{\infty}(\cH)$ holds true.
%================================================================================
%================================================================================
\section{Triangle inequality is inherited by larger parameter values}
In this section we present another phenomenon that motivates us to study the relation between sets of cost operators, and which can be summarized by saying that the triangle inequality is inherited by larger parameter values as long as the set of all possible cost operators does not increase. The precise statement is formalized in the following Proposition.

\begin{proposition} \label{prop:trineq-hier}
Let $0 < p \leq p' <\infty,$ and let $\cO, \cO' \subseteq \fcoh$ with the property that 
\be \label{eq:cost-opers-relation}
\lers{C_{\cA',p'} \, \middle| \, \cA' \in \cO'} \subseteq \lers{C_{\cA,p} \, \middle| \, \cA \in \cO}.
\ee
Assume that the triangle inequality
\be \label{eq:trineq-p}
d_{\cA,p}(\rho,\omega) \leq d_{\cA,p}(\rho,\tau)+d_{\cA,p}(\tau,\omega) \qquad \ler{\rho, \tau, \omega \in \cS(\cH)}
\ee
holds whenever $\cA \in \cO.$ Then we have the triangle inequality
\be \label{eq:trineq-p-prime}
d_{\cA',p'}(\rho,\omega) \leq d_{\cA',p'}(\rho,\tau)+d_{\cA',p'}(\tau,\omega) \qquad \ler{\rho, \tau, \omega \in \cS(\cH)}
\ee
for all $\cA' \in \cO'.$
\end{proposition}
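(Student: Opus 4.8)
The plan is to exploit the fact that the divergence $d_{\cA,p}$ depends on the pair $(\cA,p)$ only through the cost operator $C_{\cA,p}$ and the scalar exponent $\min\{1/p,1\}$. First I would record the elementary identity $\min\{1/p,1\}\cdot\max\{p,1\}=1$ (check the cases $p\le 1$ and $p\ge 1$ separately), which by \eqref{eq:p-W-dist-def} gives
\[
D_{\cA,p}^{\max\{p,1\}}(\rho,\omega)=\inf_{\Pi\in\cC(\rho,\omega)}\tr_{\hohc}\lesq{\Pi\,C_{\cA,p}}=:T_{\cA,p}(\rho,\omega),
\]
a quantity which visibly depends on $(\cA,p)$ only via the operator $C_{\cA,p}$. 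Writing $g_{\cA,p}(\rho,\omega):=T_{\cA,p}(\rho,\omega)-\fel\ler{T_{\cA,p}(\rho,\rho)+T_{\cA,p}(\omega,\omega)}$, definition \eqref{eq:p-W-div-def} becomes simply $d_{\cA,p}=g_{\cA,p}^{\,\min\{1/p,1\}}$, and $g_{\cA,p}$ too depends only on $C_{\cA,p}$.

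Next I would fix $\cA'\in\cO'$ and, using hypothesis \eqref{eq:cost-opers-relation}, choose $\cA\in\cO$ with $C_{\cA,p}=C_{\cA',p'}$; then $g_{\cA',p'}=g_{\cA,p}=:g$ as a function of pairs of states. Because the triangle inequality \eqref{eq:trineq-p} is assumed to hold for this $\cA$, the divergence $d_{\cA,p}$ is in particular well defined, which forces $g(\rho,\omega)\ge 0$ for all $\rho,\omega\in\cS(\cH)$; hence $d_{\cA',p'}=g^{\,\min\{1/p',1\}}$ is well defined as well. Put $e:=\min\{1/p,1\}$ and $e':=\min\{1/p',1\}$. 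Since $p\le p'$ and $t\mapsto\min\{1/t,1\}$ is nonincreasing on $(0,\infty)$, we have $0<e'\le e\le 1$, so $r:=e'/e\in(0,1]$.

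Finally I would upgrade the inequality from exponent $e$ to exponent $e'$. From \eqref{eq:trineq-p} one has $g(\rho,\omega)^{e}\le g(\rho,\tau)^{e}+g(\tau,\omega)^{e}$ for all $\rho,\tau,\omega\in\cS(\cH)$; raising to the power $r\in(0,1]$ and invoking the monotonicity of $t\mapsto t^{r}$ together with its subadditivity $(\xi+\eta)^{r}\le\xi^{r}+\eta^{r}$ for $\xi,\eta\ge 0$ yields
\begin{align*}
d_{\cA',p'}(\rho,\omega)=g(\rho,\omega)^{e'}=\ler{g(\rho,\omega)^{e}}^{r}
&\le\ler{g(\rho,\tau)^{e}+g(\tau,\omega)^{e}}^{r}\\
&\le g(\rho,\tau)^{e'}+g(\tau,\omega)^{e'}=d_{\cA',p'}(\rho,\tau)+d_{\cA',p'}(\tau,\omega),
\end{align*}
which is \eqref{eq:trineq-p-prime}; since $\cA'\in\cO'$ was arbitrary, we are done. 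The argument carries no genuine analytic difficulty: the only slightly delicate points are the ``exponent-free'' reformulation of the transport cost via $\min\{1/p,1\}\max\{p,1\}=1$, and the observation that hypothesis \eqref{eq:cost-opers-relation} is exactly what is needed to transport the functional $g$ verbatim between the two parameter regimes, after which the classical ``a concave power of a metric is a metric'' trick finishes the job.
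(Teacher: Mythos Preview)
Your proposal is correct and follows essentially the same route as the paper: from $C_{\cA,p}=C_{\cA',p'}$ deduce that the ``exponent-free'' divergence $g$ coincides for the two parameter pairs, and then pass from the triangle inequality for $g^{\min\{1/p,1\}}$ to that for $g^{\min\{1/p',1\}}$ via monotonicity and subadditivity of $t\mapsto t^{r}$ with $r=\min\{1/p',1\}/\min\{1/p,1\}\in(0,1]$. Your only addition to the paper's argument is the explicit observation that the assumed triangle inequality forces $g\ge 0$ (hence $d_{\cA',p'}$ is well defined), a point the paper leaves implicit.
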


\begin{proof}
Let $\rho,\omega,\tau \in \sh$ and $\cA' \in \cO'.$ By the definition of quantum Wasserstein divergences \eqref{eq:p-W-div-def}, the left-hand side of \eqref{eq:trineq-p-prime} reads as
\be \label{eq:expand-lhs}\
d_{\cA',p'}(\rho,\omega)=\ler{D_{\cA',p'}^{\max \lers{p',1}}(\rho, \omega)-\frac{1}{2}\ler{D_{\cA',p'}^{\max \lers{p',1}}(\rho, \rho)+D_{\cA',p'}^{\max \lers{p',1}}(\omega, \omega)}}^{\min\lers{1/p',1}}.
\ee
By the assumption \eqref{eq:cost-opers-relation}, there is an $\cA \in \cO$ such that 
\be \label{eq:equal-of-costs}
C_{\cA,p}=C_{\cA',p'}
\ee
and hence by the definition of the quantum $p$-Wasserstein distance \eqref{eq:p-W-dist-def},
\be \label{eq:equal-of-dists}
D_{\cA,p}^{\max \lers{p,1}}\ler{\cdot,\cdot} \equiv D_{\cA',p'}^{\max \lers{p',1}}\ler{\cdot,\cdot}
\ee
on $\sh \times \sh,$ and consequently, by the definition of the quantum $p$-Wasserstein divergence \eqref{eq:p-W-div-def},
\be \label{eq:equal-of-divergences}
d_{\cA,p}^{\max \lers{p,1}}\ler{\cdot,\cdot} \equiv d_{\cA',p'}^{\max \lers{p',1}}\ler{\cdot,\cdot}
\ee 
on $\sh \times \sh.$ Therefore, with this $\cA$ in hand, we can write the right-hand side of \eqref{eq:expand-lhs} as
\be \label{eq:serc}
\ler{
\ler{D_{\cA,p}^{\max \lers{p,1}}(\rho, \omega)-\frac{1}{2}\ler{D_{\cA,p}^{\max \lers{p,1}}(\rho, \rho)+D_{\cA,p}^{\max \lers{p,1}}(\omega, \omega)}}^{\min\lers{1/p,1}}
}^{\frac{\min\lers{1/p',1}}{\min\lers{1/p,1}}}
\ee
which is equal to
$$
\ler{d_{\cA,p}(\rho, \omega)}^{\frac{\min\lers{1/p',1}}{\min\lers{1/p,1}}}
$$
by \eqref{eq:p-W-div-def}.
We have $\min\lers{1/p',1} \leq \min\lers{1/p,1}$ as $p \leq p'$ by assumption, and hence the function
\be \label{eq:power-function}
[0, \infty) \ni t \mapsto t^{\frac{\min\lers{1/p',1}}{\min\lers{1/p,1}}}
\ee
is subadditive. Subadditivity follows from concavity and from the fact that it maps $0$ to itself. By the triangle inequality \eqref{eq:trineq-p}, the obvious monotonicity of \eqref{eq:power-function} and the subadditivity of \eqref{eq:power-function} we get

$$
\ler{d_{\cA,p}(\rho, \omega)}^{\frac{\min\lers{1/p',1}}{\min\lers{1/p,1}}}
\leq \ler{d_{\cA,p}(\rho,\tau)+d_{\cA,p}(\tau,\omega)}^{\frac{\min\lers{1/p',1}}{\min\lers{1/p,1}}}
\leq 
$$
\be \label{eq:subadd}
\leq \ler{d_{\cA,p}(\rho,\tau)}^{\frac{\min\lers{1/p',1}}{\min\lers{1/p,1}}}
+ \ler{d_{\cA,p}(\tau,\omega)}^{\frac{\min\lers{1/p',1}}{\min\lers{1/p,1}}}
\ee
Referring to the definition \eqref{eq:p-W-div-def} of quantum $p$-Wasserstein divergences again, we can write the right-hand side of \eqref{eq:subadd} as 

$$
\ler{
D_{\cA,p}^{\max \lers{p,1}}(\rho, \tau)
-\frac{1}{2}\ler{D_{\cA,p}^{\max \lers{p,1}}(\rho, \rho)+D_{\cA,p}^{\max \lers{p,1}}(\tau, \tau)}
}^{\min\lers{\frac{1}{p'},1}}+
$$
$$
+
\ler{
D_{\cA,p}^{\max \lers{p,1}}(\tau,\omega)
-\frac{1}{2}\ler{D_{\cA,p}^{\max \lers{p,1}}(\tau, \tau)+D_{\cA,p}^{\max \lers{p,1}}(\omega, \omega)}
}^{\min\lers{\frac{1}{p'},1}}
$$
which is equal by \eqref{eq:equal-of-dists} to
$$
\ler{
D_{\cA',p'}^{\max \lers{p',1}}(\rho, \tau)
-\frac{1}{2}\ler{D_{\cA',p'}^{\max \lers{p',1}}(\rho, \rho)+D_{\cA',p'}^{\max \lers{p',1}}(\tau, \tau)}
}^{\min\lers{\frac{1}{p'},1}}+
$$
$$
+
\ler{
D_{\cA',p'}^{\max \lers{p',1}}(\tau, \omega)
-\frac{1}{2}\ler{D_{\cA',p'}^{\max \lers{p',1}}(\tau, \tau)+D_{\cA',p'}^{\max \lers{p',1}}(\omega, \omega)}
}^{\min\lers{\frac{1}{p'},1}}
$$
\be
=d_{\cA',p'}(\rho, \tau)+d_{\cA',p'}(\tau, \omega). 
\ee
So we started from the left-hand side of \eqref{eq:trineq-p-prime} and we arrived at the right-hand side of \eqref{eq:trineq-p-prime} by a chain of estimations from above, hence the proof is complete.
\end{proof}

%================================================================================
%================================================================================
\section{$p=2$ as a possible threshold for the triangle inequality}

The following observation concerning quadratic Wasserstein divergences formalized in Proposition \ref{prop:2-W-div-on-pure-states} will play an important role in the argument showing that there is no triangle inequality in general for $p$-Wasserstein divergences for parameter values $p<2.$
\par 
The quadratic quantum Wasserstein distance can be written as
$$
D_{\cA,2}^2\ler{\rho, \omega}=\inf_{\Pi \in \cC(\rho,\omega)}
\lers{
\tr_{\hohc}\left[\Pi \ler{\sum_{k=1}^K \ler{A_k\otimes I^T-I \otimes A_k^T}^2}\right]
}=
$$
\be \label{eq:2-Wass-quant-def}
=
\inf_{\Pi \in \cC(\rho,\omega)}
\lers{
\sum_{k=1}^K
\ler{\tr_{\cH}\left[\rho A_k^2+\omega A_k^2\right]-2 \tr_{\hohc} \left[\Pi \cdot A_k \otimes A_k^T \right]}
}.
\ee
There is an explicit one-to-one correspondence between the couplings of $\rho, \omega \in \sh$ and quantum channels sending $\rho$ to $\omega$ \cite{DPT-AHP,DPT-lecture-notes}. An important consequence of this correspondence is that \eqref{eq:2-Wass-quant-def} has an alternative form that refers to channels instead of couplings:
\be \label{eq:qw-dist-channel}
D_{\cA,2}^2\ler{\rho, \omega}
=\inf_{\Phi\in\text{CPTP}(\rho,\omega)}
\lers{
\sum_{k=1}^K
\tr_{\cH}\lesq{\rho A_k^2+\omega A_k^2-2 \rho^{\fel} A_k \rho^{\fel}\Phi^\dagger\ler{A_k}}
}.
\ee
An important feature of the quadratic Wasserstein distances is that the distance of a state $\rho$ from itself (which may be positive) is always realized by the identity channel --- see \cite[Corollary 1]{DPT-AHP}.
Therefore, the quadratic Wasserstein distance of a state $\rho \in \sh$ from itself admits the explicit form  
\be \label{eq:quad-self-dist-explicit}
D_{\cA,2}^2\ler{\rho,\rho}=2 \sum_{k=1}^K\tr_{\cH}\lesq{\rho A_k^2-\ler{\rho^{\fel}A_k}^2}.
\ee
By definition, see \eqref{eq:p-W-div-def}, we have
\be \label{eq:2-W-div-def}
d_{\cA,2}^2(\rho,\omega)=D_{\cA,2}^2(\rho, \omega)-\fel \ler{D_{\cA,2}^2(\rho, \rho)+D_{\cA,2}^2(\omega, \omega)} \qquad \ler{\rho, \omega \in \sh}
\ee 
and taking both \eqref{eq:qw-dist-channel} and \eqref{eq:quad-self-dist-explicit} into account we get that
\be \label{eq:2-W-div-explicit}
d_{\cA,2}^2\ler{\rho, \omega}
=\inf_{\Phi\in\text{CPTP}(\rho,\omega)}
\lers{
\sum_{k=1}^K
\tr_{\cH}\lesq{\ler{\rho^{\fel} A_k}^2+\ler{\omega^{\fel} A_k}^2-2 \rho^{\fel} A_k \rho^{\fel} \Phi^\dagger\ler{A_k}}
}.
\ee
We note that if either $\rho$ or $\omega$ is pure then the only coupling is the tensor product $\omega\otimes\rho^T$ and the only corresponding channel is $\Phi\left(\cdot\right)=\omega\tr\left(\cdot\right)$, with $\Phi^\dagger\left(\cdot\right)=I\tr\omega\left(\cdot\right).$
Consequently, in this case
\be \label{eq:2-W-div-indep-coupling}
d_{\cA,2}^2(\rho,\omega)=\sum_{k=1}^K\ler{\tr_{\cH} \lesq{\ler{\rho^{\fel} A_k}^2}+\tr_{\cH} \lesq{\ler{\omega^{\fel} A_k}^2}-2 \tr_{\cH} \lesq{\rho A_k}\tr_{\cH} \lesq{\omega A_k}}.
\ee
Let us denote by $\cP_1(\cH)$ the set of rank-one ortho-projections on $\cH,$ that is, the set of pure states, and let us turn to the case when both $\rho$ and $\omega$ are pure states. A direct computation shows that $\tr_{\cH}\lesq{\ler{X^{\fel}Y}^2}=\ler{\tr_{\cH}\lesq{XY}}^2$ whenever $X \in \cP_1(\cH)$ and $Y \in \cL(\cH)^{sa}.$ Therefore, \eqref{eq:2-W-div-indep-coupling} leads to the following useful computational rule.

\begin{proposition} \label{prop:2-W-div-on-pure-states}
For any $\cA=\lers{A_1,\dots,A_K} \in \fcoh$ and for all pure states $\rho, \omega \in \cP_1(\cH),$  the quadratic quantum Wasserstein divergence has the following simple form:
\be \label{eq:2-W-div-on-pure-states}
d_{\cA,2}^2(\rho,\omega)=\sum_{k=1}^{K} \ler{\tr_{\cH}\lesq{\rho A_k}-\tr_{\cH}\lesq{\omega A_k}}^2.
\ee
\end{proposition}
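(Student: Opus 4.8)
The plan is to read off \eqref{eq:2-W-div-on-pure-states} directly from \eqref{eq:2-W-div-indep-coupling}, which is already available here because $\rho$ is pure. The only extra ingredient is the pointwise identity $\tr_{\cH}\lesq{\ler{X^{\fel}Y}^2}=\ler{\tr_{\cH}\lesq{XY}}^2$ valid for any rank-one projection $X\in\cP_1(\cH)$ and any self-adjoint $Y,$ which I would prove first and then apply with $X=\rho$ and with $X=\omega.$

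To establish the identity, write $X=\ket{\psi}\bra{\psi}$ with $\norm{\psi}=1.$ Since $X$ is a projection, $X^{\fel}=X,$ hence $X^{\fel}Y=\ket{\psi}\bra{\psi}Y$ and $\ler{X^{\fel}Y}^2=\ket{\psi}\bra{\psi}Y\ket{\psi}\bra{\psi}Y=\bra{\psi}Y\ket{\psi}\cdot\ket{\psi}\bra{\psi}Y.$ Taking the trace gives $\tr_{\cH}\lesq{\ler{X^{\fel}Y}^2}=\bra{\psi}Y\ket{\psi}\cdot\tr_{\cH}\lesq{\ket{\psi}\bra{\psi}Y}=\ler{\bra{\psi}Y\ket{\psi}}^2=\ler{\tr_{\cH}\lesq{XY}}^2,$ as wanted. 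If $Y$ is unbounded this is read under the tacit hypothesis $\psi\in\mathrm{dom}(Y),$ i.e. precisely in the regime where all the quantities entering the divergence are finite and the statement is meaningful.

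Applying this identity in \eqref{eq:2-W-div-indep-coupling} twice — once with $X=\rho,\,Y=A_k$ and once with $X=\omega,\,Y=A_k,$ both being rank-one projections — converts the first two terms of each summand into $\ler{\tr_{\cH}\lesq{\rho A_k}}^2$ and $\ler{\tr_{\cH}\lesq{\omega A_k}}^2,$ so that
\be
d_{\cA,2}^2(\rho,\omega)=\sum_{k=1}^{K}\ler{\ler{\tr_{\cH}\lesq{\rho A_k}}^2+\ler{\tr_{\cH}\lesq{\omega A_k}}^2-2\tr_{\cH}\lesq{\rho A_k}\tr_{\cH}\lesq{\omega A_k}}=\sum_{k=1}^{K}\ler{\tr_{\cH}\lesq{\rho A_k}-\tr_{\cH}\lesq{\omega A_k}}^2,
\ee
which is \eqref{eq:2-W-div-on-pure-states}. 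I do not expect any genuine obstacle: the argument is a two-line substitution into a formula established earlier in the section, and the only point requiring minimal care is the operator-domain bookkeeping for unbounded $A_k,$ which is harmless since those conditions are implicit in the finiteness of the divergence itself.
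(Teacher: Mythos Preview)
Your proposal is correct and follows exactly the same route as the paper: the authors state that a direct computation gives $\tr_{\cH}\lesq{\ler{X^{\fel}Y}^2}=\ler{\tr_{\cH}\lesq{XY}}^2$ for $X\in\cP_1(\cH)$ and self-adjoint $Y,$ and then plug this into \eqref{eq:2-W-div-indep-coupling} to obtain \eqref{eq:2-W-div-on-pure-states}. Your write-up merely spells out that direct computation, so there is no substantive difference.
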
 
Note that Proposition \ref{prop:2-W-div-on-pure-states} has the following interesting interpretation: the $2$-Wasserstein divergence $d_{\cA,2}(\cdot,\cdot)$ on $\cP_1(\cH) \times \cP_1(\cH)$ is precisely the pull-back of the Euclidean metric on $\R^K$ by the map
\be \label{eq:generalized-Bloch}
\cP_1(\cH) \ni \rho \mapsto \ler{\tr_{\cH}\lesq{\rho A_k}}_{k=1}^K \in \R^K.
\ee
Note furthermore that in the special case $\cH=\C^2$ and $\cA=\lers{\sigma_1,\sigma_2,\sigma_3}$ the vector $\ler{\tr_{\C^2} \lesq{\rho \sigma_j}}_{k=1}^3$ is the Bloch vector of $\rho$ (see \eqref{eq:bloch-vec-def}), so we may consider the vector $\ler{\tr_{\cH}\lesq{\rho A_k}}_{k=1}^K \in \R^K$ as a generalized Bloch vector of $\rho \in \sh$ determined by $\cA=\lers{A_1, \dots,A_K}.$

An interesting consequence of the statements proved in the previous sections and our earlier work \cite{BPTV-metric-24} is that on quantum bits, the triangle inequality holds for all parameters $p \geq 2.$

\begin{theorem} \label{thm:p-trineq-qubit}
Let $\cH=\C^2,$ let $\cA' \in \fco(\C^2)$ be an arbitrary finite collection of observables on $\C^2,$ and let $p' \geq 2.$ Then the triangle inequality
\be \label{eq:p-trineq-qubit}
d_{\cA',p'}(\rho,\omega) \leq d_{\cA',p'}(\rho,\tau)+d_{\cA',p'}(\tau,\omega)
\ee
holds for every $\rho, \tau, \omega \in \cS(\C^2).$
\end{theorem}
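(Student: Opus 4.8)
The plan is to deduce the statement for a general parameter $p'\geq 2$ from the already-known quadratic case by means of the hierarchy result, Proposition~\ref{prop:trineq-hier}, taking advantage of the fact that on a single qubit all the cost-operator sets $\bC_p(\C^2)$ coincide. The first ingredient is the base case $p'=2$: by our earlier work \cite{BPTV-metric-24}, the quadratic quantum Wasserstein divergence $d_{\cA,2}$ satisfies the triangle inequality on $\cS(\C^2)\times\cS(\C^2)\times\cS(\C^2)$ for every $\cA\in\fco(\C^2)$. No auxiliary assumption is needed here, since every self-adjoint operator on $\C^2$ has at most two eigenvalues and hence automatically has finite spectrum.

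Next I would match the sets of admissible cost operators. Because every element of $\cL(\C^2)^{sa}$ has finite --- indeed at most two-point --- spectrum, the spectral finiteness clause in \eqref{eq:c-p-def} is vacuous on $\C^2$, so $\lers{C_{\cA,p}\,\middle|\,\cA\in\fco(\C^2)}=\bC_p(\C^2)$ for every finite $p>0$. Corollary~\ref{cor:qubit-cost-ops-equal} then gives $\bC_2(\C^2)=\bC_{p'}(\C^2)$ for the given $p'\geq 2$, whence
$$
\lers{C_{\cA',p'}\,\middle|\,\cA'\in\fco(\C^2)}=\bC_{p'}(\C^2)=\bC_2(\C^2)=\lers{C_{\cA,2}\,\middle|\,\cA\in\fco(\C^2)}.
$$
In particular, the inclusion hypothesis \eqref{eq:cost-opers-relation} of Proposition~\ref{prop:trineq-hier} is satisfied with $p=2\leq p'$ and $\cO=\cO'=\fco(\C^2)$.

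Finally I would apply Proposition~\ref{prop:trineq-hier} with these choices. Since the triangle inequality \eqref{eq:trineq-p} holds for $d_{\cA,2}$ for all $\cA\in\fco(\C^2)$ by the first step, the proposition yields the triangle inequality \eqref{eq:trineq-p-prime} for $d_{\cA',p'}$ for all $\cA'\in\fco(\C^2)$, which is precisely \eqref{eq:p-trineq-qubit}. The argument introduces no new analytic difficulty: the whole substance is carried by the cited qubit metric property of \cite{BPTV-metric-24} at $p=2$ together with Corollary~\ref{cor:qubit-cost-ops-equal}. The one point that deserves a moment's care --- and the natural place for a slip --- is checking that the ``finite spectrum'' conditions in the definition of $\bC_p$ and in the hypotheses of Proposition~\ref{prop:trineq-hier} are trivially met on $\C^2$, so that the set identity from Corollary~\ref{cor:qubit-cost-ops-equal} feeds directly into the hierarchy proposition.
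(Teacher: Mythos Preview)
Your argument is correct and follows essentially the same route as the paper's own proof: invoke the quadratic triangle inequality from \cite{BPTV-metric-24}, use Corollary~\ref{cor:qubit-cost-ops-equal} to match the cost-operator sets on $\C^2$, and then apply Proposition~\ref{prop:trineq-hier} with $p=2$, $\cO=\cO'=\fco(\C^2)$. Your additional remark that the finite-spectrum clause is vacuous on $\C^2$ is a helpful clarification but not a departure from the paper's approach.
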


\begin{proof}
We know by analytical arguments and numerical tests \cite{BPTV-metric-24} that
\be \label{eq:trineq-quadratic}
d_{\cA,2}(\rho,\omega) \leq d_{\cA,2}(\rho,\tau)+d_{\cA,2}(\tau,\omega) \qquad \ler{\rho, \tau, \omega \in \cS(\cH)}
\ee
for every $\cH$ and $\cA \in \fcoh.$ In particular, \eqref{eq:trineq-quadratic} holds in the special case $\cH=\C^2$ for all $\cA \in \fco(\C^2).$ We know from Corollary \ref{cor:qubit-cost-ops-equal} that $\bC_{p'}(\C^2) \subseteq \bC_2(\C^2)$ for every $p' \geq 2.$ Therefore, if we choose both $\cO$ and $\cO'$ to be $\fco(\C^2)$ and $p=2$ then the assumption \eqref{eq:cost-opers-relation} in Proposition \ref{prop:trineq-hier} is satisfied. Consequently, the desired inequality \eqref{eq:p-trineq-qubit} follows from \eqref{eq:trineq-quadratic} and Proposition \ref{prop:trineq-hier}. 
\end{proof}

On the other hand, we have counterexamples showing that the triangle inequality does not hold in general for $(\cA,p)$-Wasserstein divergences.

\begin{proposition} \label{prop:no-trineq-in-general}
For every $0<p<2$ there is $\cH$ and $\cA \in \fcoh$ and $\rho,\tau,\omega \in \sh$ such that
\be \label{eq:no-trineq-in-general}
d_{\cA,p}(\rho,\omega) > d_{\cA,p}(\rho,\tau)+d_{\cA,p}(\tau, \omega).
\ee
\end{proposition}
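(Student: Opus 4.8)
The plan is to reduce the statement to pure states, where the infimum over quantum couplings collapses, and then to the qubit with a single two‑level observable, where the divergence becomes completely explicit. Recall the observation made just before Proposition~\ref{prop:2-W-div-on-pure-states}: if at least one of two states is pure, then the only coupling between them is the product state. Consequently, if $\rho,\tau,\omega\in\sh$ are all pure, then for each of the three pairs the optimal coupling occurring in \eqref{eq:d-a-p-p-form} is the trivial one, so that
$$
d_{\cA,p}^{\max\lers{p,1}}(\alpha,\beta)=\tr_{\hohc}\lesq{\ler{\beta\otimes\alpha^T-\fel\ler{\alpha\otimes\alpha^T+\beta\otimes\beta^T}}C_{\cA,p}}
$$
for all pure $\alpha,\beta\in\sh$. (For a general $\cA$ this is an energy‑distance‑type maximum mean discrepancy built from the kernel $\abs{x-y}^p$, but a single two‑level observable will suffice.)

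I would then take $\cH=\C^2$ and $\cA=\lers{P}$, where $P$ is a rank‑one orthogonal projection. The self‑adjoint operator $P\otimes I^T-I\otimes P^T$ has nonzero eigenvalues $\pm1$, so its absolute value raised to any power $p>0$ equals the spectral projection onto the corresponding eigenspaces, namely $C_{\cA,p}=P\otimes(I-P)^T+(I-P)\otimes P^T$, independently of $p$. Writing $s_\gamma:=\tr_{\C^2}[\gamma P]$ for a state $\gamma$, a short trace computation gives $\tr_{\hohc}\lesq{(\beta\otimes\alpha^T)C_{\cA,p}}=s_\alpha+s_\beta-2s_\alpha s_\beta$; substituting this into the pure‑state formula above yields, for pure $\rho,\omega$,
$$
d_{\cA,p}^{\max\lers{p,1}}(\rho,\omega)=\ler{s_\rho-s_\omega}^2\geq 0,\qquad d_{\cA,p}(\rho,\omega)=\abs{s_\rho-s_\omega}^{\alpha},\qquad \alpha:=2\min\lers{\frac{1}{p},1}.
$$
So on pure qubit states the $(\cA,p)$‑divergence is the $\alpha$‑th power of the Euclidean distance between the expectation values, and $\alpha>1$ holds precisely when $p<2$ (with $\alpha=1$ at $p=2$, consistent with $p=2$ being the threshold).

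To finish, I would place three expectation values collinearly in $[0,1]$ by pure states: fix $p\in(0,2)$ and take $P=\ket{0}\bra{0}$, $\omega=\ket{0}\bra{0}$, $\rho=\ket{1}\bra{1}$, and $\tau=\ket{+}\bra{+}$ with $\ket{+}=\frac{1}{\sqrt{2}}\ler{\ket{0}+\ket{1}}$, so that $s_\omega=1$, $s_\rho=0$, $s_\tau=\fel$. Then
$$
d_{\cA,p}(\rho,\omega)=1,\qquad d_{\cA,p}(\rho,\tau)+d_{\cA,p}(\tau,\omega)=2\cdot\ler{\fel}^{\alpha}=2^{1-\alpha}<1
$$
because $\alpha>1$, which is exactly the required failure \eqref{eq:no-trineq-in-general} of the triangle inequality — already on $\C^2$ with a single observable.

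I do not anticipate a genuine obstacle here. The two load‑bearing ideas are: (i) passing to pure states removes the otherwise intractable infimum over quantum couplings and leaves a closed‑form expression; and (ii) the elementary fact that this expression is a strictly super‑additive power of a Euclidean distance — a ``snowflake with exponent greater than one'' — which automatically violates the triangle inequality on any non‑degenerate collinear triple. The only thing to verify is the routine identity $\tr_{\hohc}\lesq{(\beta\otimes\alpha^T)C_{\cA,p}}=s_\alpha+s_\beta-2s_\alpha s_\beta$ and the arithmetic it entails.
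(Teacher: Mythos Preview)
Your argument is correct and in fact cleaner than the paper's. Both proofs work on $\C^2$ with pure states so that all couplings are trivial, but the paper uses the full Pauli triple $\cA'=\lers{\sigma_1,\sigma_2,\sigma_3}$, invokes Proposition~\ref{prop:2-W-div-on-pure-states} to identify $d_{\cA',2}$ with the Euclidean Bloch distance, then appeals to Corollary~\ref{cor:qubit-cost-ops-equal} to transfer the quadratic cost operator to an arbitrary $p$, and finally uses an $\varepsilon$-approximate collinear triple on the Bloch sphere together with a logarithmic estimate to force the failure. You bypass all of this: taking a single rank-one projection $P$ makes $C_{\{P\},p}$ literally independent of $p$ (a direct instance of Proposition~\ref{prop:cost-op-2-level-obs} that you verify by hand), and the resulting divergence on pure states is exactly $\abs{s_\rho-s_\omega}^{2\min\{1/p,1\}}$, so an \emph{exact} midpoint triple $s_\rho=0$, $s_\tau=\fel$, $s_\omega=1$ already gives $1>2^{1-\alpha}$ with no $\varepsilon$ needed. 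The paper's route has the advantage of exercising the general machinery developed earlier (and showing the counterexample persists for the geometrically natural Pauli cost), whereas your route isolates the mechanism---``snowflaking a metric with exponent $>1$ destroys the triangle inequality''---with minimal overhead.
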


\begin{proof}
Let $\cH=\C^2, \, p'=2,$ and $\cA'=\lers{\sigma_1, \sigma_2, \sigma_3},$ where the $\sigma_j$'s are the Pauli operators
\be \label{eq:pauli}
\sigma_1=\lesq{\ba{cc} 0 & 1  \\  1 & 0  \ea}\qquad \sigma_2=\lesq{\ba{cc} 0 & -i  \\  i & 0  \ea}\qquad \sigma_3=\lesq{\ba{cc} 1 & 0  \\  0 & -1  \ea}.
\ee
We recall that the Bloch vector $\bb_{\rho}$ of a state $\rho \in \mathcal{S}(\cH)$ is defined by
\be \label{eq:bloch-vec-def}
\R^3 \ni \bb_{\rho}:=\ler{\tr_{\C^2}\lesq{\rho \sigma_j} }_{j=1}^3
\ee
and the positivity condition $\rho \geq 0$ is equivalent to the Euclidean length of $\bb_{\rho}$ being at most $1.$
Now observe that Proposition \ref{prop:2-W-div-on-pure-states} tells us that 
\be \label{eq:bloch-distance}
d_{\cA', 2}(\rho, \omega)=\abs{\bb_{\rho}-\bb_{\omega}}_2
\ee
whenever $\rho, \omega \in \cP_1(\C^2),$ that is, when both $\rho$ and $\omega$ are pure --- the right-hand side of \eqref{eq:bloch-distance} is the Euclidean distance of the Bloch vectors of $\rho$ and $\omega.$ As it can be seen directly from the geometry of the sphere,
for every $\varepsilon>0$ there exist $\rho, \tau, \omega \in \cP_1(\C^2)$ such that 
\be \label{eq:two-minus-eps}
d_{\cA', 2}(\rho, \tau)=d_{\cA', 2}(\tau, \omega) \text{ and } d_{\cA', 2}(\rho, \omega)> (2-\varepsilon) d_{\cA', 2}(\rho, \tau). 
\ee
By Corollary \ref{cor:qubit-cost-ops-equal}, for every $p \in (0,2)$ there is an $\cA \in \fco(\C^2)$ such that $C_{\cA,p}=C_{\cA',2}$ and hence $d_{\cA,p}^{\max\lers{p,1}}(\cdot,\cdot) \equiv d_{\cA',2}^{2}(\cdot,\cdot)$ on $\cS(\C^2) \times \cS(\C^2).$ Therefore,
\be \label{eq:dapp-1}
d_{\cA,p}^{\max\lers{p,1}}(\rho,\tau)=d_{\cA,p}^{\max\lers{p,1}}(\tau,\omega) \text{ and } d_{\cA,p}^{\max\lers{p,1}}(\rho,\omega)>(2-\varepsilon)^2 d_{\cA,p}^{\max\lers{p,1}}(\rho,\tau).
\ee
Consequently,
\be \label{eq:dapp-2}
d_{\cA,p} (\rho,\omega)> (2-\varepsilon)^{2 \min\lers{\frac{1}{p},1}} \fel \ler{d_{\cA,p}(\rho,\tau)+d_{\cA,p}(\tau,\omega)}.
\ee
So the triangle inequality fails for every $p$ such that
\be \label{eq:suff-for-fail}
(2-\varepsilon)^{2 \min\lers{\frac{1}{p},1}} \cdot \fel \geq 1.
\ee
Taking the logarithm of base $2$ shows that \eqref{eq:suff-for-fail} is equivalent to
\be \label{eq:suff-for-fail-equiv}
 2 \min\lers{\frac{1}{p},1} \cdot \log_{2} (2-\varepsilon) -1 \geq 0,
\ee
that is,
\be \label{eq:suff-for-fail-equiv-2}
\min\lers{\frac{1}{p},1} \geq \frac{1}{2 \log_{2} (2-\varepsilon)}.
\ee
This latter \eqref{eq:suff-for-fail-equiv-2} holds whenever $p\leq 2 \log_{2} (2-\varepsilon).$ The factor $ \log_{2} (2-\varepsilon)$ approaches $1$ from below as $\varepsilon$ goes to $0,$ hence we have examples for the failure of the triangle inequality for every $p<2.$
\end{proof}

%================================================================================
%====================================================================
\section{Triangle inequality for quantum Wasserstein divergences --- the proof of Theorem \ref{thm:main}}\label{s:main-proof}

This section is dedicated to the study of the triangle inequality for quadratic quantum Wasserstein divergences. The main result here is a proof of the triangle inequality using the sole assumption that an arbitrary one of the three states involved is pure. This is a generalization of the main result of \cite{BPTV-metric-24}, where we proved the triangle inequality in the case when the intermediate state is pure. It is important to note that ten month after the publication of the first version of this paper, Melchior Wirth proved the triangle inequality for all quadratic quantum Wasserstein divergences in full generality, that is, without any restriction concerning the states involved \cite{wirth-triangle}. His technique relies on advanced complex analysis, and hence is rather different from the argument presented below.
\par 
The next statement will be useful in our proof: if $U_k,V_k\in\mathcal{T}_2(\cH)$ are Hilbert-Schmidt operators for $k=1,2,\dots, K,$ then
\be\label{eq:HS-trineq-direct-sum}
\ler{\sum_{k=1}^K\tr_{\cH}\lesq{|U_k|^2}}^{1/2}+\ler{\sum_{k=1}^K\tr_{\cH}\lesq{|V_k|^2}}^{1/2}\ge\ler{\sum_{k=1}^K\tr_{\cH}\lesq{|U_k+V_k|^2}}^{1/2}.
\ee
holds.
Indeed, for the operators $\mathbf{U}=U_1 \oplus U_2 \oplus \dots \oplus U_K$ and $\mathbf{V}=V_1 \oplus V_2 \oplus \dots \oplus V_K \in \mathcal{T}_2\ler{\C^K \otimes \cH}$ the triangle inequality 
$$
\ler{\tr_{\C^K \otimes \cH} \lesq{|\mathbf{U}|^2}}^{1/2}
+\ler{\tr_{\C^K \otimes \cH} \lesq{|\mathbf{V}|^2}}^{1/2}
\geq 
\ler{\tr_{\C^K \otimes \cH}\lesq{|\mathbf{U}+\mathbf{V}|^2}}^{1/2}
$$
with respect to the Hilbert-Schmidt norm holds, which is exactly the same equation as (\ref{eq:HS-trineq-direct-sum}).

\begin{theorem} \label{thm:main}
Let $\cA=\lers{A_k}_{k=1}^K$ be an arbitrary finite collection of observable quantities, and let $d_{\cA,2}$ be the corresponding quadratic quantum Wasserstein divergence given by \eqref{eq:quad-qw-div-def}. Let $\rho, \omega, \tau \in \cS(\cH)$ and assume that $\rho$ or $\tau$ or $\omega$ is a pure state. Then the triangle inequality
\be \label{eq:tri-in}
d_{\cA,2}(\rho, \omega) \leq d_{\cA,2}(\rho,\tau)+d_{\cA,2}(\tau, \omega)
\ee
holds true.
\end{theorem}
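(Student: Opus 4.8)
The plan is to reduce the statement to the case in which $\rho$ is pure. Since $D_{\cA,2}$, and hence $d_{\cA,2}$, is a symmetric function of its two arguments --- the cost operator $C_{\cA,2}$ is invariant, up to the identification $\hohc \cong \cH^* \otimes \cH$, under exchanging the two tensor factors, which gives a trace-preserving bijection $\cC(\rho,\omega) \to \cC(\omega,\rho)$ --- the inequality \eqref{eq:tri-in} for a pure $\omega$ follows from the case of a pure $\rho$ after the relabelling $\rho \leftrightarrow \omega$, while the case of a pure intermediate state $\tau$ is precisely the main result of \cite{BPTV-metric-24}. So from now on I would take $\rho = \ket{\psi}\bra{\psi}$ pure and set $a_k := \tr_\cH\lesq{\rho A_k} \in \R$ for $k=1,\dots,K$.

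First I would record, for an \emph{arbitrary} state $\sigma$, the closed form
\be \label{eq:plan-pure}
d_{\cA,2}^2(\rho,\sigma)=\sum_{k=1}^K\ler{a_k^2-2a_k\tr_\cH\lesq{\sigma A_k}+\tr_\cH\lesq{\ler{\sigma^{\fel}A_k}^2}}=\sum_{k=1}^K\tr_\cH\lesq{\abs{a_k\sigma^{\fel}-\sigma^{1/4}A_k\sigma^{1/4}}^2},
\ee
valid whenever $\rho$ is pure: the first equality is \eqref{eq:2-W-div-indep-coupling} together with the identity $\tr_\cH\lesq{\ler{\rho^{\fel}A_k}^2}=\ler{\tr_\cH\lesq{\rho A_k}}^2=a_k^2$ recorded just before Proposition \ref{prop:2-W-div-on-pure-states}, and the second is a direct expansion of the Hilbert--Schmidt square using $\tr_\cH\lesq{\sigma}=1$, $\tr_\cH\lesq{\sigma^{\fel}\cdot\sigma^{1/4}A_k\sigma^{1/4}}=\tr_\cH\lesq{\sigma A_k}$ and $\tr_\cH\lesq{\ler{\sigma^{1/4}A_k\sigma^{1/4}}^2}=\tr_\cH\lesq{\ler{\sigma^{\fel}A_k}^2}$. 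If some term here is infinite then \eqref{eq:tri-in} is vacuous, so I would assume all quantities below are finite.

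Next I would fix an arbitrary channel $\Phi$ with $\Phi(\tau)=\omega$; by \eqref{eq:2-W-div-explicit} it then suffices to prove $d_{\cA,2}(\rho,\omega)\le d_{\cA,2}(\rho,\tau)+\sqrt{F}$ with $F:=\sum_{k}\tr_\cH\lesq{\ler{\tau^{\fel}A_k}^2+\ler{\omega^{\fel}A_k}^2-2\tau^{\fel}A_k\tau^{\fel}\Phi^\dagger(A_k)}$, and then pass to the infimum over $\Phi$. The one substantial ingredient, and the hard part, is the operator inequality
\be \label{eq:plan-key}
\tr_\cH\lesq{\tau^{\fel}\Phi^\dagger(A_k)\tau^{\fel}\Phi^\dagger(A_k)}\le\tr_\cH\lesq{\omega^{\fel}A_k\omega^{\fel}A_k}\qquad(k=1,\dots,K),
\ee
which is the $\alpha=\tfrac12$ instance of the monotonicity of the quantum $\chi^2$-divergence under channels --- equivalently, the superoperator inequality $\Phi\circ\Gamma_\tau\circ\Phi^\dagger\le\Gamma_{\Phi(\tau)}$ for $\Gamma_\sigma(X)=\sigma^{\fel}X\sigma^{\fel}$ --- a standard consequence of Lieb's concavity theorem. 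Granting \eqref{eq:plan-key}, the scalar $s_k:=\ler{\tr_\cH\lesq{\ler{\omega^{\fel}A_k}^2}-\tr_\cH\lesq{\tau^{\fel}\Phi^\dagger(A_k)\tau^{\fel}\Phi^\dagger(A_k)}}^{1/2}\ge0$ is well defined, and in the Hilbert space $\cK:=\mathcal{T}_2(\cH)\oplus\C$ I would set, for $k=1,\dots,K$,
\be
U_k:=\ler{a_k\tau^{\fel}-\tau^{1/4}A_k\tau^{1/4},\,0}\in\cK,\qquad V_k:=\ler{\tau^{1/4}A_k\tau^{1/4}-\tau^{1/4}\Phi^\dagger(A_k)\tau^{1/4},\,s_k}\in\cK.
\ee
Expanding the three Hilbert-space squares and using $\tr_\cH\lesq{\tau\Phi^\dagger(A_k)}=\tr_\cH\lesq{\Phi(\tau)A_k}=\tr_\cH\lesq{\omega A_k}$ gives
\be
\sum_{k=1}^K\norm{U_k}_{\cK}^2=d_{\cA,2}^2(\rho,\tau),\qquad\sum_{k=1}^K\norm{V_k}_{\cK}^2=F,\qquad\sum_{k=1}^K\norm{U_k+V_k}_{\cK}^2=d_{\cA,2}^2(\rho,\omega),
\ee
where the first and third identities reduce, via \eqref{eq:plan-pure} with $\sigma=\tau$ and $\sigma=\omega$ respectively, to the closed form already established. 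Applying \eqref{eq:HS-trineq-direct-sum} --- equivalently, the triangle inequality for the Hilbert-space norm on $\bigoplus_{k=1}^K\ler{\mathcal{T}_2(\cH)\oplus\C}$, which holds verbatim --- to the $U_k$'s and $V_k$'s yields
\be
d_{\cA,2}(\rho,\omega)=\ler{\sum_{k}\norm{U_k+V_k}_\cK^2}^{1/2}\le\ler{\sum_{k}\norm{U_k}_\cK^2}^{1/2}+\ler{\sum_{k}\norm{V_k}_\cK^2}^{1/2}=d_{\cA,2}(\rho,\tau)+\sqrt{F},
\ee
and taking the infimum over all channels $\Phi$ from $\tau$ to $\omega$, using \eqref{eq:2-W-div-explicit} once more, finishes the proof.

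The heart of the matter is \eqref{eq:plan-key}: it is exactly what makes the slack scalar $s_k$ real, after which the passage $U_k \rightsquigarrow U_k+V_k$ is a purely algebraic telescoping, and everything else --- the pure-state formula \eqref{eq:plan-pure}, the bookkeeping identities for $\norm{U_k}_\cK$, $\norm{V_k}_\cK$, $\norm{U_k+V_k}_\cK$, and the reductions in the first paragraph --- is elementary. The only delicate point in writing this up carefully will be justifying \eqref{eq:plan-key} in the possibly unbounded setting, but there one can truncate the $A_k$ by cut-off functions and pass to the limit exactly as is done elsewhere in the paper.
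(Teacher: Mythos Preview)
Your proof is correct and, in fact, somewhat cleaner than the paper's. Both arguments dispose of the pure-$\tau$ case by citing \cite{BPTV-metric-24} and use symmetry to reduce the remaining two cases to one; both hinge on the same monotonicity consequence of Lieb's concavity, your \eqref{eq:plan-key} being exactly the paper's inequality \eqref{eq: Lieb}. The difference is structural. The paper takes $\omega$ pure and works with \emph{two} optimal channels, $\Phi_{opt}\in\mathrm{CPTP}(\rho,\tau)$ and $\Psi_{opt}\in\mathrm{CPTP}(\tau,\omega)$, applies Lieb twice, then the Hilbert--Schmidt triangle inequality \eqref{eq:HS-trineq-direct-sum}, then uses that the composition $\Psi_{opt}\circ\Phi_{opt}$ competes in $\mathrm{CPTP}(\rho,\omega)$, and only at the very end invokes purity of $\omega$ via Cauchy--Schwarz. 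You instead take $\rho$ pure and exploit this immediately through the closed form \eqref{eq:2-W-div-indep-coupling}, so that $d_{\cA,2}(\rho,\tau)$ and $d_{\cA,2}(\rho,\omega)$ carry no optimization at all; only a single channel $\Phi\in\mathrm{CPTP}(\tau,\omega)$ is needed, Lieb is applied once, and the ``slack'' coordinate $s_k\in\C$ you append to $\mathcal{T}_2(\cH)$ turns what the paper handles as a chain of inequalities into three exact norm identities followed by one triangle inequality. Your route buys economy (one Lieb instead of two, no channel composition trick); the paper's route buys a template that does not commit to the closed form for pure marginals until the last line, which may generalize more readily. The verification of $\sum_k\norm{U_k+V_k}_{\cK}^2=d_{\cA,2}^2(\rho,\omega)$ does check out: the $\tr_\cH\bigl[\bigl(\tau^{1/2}\Phi^\dagger(A_k)\bigr)^2\bigr]$ term cancels against the corresponding piece of $s_k^2$, and $\tr_\cH\bigl[\tau\,\Phi^\dagger(A_k)\bigr]=\tr_\cH\bigl[\omega A_k\bigr]$ supplies the cross term.
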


\begin{proof}
The case when the intermediate state $\tau$ is pure is covered by \cite{BPTV-metric-24}. Therefore, we consider the case when the state $\rho$ on the left-hand side or the state $\omega$ on the right-hand side is pure. As the inequality \eqref{eq:tri-in} is symmetric, we assume without loss of generality that $\omega \in \cP_1(\cH)$.
The quadratic quantum Wasserstein divergences can be written in the forms
\be \label{eq:d_RO}
d_{\cA,2}^2\ler{\rho, \omega}=
\sum_{k=1}^K
\tr_{\cH}\lesq{\ler{\rho^{1/2} A_k}^2+\ler{\omega^{1/2} A_k}^2-2 \rho^{1/2} A_k \rho^{1/2} \Theta_{opt}^\dagger\ler{A_k}},
\ee

\be \label{eq:d_RT}
d_{\cA,2}^2\ler{\rho, \tau}=
\sum_{k=1}^K
\tr_{\cH}\lesq{\ler{\rho^{1/2} A_k}^2+\ler{\tau^{1/2} A_k}^2-2 \rho^{1/2} A_k \rho^{1/2} \Phi_{opt}^\dagger\ler{A_k}},
\ee
and
\be \label{eq:d_TO}
d_{\cA,2}^2\ler{\tau, \omega}=
\sum_{k=1}^K
\tr_{\cH}\lesq{\ler{\tau^{1/2} A_k}^2+\ler{\omega^{1/2} A_k}^2-2 \tau^{1/2} A_k \tau^{1/2} \Psi_{opt}^\dagger\ler{A_k}},
\ee
where $\Theta_{opt}\in\text{CPTP}(\rho,\omega)$, $\Phi_{opt}\in\text{CPTP}(\rho,\tau)$ and $\Psi_{opt}\in\text{CPTP}(\tau,\omega)$ are optimal channels corresponding the divergences.
Then we have
$$
d_{\cA,2}\ler{\rho, \tau}+d_{\cA,2}\ler{\tau, \omega}=
$$
$$\ler{
\sum_{k=1}^K
\tr_{\cH}\lesq{\ler{\rho^{1/2} A_k}^2+\ler{\tau^{1/2} A_k}^2-2 \rho^{1/2} A_k \rho^{1/2} \Phi_{opt}^\dagger\ler{A_k}}
}^{1/2}+
$$
$$+\ler{
\sum_{k=1}^K
\tr_{\cH}\lesq{\ler{\tau^{1/2} A_k}^2+\ler{\omega^{1/2} A_k}^2-2 \tau^{1/2} A_k \tau^{1/2} \Psi_{opt}^\dagger\ler{A_k}}
}^{1/2}\ge$$
$$\ge \ler{
\sum_{k=1}^K
\tr_{\cH}\lesq{\ler{\rho^{1/2} A_k}^2+\ler{\rho^{1/2} \Phi_{opt}^\dagger\ler{A_k}}^2-2 \rho^{1/2} A_k \rho^{1/2} \Phi_{opt}^\dagger\ler{A_k}}
}^{1/2}+
$$
$$+\ler{
\sum_{k=1}^K
\tr_{\cH}\lesq{\ler{\tau^{1/2} A_k}^2+\ler{\tau^{1/2}\Psi_{opt}^\dagger\ler{A_k}}^2-2 \tau^{1/2} A_k \tau^{1/2} \Psi_{opt}^\dagger\ler{A_k}}
}^{1/2}=
$$
\be \label{eq:square_form}
\ler{
\sum_{k=1}^K
\tr_{\cH}\lesq{\ler{\rho^{1/2} \ler{A_k-\Phi_{opt}^\dagger\ler{A_k}}}^2}}^{1/2}+
\ler{
\sum_{k=1}^K
\tr_{\cH}\lesq{\ler{\tau^{1/2} \ler{A_j-\Psi_{opt}^\dagger\ler{A_j}}}^2}}^{1/2},
\ee
where we have used the inequalities
$$ 
\tr_{\cH}\lesq{\ler{\tau^{1/2} A_k}^2}=tr_{\cH}\ler{\ler{\Phi_{opt}\ler{\rho}}^{1/2} A_k}^2\ge 
\tr_{\cH}\lesq{\ler{\rho^{1/2} \Phi_{opt}^\dagger\ler{A_k}}^2}
$$
and
$$ 
\tr_{\cH}\lesq{\ler{\omega^{1/2} A_k}^2}=tr_{\cH}\ler{\ler{\Psi_{opt}\ler{\tau}}^{1/2} A_k}^2\ge 
\tr_{\cH}\lesq{\ler{\tau^{1/2} \Psi_{opt}^\dagger\ler{A_k}}^2}.
$$
Indeed, for any CPTP map $\Phi$, state $\rho$ and selfadjoint operator $A$, the inequality
\be \label{eq: Lieb}
\tr_{\cH}\lesq{\ler{\Phi\ler{\rho}}^{1/2} A}^2\ge 
\tr_{\cH}\lesq{\ler{\rho^{1/2} \Phi^\dagger\ler{A}}^2}
\ee
is a special case of the monotonicity version of Lieb's concavity --- see, e.g., \cite[Theorem 1]{carlen-lecture-notes}. We can use \eqref{eq: Lieb} for a further estimation in the following way:
$$
\tr_{\cH}\lesq{\ler{\tau^{1/2} \ler{A_k-\Psi_{opt}^\dagger\ler{A_k}}}^2}=
$$
$$
\tr_{\cH}\lesq{\ler{\ler{\Phi_{opt}\ler{\rho}}^{1/2} \ler{A_k-\Psi_{opt}^\dagger\ler{A_k}}}^2}\ge
$$
\be \label{eq:further}
\ge\tr_{\cH}\lesq{\ler{\rho^{1/2} \ler{\Phi_{opt}^\dagger\ler{A_k}-\Phi_{opt}^\dagger\ler{\Psi_{opt}^\dagger\ler{A_k}}}}^2}.
\ee
Substituting \eqref{eq:further} into the  right-hand side (RHS) of \eqref{eq:square_form} we get
$$
RHS\ge 
\ler{
\sum_{k=1}^K
\tr_{\cH}\lesq{\ler{\rho^{1/2} \ler{A_k-\Phi_{opt}^\dagger\ler{A_k}}}^2}}^{1/2}+
$$
$$
+\ler{
\sum_{k=1}^K
\tr_{\cH}\lesq{\ler{\rho^{1/2} \ler{\Phi_{opt}^\dagger\ler{A_k}-\Phi_{opt}^\dagger\ler{\Psi_{opt}^\dagger\ler{A_k}}}}^2}}^{1/2}\ge
$$
\be \label{eq:triangle}
\ge 
\ler{
\sum_{k=1}^K
\tr_{\cH}\lesq{\ler{\rho^{1/2} \ler{A_k-\Phi_{opt}^\dagger\ler{\Psi_{opt}^\dagger\ler{A_k}}}}^2}}^{1/2},
\ee
where we have used the triangle inequality \eqref{eq:HS-trineq-direct-sum} in the last step. Indeed, with the choice
$$
U_k=\rho^{1/4} \ler{A_k-\Phi_{opt}^\dagger\ler{A_k}}\rho^{1/4},
$$
$$
V_k=\rho^{1/4} \ler{\Phi_{opt}^\dagger\ler{A_k}-\Phi_{opt}^\dagger\ler{\Psi_{opt}^\dagger\ler{A_k}}}\rho^{1/4}
$$
for $k=1,2,\dots, K$ the inequality \eqref{eq:HS-trineq-direct-sum} gives the statement.
The optimality of $\Theta_{opt}$ in \eqref{eq:quad-qw-div-def} also means that
\be \label{eq: Theta_opt}
\sup_{\Theta\in\text{CPTP}(\rho,\omega)}
\lers{
\sum_{k=1}^K\tr_{\cH}\lesq{
2 \rho^{1/2} A_k \rho^{1/2} \Theta^\dagger\ler{A_k}}}=
\sum_{k=1}^K\tr_{\cH}\lesq{
2 \rho^{1/2} A_k \rho^{1/2} \Theta_{opt}^\dagger\ler{A_k}}.
\ee
Since $\Psi_{opt}\ler{\Phi_{opt}\ler{\rho}}=\omega$, i.e. $\Psi_{opt}\circ \Phi_{opt}\in\text{CPTP}(\rho,\omega)$ this means that
\be \label{eq:estimation}
\sum_{k=1}^K\tr_{\cH}\lesq{
2 \rho^{1/2} A_k \rho^{1/2} \Theta_{opt}^\dagger\ler{A_k}}\ge 
\sum_{k=1}^K\tr_{\cH}\lesq{
2 \rho^{1/2} A_k \rho^{1/2} \Phi_{opt}^\dagger\ler{\Psi_{opt}^\dagger\ler{A_k}}}.
\ee
Hence we get the following lower bound for the right-hand side of \eqref{eq:triangle}:
$$
\ler{
\sum_{k=1}^K
\tr_{\cH}\lesq{\ler{\rho^{1/2} \ler{A_k-\Phi_{opt}^\dagger\ler{\Psi_{opt}^\dagger\ler{A_k}}}}^2}}^{1/2}=
$$
$$
=\ler{
\sum_{k=1}^K
\tr_{\cH}\lesq{\ler{\rho^{1/2} A_k}^2+\ler{\rho^{1/2}\Phi_{opt}^\dagger\ler{\Psi_{opt}^\dagger\ler{ A_k}}}^2-2 \rho^{1/2} A_k \rho^{1/2} \Phi_{opt}^\dagger\ler{\Psi_{opt}^\dagger\ler{ A_k}}}
}^{1/2}\ge
$$
\be \label{eq:final_est}
\ge
\ler{
\sum_{k=1}^K
\tr_{\cH}\lesq{\ler{\rho^{1/2} A_k}^2+\ler{\rho^{1/2}\Phi_{opt}^\dagger\ler{\Psi_{opt}^\dagger\ler{ A_k}}}^2-2 \rho^{1/2} A_k \rho^{1/2} \Theta_{opt}^\dagger\ler{A_k}}}
^{1/2}
\ee
If $X$ and $Y$ are self-adjoint operators on $\cH,$ and $X\geq 0,$ and $\tr X=1$, then the Cauchy-Schwarz inequality for the Hilbert-Schmidt inner product of the operators $X^{1/2}$ and $X^{1/4}YX^{1/4}$ implies that $\tr_{\cH} \lesq{\ler{X^{1/2}Y}^2} \geq \ler{\tr_{\cH} \lesq{X Y}}^2,$ and hence we get the following lower bound for the second term of \eqref{eq:final_est}:
$$
\tr_{\cH}\lesq{\ler{\rho^{1/2}\Phi_{opt}^\dagger\ler{\Psi_{opt}^\dagger\ler{ A_k}}}^2}
\geq
\ler{\tr_{\cH}\lesq{\rho \, \Phi_{opt}^\dagger\ler{\Psi_{opt}^\dagger\ler{ A_k}}}}^2=
$$
\be
\ler{\tr_{\cH}\lesq{\Psi_{opt}\ler{\Phi_{opt}\ler{ \rho}}A_k}}^2=
\ler{\tr_{\cH}\lesq{\omega A_k}}^2.
\ee
If $\omega$ is pure, then 
$$
\ler{\tr_{\cH}\lesq{\omega A_k}}^2=\tr_{\cH}\lesq{\ler{\omega^{1/2}A_k}^2},
$$
and the right-hand side of \eqref{eq:final_est} is exactly $d_{\cA,2}(\rho, \omega)$.
\end{proof}

\paragraph*{{\bf Acknowledgements.}} We thank Melchior Wirth for his valuable insight concerning the triangle inequality for quadratic quantum Wasserstein divergences and for fruitful discussions on this topic. We are grateful also to Giacomo De Palma and Dario Trevisan for enlightening discussions on possible non-quadratic versions of the optimal transport concept relying on quantum channels. We thank L\'aszl\'o Lov\'asz for suggesting the use of cut cones in the proof of Proposition \ref{prop:C1-Cp-rel}, which is a key idea there.

%%% Bibliography
\begin{small}
\bibliographystyle{plainurl}  % ama, nar, alpha, plain, chicago, abbrv, siam
\bibliography{references.bib}

@misc{wirth-triangle,
      title={Triangle Inequality for a Quantum Wasserstein Divergence}, 
      author={Melchior Wirth},
      year={2025},
      eprint={2511.20450},
      archivePrefix={arXiv},
      primaryClass={math-ph},
      url={https://arxiv.org/abs/2511.20450}, 
}

@article{beatty-franca,
  eprint = {2402.16477},  
  author = {Emily Beatty and Daniel Stilck França},
  title = {Order $p$ quantum {W}asserstein distances from couplings},
  publisher = {arXiv},
  year = {2024},  
}

@article {BianeVoiculescu,
    AUTHOR = {Biane, Philippe and Voiculescu, Dan},
     TITLE = {A free probability analogue of the {W}asserstein metric on the trace-state space},
   JOURNAL = {Geom. Funct. Anal.},
  FJOURNAL = {Geometric and Functional Analysis},
    VOLUME = {11},
      YEAR = {2001},
    NUMBER = {6},
     PAGES = {1125--1138},
       DOI = {10.1007/s00039-001-8226-4}
}

@misc{BistronEcksteinZyczkowski,
  eprint = {2204.07405},
  author = {Bistroń, Rafał and Eckstein, Michał and Życzkowski, Karol},
  title = {Monotonicity of the quantum 2-{W}asserstein distance},
  publisher = {arXiv},
  year = {2022},
  
}

@article {bgl,
    AUTHOR = {Boissard, Emmanuel and Le Gouic, Thibaut and Loubes,
              Jean-Michel},
     TITLE = {Distribution's template estimate with {W}asserstein metrics},
   JOURNAL = {Bernoulli},
  FJOURNAL = {Bernoulli. Official Journal of the Bernoulli Society for
              Mathematical Statistics and Probability},
    VOLUME = {21},
      YEAR = {2015},
    NUMBER = {2},
     PAGES = {740--759},
      ISSN = {1350-7265,1573-9759},
   MRCLASS = {62G05 (60B10 62G20)},
  MRNUMBER = {3338645},
MRREVIEWER = {Santanu\ Chakraborty},
       DOI = {10.3150/13-BEJ585},
       URL = {https://doi.org/10.3150/13-BEJ585},
}

@article{BPTV-metric-24,
  title = {Metric property of quantum Wasserstein divergences},
  author = {Bunth, Gergely and Pitrik, J\'ozsef and Titkos, Tam\'as and Virosztek, D\'aniel},
  journal = {Phys. Rev. A},
  volume = {110},
  issue = {2},
  pages = {022211},
  numpages = {14},
  year = {2024},
  month = {Aug},
  publisher = {American Physical Society},
  doi = {10.1103/PhysRevA.110.022211},
}

@article {Butkovsky,
    AUTHOR = {Butkovsky, Oleg},
     TITLE = {Subgeometric rates of convergence of {M}arkov processes in the
              {W}asserstein metric},
   JOURNAL = {Ann. Appl. Probab.},
  FJOURNAL = {The Annals of Applied Probability},
    VOLUME = {24},
      YEAR = {2014},
    NUMBER = {2},
     PAGES = {526--552},
       DOI = {10.1214/13-AAP922}
}

@misc{CagliotiGolsePaul-towardsqot,
 author = {Caglioti, Emanuele and Golse, François and Paul, Thierry},
  title = {Towards Optimal Transport for Quantum Densities},
  eprint = {2101.03256},
  publisher = {arXiv},
  year = {2021}
}

@incollection {carlen-lecture-notes,
    AUTHOR = {Carlen, Eric},
     TITLE = {Dynamics and quantum optimal transport: three lectures on
              quantum entropy and quantum {M}arkov semigroups},
 BOOKTITLE = {Optimal transport on quantum structures},
     PAGES = {29--89},
 PUBLISHER = {Springer, Cham},
      YEAR = {2024},
      ISBN = {978-3-031-50465-5; 978-3-031-50466-2},
   MRCLASS = {46 (34 47 81P17)},
  MRNUMBER = {4823563},
       DOI = {10.1007/978-3-031-50466-2\_2},
       URL = {https://doi.org/10.1007/978-3-031-50466-2_2},
}

@article {CarlenMaas-2,
    AUTHOR = {Carlen, Eric A. and Maas, Jan},
     TITLE = {Non-commutative calculus, optimal transport and functional
              inequalities in dissipative quantum systems},
   JOURNAL = {J. Stat. Phys.},
  FJOURNAL = {Journal of Statistical Physics},
    VOLUME = {178},
      YEAR = {2020},
    NUMBER = {2},
     PAGES = {319--378},
       DOI = {10.1007/s10955-019-02434-w}
}

@article {CarlenMaas-3,
    AUTHOR = {Carlen, Eric A. and Maas, Jan},
     TITLE = {Gradient flow and entropy inequalities for quantum {M}arkov
              semigroups with detailed balance},
   JOURNAL = {J. Funct. Anal.},
  FJOURNAL = {Journal of Functional Analysis},
    VOLUME = {273},
      YEAR = {2017},
    NUMBER = {5},
     PAGES = {1810--1869},
       DOI = {10.1016/j.jfa.2017.05.003}
}

@article {CarlenMaas-4,
    AUTHOR = {Carlen, Eric A. and Maas, Jan},
     TITLE = {An analog of the 2-{W}asserstein metric in non-commutative
              probability under which the fermionic {F}okker-{P}lanck
              equation is gradient flow for the entropy},
   JOURNAL = {Comm. Math. Phys.},
  FJOURNAL = {Communications in Mathematical Physics},
    VOLUME = {331},
      YEAR = {2014},
    NUMBER = {3},
     PAGES = {887--926},
       DOI = {10.1007/s00220-014-2124-8}
}

@article {DattaRouze1,
    AUTHOR = {Datta, Nilanjana and Rouz\'{e}, Cambyse},
     TITLE = {Relating relative entropy, optimal transport and {F}isher
              information: a quantum {HWI} inequality},
   JOURNAL = {Ann. Henri Poincar\'{e}},
  FJOURNAL = {Annales Henri Poincar\'{e}. A Journal of Theoretical and
              Mathematical Physics},
    VOLUME = {21},
      YEAR = {2020},
    NUMBER = {7},
     PAGES = {2115--2150},
       DOI = {10.1007/s00023-020-00891-8}
}

@article {DPT-AHP,
    AUTHOR = {De Palma, Giacomo and Trevisan, Dario},
     TITLE = {Quantum optimal transport with quantum channels},
   JOURNAL = {Ann. Henri Poincar\'{e}},
  FJOURNAL = {Annales Henri Poincar\'{e}. A Journal of Theoretical and
              Mathematical Physics},
    VOLUME = {22},
      YEAR = {2021},
    NUMBER = {10},
     PAGES = {3199--3234},
       DOI = {10.1007/s00023-021-01042-3},
    }

@incollection{DPT-lecture-notes,
  author      = {De Palma, Giacomo and Trevisan, Dario},
  title       = {Quantum optimal transport: quantum channels and qubits},
  editor      = {Maas, Jan and Rademacher, Simone and Titkos, Tam\'as and Virosztek, D\'aniel},
  booktitle   = {Optimal Transport on Quantum Structures},
  publisher   = {Springer Nature},
  address     = {},
  year        = {2024},
  pages       = {},
  chapter     = {},
  eprint={2307.16268}
}

@article {DattaRouze2,
    AUTHOR = {Datta, Nilanjana and Rouz\'{e}, Cambyse},
     TITLE = {Concentration of quantum states from quantum functional and
              transportation cost inequalities},
   JOURNAL = {J. Math. Phys.},
  FJOURNAL = {Journal of Mathematical Physics},
    VOLUME = {60},
      YEAR = {2019},
    NUMBER = {1},
     PAGES = {012202, 22},
       DOI = {10.1063/1.5023210}
}

@book {cut-metric-book,
    AUTHOR = {Deza, Michel Marie and Laurent, Monique},
     TITLE = {Geometry of cuts and metrics},
    SERIES = {Algorithms and Combinatorics},
    VOLUME = {15},
      NOTE = {First softcover printing of the 1997 original [MR1460488]},
 PUBLISHER = {Springer, Heidelberg},
      YEAR = {2010},
     PAGES = {xii+587},
      ISBN = {978-3-642-04294-2},
   MRCLASS = {52-02 (51K05 52B12 52C07 68R05 90C57)},
  MRNUMBER = {2841334},
       DOI = {10.1007/978-3-642-04295-9},
       URL = {https://doi.org/10.1007/978-3-642-04295-9},
}

@article {Duvenhage1,
    AUTHOR = {Duvenhage, Rocco},
     TITLE = {Optimal quantum channels},
   JOURNAL = {Phys. Rev. A},
  FJOURNAL = {Physical Review A},
    VOLUME = {104},
      YEAR = {2021},
    NUMBER = {3},
     PAGES = {Paper No. 032604, 8},
       DOI = {10.1103/physreva.104.032604}
}

@article{Duvenhage-ext-quantum-det-bal,
author = {Duvenhage, Rocco and Skosana, Samuel and Snyman, Machiel},
title = {Extending quantum detailed balance through optimal transport},
journal = {Reviews in Mathematical Physics},
year = {2024},
doi = {10.1142/S0129055X24500405},
URL = {https://doi.org/10.1142/S0129055X24500405},
eprint = {2206.15287},
}

@article {Duvenhage-quad-Wass-vNA,
    AUTHOR = {Duvenhage, Rocco},
     TITLE = {Quadratic {W}asserstein metrics for von {N}eumann algebras via
              transport plans},
   JOURNAL = {J. Operator Theory},
  FJOURNAL = {Journal of Operator Theory},
    VOLUME = {88},
      YEAR = {2022},
    NUMBER = {2},
     PAGES = {289--308},
      ISSN = {0379-4024,1841-7744},
   MRCLASS = {46L30 (46L10 46L55 49Q22)},
  MRNUMBER = {4534899},
MRREVIEWER = {Qihui\ Li},
eprint = {2012.03564},
}

@article {FigalliMaggi1,
    AUTHOR = {Figalli, A. and Maggi, F.},
     TITLE = {On the shape of liquid drops and crystals in the small mass
              regime},
   JOURNAL = {Arch. Ration. Mech. Anal.},
  FJOURNAL = {Archive for Rational Mechanics and Analysis},
    VOLUME = {201},
      YEAR = {2011},
    NUMBER = {1},
     PAGES = {143--207},
       DOI = {10.1007/s00205-010-0383-x}
}

@article {FigalliMaggi2,
    AUTHOR = {Figalli, A. and Maggi, F. and Pratelli, A.},
     TITLE = {A mass transportation approach to quantitative isoperimetric
              inequalities},
   JOURNAL = {Invent. Math.},
  FJOURNAL = {Inventiones Mathematicae},
    VOLUME = {182},
      YEAR = {2010},
    NUMBER = {1},
     PAGES = {167--211},
       DOI = {10.1007/s00222-010-0261-z}
}

@article {FriedmanEcksteinColeZyczkowski-MK,
    AUTHOR = {Friedland, Shmuel and Eckstein, Micha\l  and Cole, Sam and
              \.{Z}yczkowski, Karol},
     TITLE = {Quantum {M}onge-{K}antorovich {P}roblem and {T}ransport
              {D}istance between {D}ensity {M}atrices},
   JOURNAL = {Phys. Rev. Lett.},
  FJOURNAL = {Physical Review Letters},
    VOLUME = {129},
      YEAR = {2022},
    NUMBER = {11},
     PAGES = {Paper No. 110402},
     DOI = {10.1103/physrevlett.129.110402}
     }

@article {GolsePaul-OTapproach,
    AUTHOR = {Golse, Fran\c{c}ois and Paul, Thierry},
     TITLE = {Quantitative observability for the {S}chr\"{o}dinger and
              {H}eisenberg equations: an optimal transport approach},
   JOURNAL = {Math. Models Methods Appl. Sci.},
  FJOURNAL = {Mathematical Models and Methods in Applied Sciences},
    VOLUME = {32},
      YEAR = {2022},
    NUMBER = {5},
     PAGES = {941--963},
       DOI = {10.1142/S021820252250021X}
}

@article {GolseTPaul-pseudometrics,
    AUTHOR = {Golse, Fran\c{c}ois and Paul, Thierry},
     TITLE = {Optimal transport pseudometrics for quantum and classical
              densities},
   JOURNAL = {J. Funct. Anal.},
  FJOURNAL = {Journal of Functional Analysis},
    VOLUME = {282},
      YEAR = {2022},
    NUMBER = {9},
     PAGES = {Paper No. 109417, 53},
       DOI = {10.1016/j.jfa.2022.109417}
}

@article {CagliotiGolsePaul,
    AUTHOR = {Caglioti, Emanuele and Golse, Fran\c{c}ois and Paul, Thierry},
     TITLE = {Quantum optimal transport is cheaper},
   JOURNAL = {J. Stat. Phys.},
  FJOURNAL = {Journal of Statistical Physics},
    VOLUME = {181},
      YEAR = {2020},
    NUMBER = {1},
     PAGES = {149--162},
       DOI = {10.1007/s10955-020-02571-7}
}

@article{PC,
url = {http://dx.doi.org/10.1561/2200000073},
year = {2019},
volume = {11},
journal = {Foundations and Trends® in Machine Learning},
title = {Computational Optimal Transport: With Applications to Data Science},
doi = {10.1561/2200000073},
issn = {1935-8237},
number = {5-6},
pages = {355-607},
author = {Gabriel Peyré and Marco Cuturi}
}

@article {MachineLearning3,
    AUTHOR = {Ramdas, Aaditya and Garc\'ia Trillos, Nicol\'as and Cuturi, Marco},
     TITLE = {On {W}asserstein two-sample testing and related families of nonparametric tests},
   JOURNAL = {Entropy},
  FJOURNAL = {Entropy. An International and Interdisciplinary Journal of
              Entropy and Information Studies},
    VOLUME = {19},
      YEAR = {2017},
    NUMBER = {2},
     PAGES = {Paper No. 47, 15},
      ISSN = {1099-4300},
   MRCLASS = {62G10},
  MRNUMBER = {3608466},
MRREVIEWER = {Marco\ Marozzi},
       DOI = {10.3390/e19020047},
       URL = {https://doi.org/10.3390/e19020047},
}

@article{m2,
author = {Srivastava, Sanvesh and Li, Cheng and Dunson, David B.},
title = {Scalable bayes via barycenter in wasserstein space},
year = {2018},
issue_date = {January 2018},
publisher = {JMLR.org},
volume = {19},
number = {1},
issn = {1532-4435},
journal = {J. Mach. Learn. Res.},
month = jan,
pages = {312–346},
numpages = {35},
keywords = {optimal transportation, linear programming, empirical measures, distributed Bayesian computations, big data, barycenter, Wasserstein space, Wasserstein distance}
}

@article {GolsePaul-Nbody,
    AUTHOR = {Golse, Fran\c{c}ois and Paul, Thierry},
     TITLE = {Mean-field and classical limit for the {$N$}-body quantum
              dynamics with {C}oulomb interaction},
   JOURNAL = {Comm. Pure Appl. Math.},
  FJOURNAL = {Communications on Pure and Applied Mathematics},
    VOLUME = {75},
      YEAR = {2022},
    NUMBER = {6},
     PAGES = {1332--1376},
       DOI = {10.1002/cpa.21986}
       }

@article {GolsePaul-meanfieldlimit,
    AUTHOR = {Golse, Fran\c{c}ois and Paul, Thierry},
     TITLE = {Empirical measures and quantum mechanics: applications to the
              mean-field limit},
   JOURNAL = {Comm. Math. Phys.},
  FJOURNAL = {Communications in Mathematical Physics},
    VOLUME = {369},
      YEAR = {2019},
    NUMBER = {3},
     PAGES = {1021--1053},

       DOI = {10.1007/s00220-019-03357-z}
}

@article {GolsePaul-wavepackets,
    AUTHOR = {Golse, Fran\c{c}ois and Paul, Thierry},
     TITLE = {Wave packets and the quadratic {M}onge-{K}antorovich distance
              in quantum mechanics},
   JOURNAL = {C. R. Math. Acad. Sci. Paris},
  FJOURNAL = {Comptes Rendus Math\'{e}matique. Acad\'{e}mie des Sciences. Paris},
    VOLUME = {356},
      YEAR = {2018},
    NUMBER = {2},
     PAGES = {177--197},
       DOI = {10.1016/j.crma.2017.12.007}
}

@article {GolsePaul-Schrodinger,
    AUTHOR = {Golse, Fran\c{c}ois and Paul, Thierry},
     TITLE = {The {S}chr\"{o}dinger equation in the mean-field and semiclassical
              regime},
   JOURNAL = {Arch. Ration. Mech. Anal.},
  FJOURNAL = {Archive for Rational Mechanics and Analysis},
    VOLUME = {223},
      YEAR = {2017},
    NUMBER = {1},
     PAGES = {57--94},
       DOI = {10.1007/s00205-016-1031-x}
}

@article {GolseMouhotPaul,
    AUTHOR = {Golse, Fran\c{c}ois and Mouhot, Cl\'{e}ment and Paul, Thierry},
     TITLE = {On the mean field and classical limits of quantum mechanics},
   JOURNAL = {Comm. Math. Phys.},
  FJOURNAL = {Communications in Mathematical Physics},
    VOLUME = {343},
      YEAR = {2016},
    NUMBER = {1},
     PAGES = {165--205},
       DOI = {10.1007/s00220-015-2485-7}
}

@article {Hairer2,
    AUTHOR = {Hairer, Martin and Mattingly, Jonathan C. and Scheutzow, Michael},
     TITLE = {Asymptotic coupling and a general form of {H}arris' theorem
              with applications to stochastic delay equations},
   JOURNAL = {Probab. Theory Related Fields},
  FJOURNAL = {Probability Theory and Related Fields},
    VOLUME = {149},
      YEAR = {2011},
    NUMBER = {1-2},
     PAGES = {223--259},
       DOI = {10.1007/s00440-009-0250-6}
}

@article {Hairer3,
    AUTHOR = {Hairer, Martin and Mattingly, Jonathan C.},
     TITLE = {Spectral gaps in {W}asserstein distances and the 2{D}
              stochastic {N}avier-{S}tokes equations},
   JOURNAL = {Ann. Probab.},
  FJOURNAL = {The Annals of Probability},
    VOLUME = {36},
      YEAR = {2008},
    NUMBER = {6},
     PAGES = {2050--2091},
      ISSN = {0091-1798,2168-894X},
   MRCLASS = {35Q35 (35R60 37L55 47D07 60H15 76D05 76M35)},
  MRNUMBER = {2478676},
MRREVIEWER = {Hakima\ Bessaih},
       DOI = {10.1214/08-AOP392},
       URL = {https://doi.org/10.1214/08-AOP392},
}

@article {imageprocessing1,
    AUTHOR = {Kolouri, Soheil and Park, Se Rim and Rohde, Gustavo K.},
     TITLE = {The radon cumulative distribution transform and its
              application to image classification},
   JOURNAL = {IEEE Trans. Image Process.},
  FJOURNAL = {IEEE Transactions on Image Processing},
    VOLUME = {25},
      YEAR = {2016},
    NUMBER = {2},
     PAGES = {920--934},
      ISSN = {1057-7149,1941-0042},
   MRCLASS = {94A08},
  MRNUMBER = {3455456},
       DOI = {10.1109/TIP.2015.2509419},
       URL = {https://doi.org/10.1109/TIP.2015.2509419},
}

@book {Villani2,
    AUTHOR = {Villani, C\'{e}dric},
     TITLE = {Optimal transport},
    SERIES = {Grundlehren der mathematischen Wissenschaften [Fundamental
              Principles of Mathematical Sciences]},
    VOLUME = {338},
      NOTE = {Old and new},
 PUBLISHER = {Springer-Verlag, Berlin},
      YEAR = {2009},
       DOI = {10.1007/978-3-540-71050-9}
}

@article {LottVillani,
    AUTHOR = {Lott, John and Villani, C\'{e}dric},
     TITLE = {Ricci curvature for metric-measure spaces via optimal
              transport},
   JOURNAL = {Ann. of Math. (2)},
  FJOURNAL = {Annals of Mathematics. Second Series},
    VOLUME = {169},
      YEAR = {2009},
    NUMBER = {3},
     PAGES = {903--991},
       DOI = {10.4007/annals.2009.169.903}
}

@book {OTQS-book,
     TITLE = {Optimal transport on quantum structures},
    EDITOR = {Maas, Jan and Rademacher, Simone and Titkos, Tam\'as and
              Virosztek, D\'aniel},
 PUBLISHER = {Springer, Cham},
      YEAR = {2024},
     PAGES = {ix+321},
      ISBN = {978-3-031-50465-5; 978-3-031-50466-2},
   MRCLASS = {49 (81-06 82-06)},
  MRNUMBER = {4823561},
       DOI = {10.1007/978-3-031-50466-2},
       URL = {https://doi.org/10.1007/978-3-031-50466-2},
}

@article {Sturm4,
    AUTHOR = {von Renesse, Max-K. and Sturm, Karl-Theodor},
     TITLE = {Transport inequalities, gradient estimates, entropy, and
              {R}icci curvature},
   JOURNAL = {Comm. Pure Appl. Math.},
  FJOURNAL = {Communications on Pure and Applied Mathematics},
    VOLUME = {58},
      YEAR = {2005},
    NUMBER = {7},
     PAGES = {923--940},
       DOI = {10.1002/cpa.20060}
}

@article {Sturm5,
    AUTHOR = {Sturm, Karl-Theodor},
     TITLE = {On the geometry of metric measure spaces. {II}},
   JOURNAL = {Acta Math.},
  FJOURNAL = {Acta Mathematica},
    VOLUME = {196},
      YEAR = {2006},
    NUMBER = {1},
     PAGES = {133--177},
       DOI = {10.1007/s11511-006-0003-7}
}

@article {Sturm6,
    AUTHOR = {Sturm, Karl-Theodor},
     TITLE = {On the geometry of metric measure spaces. {I}},
   JOURNAL = {Acta Math.},
  FJOURNAL = {Acta Mathematica},
    VOLUME = {196},
      YEAR = {2006},
    NUMBER = {1},
     PAGES = {65--131},
       DOI = {10.1007/s11511-006-0002-8}
}

@article{TothPitrik,
  doi = {10.22331/q-2023-10-16-1143},
  url = {https://doi.org/10.22331/q-2023-10-16-1143},
  title = {Quantum {W}asserstein distance based on an optimization over separable states},
  author = {T{\'{o}}th, G{\'{e}}za and Pitrik, J{\'{o}}zsef},
  journal = {{Quantum}},
  issn = {2521-327X},
  publisher = {{Verein zur F{\"{o}}rderung des Open Access Publizierens in den Quantenwissenschaften}},
  volume = {7},
  pages = {1143},
  month = oct,
  year = {2023}
}

@article {images1,
    AUTHOR = {Wang, Wei and Slep\v{c}ev, Dejan and Basu, Saurav and Ozolek, John
              A. and Rohde, Gustavo K.},
     TITLE = {A linear optimal transportation framework for quantifying and
              visualizing variations in sets of images},
   JOURNAL = {Int. J. Comput. Vis.},
  FJOURNAL = {International Journal of Computer Vision},
    VOLUME = {101},
      YEAR = {2013},
    NUMBER = {2},
     PAGES = {254--269},
      ISSN = {0920-5691},
   MRCLASS = {68T45 (68U10)},
  MRNUMBER = {3021062},
       DOI = {10.1007/s11263-012-0566-z},
       URL = {https://doi.org/10.1007/s11263-012-0566-z},
}

@article {ZyczkowskiSlomczynski1,
    AUTHOR = {\.{Z}yczkowski, Karol and S{\l}omczy\'{n}ski, Wojciech},
     TITLE = {The {M}onge metric on the sphere and geometry of quantum
              states},
   JOURNAL = {J. Phys. A},
  FJOURNAL = {Journal of Physics. A. Mathematical and General},
    VOLUME = {34},
      YEAR = {2001},
    NUMBER = {34},
     PAGES = {6689--6722},
       DOI = {10.1088/0305-4470/34/34/311}
}

@article {ZyczkowskiSlomczynski2,
    AUTHOR = {\.{Z}yczkowski, Karol and S{\l}omczy\'{n}ski, Wojeciech},
     TITLE = {The {M}onge distance between quantum states},
   JOURNAL = {J. Phys. A},
  FJOURNAL = {Journal of Physics. A. Mathematical and General},
    VOLUME = {31},
      YEAR = {1998},
    NUMBER = {45},
     PAGES = {9095--9104},
      ISSN = {0305-4470},
   MRCLASS = {81S99},
  MRNUMBER = {1654869},
MRREVIEWER = {Laurence Nedelec},
       DOI = {10.1088/0305-4470/31/45/009},
       URL = {https://doi.org/10.1088/0305-4470/31/45/009},
}

@article {Wirth-dual,
    AUTHOR = {Wirth, Melchior},
     TITLE = {A dual formula for the noncommutative transport distance},
   JOURNAL = {J. Stat. Phys.},
  FJOURNAL = {Journal of Statistical Physics},
    VOLUME = {187},
      YEAR = {2022},
    NUMBER = {2},
     PAGES = {Paper No. 19, 18},
      ISSN = {0022-4715,1572-9613},
   MRCLASS = {46L57 (49Q22 81R15)},
  MRNUMBER = {4406600},
       DOI = {10.1007/s10955-022-02911-9},
       URL = {https://doi.org/10.1007/s10955-022-02911-9},
}

@misc{DPT-pers-cortona,
author={De Palma, Giacomo and Trevisan, Dario},
title={Personal communication at the {QOTTA} 2024 {W}orkshop in {C}ortona, in {S}eptember 2024}
}

@article{dia1,
title = {The cumulative distribution transform and linear pattern classification},
journal = {Applied and Computational Harmonic Analysis},
volume = {45},
number = {3},
pages = {616-641},
year = {2018},
issn = {1063-5203},
doi = {https://doi.org/10.1016/j.acha.2017.02.002},
url = {https://www.sciencedirect.com/science/article/pii/S1063520317300076},
author = {Se Rim Park and Soheil Kolouri and Shinjini Kundu and Gustavo K. Rohde},
keywords = {Cumulative distribution transform, Signal classification},
abstract = {Discriminating data classes emanating from sensors is an important problem with many applications in science and technology. We describe a new transform for pattern representation that interprets patterns as probability density functions, and has special properties with regards to classification. The transform, which we denote as the Cumulative Distribution Transform (CDT), is invertible, with well defined forward and inverse operations. We show that it can be useful in ‘parsing out’ variations (confounds) that are ‘Lagrangian’ (displacement and intensity variations) by converting these to ‘Eulerian’ (intensity variations) in transform space. This conversion is the basis for our main result that describes when the CDT can allow for linear classification to be possible in transform space. We also describe several properties of the transform and show, with computational experiments that used both real and simulated data, that the CDT can help render a variety of real world problems simpler to solve.}
}

@article{bia1,
title = {A continuous linear optimal transport approach for pattern analysis in image datasets},
journal = {Pattern Recognition},
volume = {51},
pages = {453-462},
year = {2016},
issn = {0031-3203},
doi = {https://doi.org/10.1016/j.patcog.2015.09.019},
url = {https://www.sciencedirect.com/science/article/pii/S0031320315003507},
author = {Soheil Kolouri and Akif B. Tosun and John A. Ozolek and Gustavo K. Rohde},
keywords = {Optimal transport, Linear embedding, Generative image modeling, Pattern visualization},
abstract = {We present a new approach to facilitate the application of the optimal transport metric to pattern recognition on image databases. The method is based on a linearized version of the optimal transport metric, which provides a linear embedding for the images. Hence, it enables shape and appearance modeling using linear geometric analysis techniques in the embedded space. In contrast to previous work, we use Monge׳s formulation of the optimal transport problem, which allows for reasonably fast computation of the linearized optimal transport embedding for large images. We demonstrate the application of the method to recover and visualize meaningful variations in a supervised-learning setting on several image datasets, including chromatin distribution in the nuclei of cells, galaxy morphologies, facial expressions, and bird species identification. We show that the new approach allows for high-resolution construction of modes of variations and discrimination and can enhance classification accuracy in a variety of image discrimination problems.}
}

@article{bia2,
author = {Saurav Basu and Soheil Kolouri and Gustavo K. Rohde },
title = {Detecting and visualizing cell phenotype differences from microscopy images using transport-based morphometry},
journal = {Proceedings of the National Academy of Sciences},
volume = {111},
number = {9},
pages = {3448-3453},
year = {2014},
doi = {10.1073/pnas.1319779111},
URL = {https://www.pnas.org/doi/abs/10.1073/pnas.1319779111},
eprint = {https://www.pnas.org/doi/pdf/10.1073/pnas.1319779111},
abstract = {Much of what is currently known about how cells work has been derived through visual interpretation of microscopy images. Computational methods for image analysis have emerged as quantitative alternatives to visual interpretation. We describe an analysis pipeline for cell image databases that combines statistical pattern recognition with the mathematics of optimal mass transport. The approach is fully automated and does not require the use of ad hoc numerical features. It enables the identification of discriminant phenotypic variations, or biomarkers, between sets of cells (e.g., normal vs. diseased) while at the same time allowing for the visualization of meaningful differences. The approach can be used for fully automated high content screening with a variety of microscopic image modalities. Modern microscopic imaging devices are able to extract more information regarding the subcellular organization of different molecules and proteins than can be obtained by visual inspection. Predetermined numerical features (descriptors) often used to quantify cells extracted from these images have long been shown useful for discriminating cell populations (e.g., normal vs. diseased). Direct visual or biological interpretation of results obtained, however, is often not a trivial task. We describe an approach for detecting and visualizing phenotypic differences between classes of cells based on the theory of optimal mass transport. The method is completely automated, does not require the use of predefined numerical features, and at the same time allows for easily interpretable visualizations of the most significant differences. Using this method, we demonstrate that the distribution pattern of peripheral chromatin in the nuclei of cells extracted from liver and thyroid specimens is associated with malignancy. We also show the method can correctly recover biologically interpretable and statistically significant differences in translocation imaging assays in a completely automated fashion.}}

@BOOK{Galichon,
title = {Optimal Transport Methods in Economics},
author = {Galichon, Alfred},
year = {2016},
edition = {1},
publisher = {Princeton University Press},
abstract = {Optimal Transport Methods in Economics is the first textbook on the subject written especially for students and researchers in economics. Optimal transport theory is used widely to solve problems in mathematics and some areas of the sciences, but it can also be used to understand a range of problems in applied economics, such as the matching between job seekers and jobs, the determinants of real estate prices, and the formation of matrimonial unions. This is the first text to develop clear applications of optimal transport to economic modeling, statistics, and econometrics. It covers the basic results of the theory as well as their relations to linear programming, network flow problems, convex analysis, and computational geometry. Emphasizing computational methods, it also includes programming examples that provide details on implementation. Applications include discrete choice models, models of differential demand, and quantile-based statistical estimation methods, as well as asset pricing models. Authoritative and accessible, Optimal Transport Methods in Economics also features numerous exercises throughout that help you develop your mathematical agility, deepen your computational skills, and strengthen your economic intuition.},
keywords = {economics; transport; mathmatics; real; estate; job; employment; econometrics; geometry; computation; forcasting; programming; convex; analysis; discrete; choice; differential; demand; quantile; statistical; estimation; asset; pricing; algorithms},
url = {https://EconPapers.repec.org/RePEc:pup:pbooks:10870}
}

@article{SCHIEBINGER,
title = {Optimal-Transport Analysis of Single-Cell Gene Expression Identifies Developmental Trajectories in Reprogramming},
journal = {Cell},
volume = {176},
number = {4},
pages = {928-943.e22},
year = {2019},
issn = {0092-8674},
doi = {https://doi.org/10.1016/j.cell.2019.01.006},
url = {https://www.sciencedirect.com/science/article/pii/S009286741930039X},
author = {Geoffrey Schiebinger and Jian Shu and Marcin Tabaka and Brian Cleary and Vidya Subramanian and Aryeh Solomon and Joshua Gould and Siyan Liu and Stacie Lin and Peter Berube and Lia Lee and Jenny Chen and Justin Brumbaugh and Philippe Rigollet and Konrad Hochedlinger and Rudolf Jaenisch and Aviv Regev and Eric S. Lander},
keywords = {optimal-transport, reprogramming, scRNA-seq, trajectories, ancestors, descendants, development, regulation, paracrine interactions, iPSCs},
abstract = {Summary
Understanding the molecular programs that guide differentiation during development is a major challenge. Here, we introduce Waddington-OT, an approach for studying developmental time courses to infer ancestor-descendant fates and model the regulatory programs that underlie them. We apply the method to reconstruct the landscape of reprogramming from 315,000 single-cell RNA sequencing (scRNA-seq) profiles, collected at half-day intervals across 18 days. The results reveal a wider range of developmental programs than previously characterized. Cells gradually adopt either a terminal stromal state or a mesenchymal-to-epithelial transition state. The latter gives rise to populations related to pluripotent, extra-embryonic, and neural cells, with each harboring multiple finer subpopulations. The analysis predicts transcription factors and paracrine signals that affect fates and experiments validate that the TF Obox6 and the cytokine GDF9 enhance reprogramming efficiency. Our approach sheds light on the process and outcome of reprogramming and provides a framework applicable to diverse temporal processes in biology.}
}

@misc{MC1,
title={Learning Elastic Costs to Shape Monge Displacements}, 
author={Michal Klein and Aram-Alexandre Pooladian and Pierre Ablin and Eugène Ndiaye and Jonathan Niles-Weed and Marco Cuturi},
year={2024},
eprint={2306.11895},
archivePrefix={arXiv},
primaryClass={stat.ML},
url={https://arxiv.org/abs/2306.11895}, 
}

@article{SK1,
title = {Wasserstein task embedding for measuring task similarities},
journal = {Neural Networks},
volume = {181},
pages = {106796},
year = {2025},
issn = {0893-6080},
doi = {https://doi.org/10.1016/j.neunet.2024.106796},
url = {https://www.sciencedirect.com/science/article/pii/S0893608024007202},
author = {Xinran Liu and Yikun Bai and Yuzhe Lu and Andrea Soltoggio and Soheil Kolouri},
keywords = {Task embedding, Dataset similarity, Optimal transport, Continual learning, Transfer learning},
abstract = {Measuring similarities between different tasks is critical in a broad spectrum of machine learning problems, including transfer, multi-task, continual, and meta-learning. Most current approaches to measuring task similarities are architecture-dependent: (1) relying on pre-trained models, or (2) training networks on tasks and using forward transfer as a proxy for task similarity. In this paper, we leverage the optimal transport theory and define a novel task embedding for supervised classification that is model-agnostic, training-free, and capable of handling (partially) disjoint label sets. In short, given a dataset with ground-truth labels, we perform a label embedding through multi-dimensional scaling and concatenate dataset samples with their corresponding label embeddings. Then, we define the distance between two datasets as the 2-Wasserstein distance between their updated samples. Lastly, we leverage the 2-Wasserstein embedding framework to embed tasks into a vector space in which the Euclidean distance between the embedded points approximates the proposed 2-Wasserstein distance between tasks. We show that the proposed embedding leads to a significantly faster comparison of tasks compared to related approaches like the Optimal Transport Dataset Distance (OTDD). Furthermore, we demonstrate the effectiveness of our embedding through various numerical experiments and show statistically significant correlations between our proposed distance and the forward and backward transfer among tasks on a wide variety of image recognition datasets.}
}
\end{small}

\end{document}